\newtheorem{thm}{Theorem}[section]
\newtheorem{lemma}[thm]{Lemma}
\newtheorem{prop}[thm]{Proposition}
\newtheorem{cor}[thm]{Corollary}
\newtheorem{rem}[thm]{Remark}
\newtheorem{exam}[thm]{Example}
\newcommand{\R}{{\mathbb{R}}}
\newcommand{\Q}{{\mathbb{Q}}}
\newcommand{\Z}{{\mathbb{Z}}}
\newcommand{\N}{{\mathbb{N}}}
\newcommand{\C}{{\mathbb{C}}}
\newcommand{\La}{{\Lambda}}
\newcommand{\cL}{{\mathcal{L}}}
\newcommand{\cS}{{\mathcal{S}}}
\def\id{{1\hskip-2.5pt{\rm l}}}
\newcommand{\supp}{{\rm supp \,}}
\newcommand{\al}{{\alpha}}
\newcommand{\si}{{\sigma}}
\newcommand{\hb}{{\hbar}}
\newcommand{\om}{{\omega}}
\newcommand{\Ga}{{\Gamma}}
\newcommand{\Ci}{{\mathcal{C}}^{\infty}}
\newcommand{\Cl}{\mathcal{C}}
\newcommand{\op}{\operatorname}
\newcommand{\con}{\overline}
\newcommand{\bigo}{\mathcal{O}}
\newcommand{\Hilb}{\mathcal{H}}
\begin{document}

\title{Quantum speed limit vs. classical displacement energy}

\renewcommand{\thefootnote}{\alph{footnote}}

\author{\textsc Laurent Charles  and Leonid Polterovich$^{a}$ }

\footnotetext[1]{Partially supported by the Israel Science Foundation grant 178/13 and the European Research Council Advanced grant 338809.}

\date{\today}

\maketitle

\begin{abstract}
We discuss a  link between symplectic displacement energy, a fundamental notion of symplectic topology, and the  quantum speed limit, a universal constraint on the speed of quantum-mechanical processes. The link is provided by the quantum-classical correspondence formalized within the framework of the Berezin-Toeplitz quantization.
\end{abstract}

\tableofcontents

\section{Introduction and main results}

In the present paper we discuss a seemingly unnoticed link between symplectic displacement energy, a fundamental notion of symplectic topology introduced by Hofer \cite{Hofer} in 1990 and the quantum speed limit, a universal bound for energy required for, roughly speaking, ``orthogonalization" of a quantum state which was discovered by Mandelstam and Tamm \cite{MT} as early as in 1945 and refined by Margolus and Levitin \cite{ML} in 1998. The link is provided by the quantum-classical correspondence formalized within the framework of the Berezin-Toeplitz quantization.

\subsection{Setting the stage}
Given a finite-dimensional complex Hilbert space $\Hilb$, denote by
$\cL(\Hilb)$ the space of all Hermitian operators on $\Hilb$ and by $\cS( \Hilb ) \subset \cL(\Hilb)$ the subset
of all positive trace $1$ operators. In the matrix model of quantum mechanics, elements  $A \in \cL( \Hilb)$
and $\theta \in \cS( \Hilb )$ represent observables and (mixed) states of the quantum system, respectively.
The pure states, i.e. the ones given by rank $1$ projectors, will be identified with the unit vectors
$\xi \in \Hilb $ up to the  phase factor.

In what follows we deal with the measure of
overlap between a pair of quantum states called {\it fidelity}. Recall \cite[p.42]{Hayashi} that  the fidelity $\Phi(\theta,\sigma)$ for  $\theta,\sigma \in \cS(\Hilb)$ is defined as $\|\sqrt{\theta}\sqrt{\sigma}\|_{tr}$, where $\|\cdot\|_{tr}$ stands for the trace
norm. When $\theta=\xi$ and $\sigma=\eta$ are pure states, $\Phi= |\langle \xi, \eta \rangle|$.
Thus, orthogonal quantum states are maximally non-overlapping.

Fix a Planck constant $\hbar >0$. The Schr\"{o}dinger equation with the (time dependent) Hamiltonian $F_t \in \cL( \Hilb) $ is given by
\begin{equation}\label{eq-Schr}
\dot{U}(t) = -\frac{i}{\hbar}F_tU(t)\;,\end{equation}
where $U(t): \Hilb \to \Hilb$ is the unitary evolution with $U(0)=\id$. The Schr\"{o}dinger flow
acts on mixed quantum states by $\theta \to U(t) \theta U(t)^*$ and on pure quantum states by
$\xi \to U(t)\xi$. The total energy of the quantum evolution is given by $\ell_q(F)$,
where by definition
$$\ell_q(F):= \int_0^1 \|F_t\|_{op}dt\;.$$
Here $q$ stands for quantum.

We say that the Schr\"{o}dinger flow $U(t)$ generated by a (time-dependent) quantum Hamiltonian $F$ {\it $a$-dislocates} a state $\theta \in \cS(\Hilb)$, $a \in [0,1)$ if $$\Phi(\theta, U\theta U^{-1}) \leq a\;,$$
where $U$ stands for the time one map $U(1)$ of the flow. Dislocation of states appears in several problems of quantum mechanics and quantum computation. Remarkably,
there exists a universal lower bound for the energy required for dislocation which we call {\it the quantum speed limit}. Take any quantum state $\theta \in \cS(\Hilb)$ and fix $a \in [0,1)$. Then for every
Hamiltonian $F$ which $a$-dislocates $\theta$ one has
\begin{equation}\label{eq-qsl-univ}
\ell_q(F) \geq  \arccos(a)\hbar\;.
\end{equation}
This formula readily follows from the time-energy uncertainty going back to the foundational paper by Mandelstam and Tamm \cite{MT}, and which was proved in the full generality, i.e., for mixed states and  time-dependent Hamiltonians, by Uhlmann \cite{Uh}. It is also closely related to the work of Margolus and Levitin \cite{ML}; in fact the wording ``quantum speed limit" is borrowed from it. We refer to
the paper \cite{AH} by Andersson and Heydari for a comprehensive account, and to Section \ref{subsec-mis-qsl} for a proof of \eqref{eq-qsl-univ} and further comments.

In the present paper we explore semiclassical dislocation of semiclassical states.
Note that manipulating (not necessarily semiclassical) quantum states by applying semiclassical Hamiltonians is one of the themes considered within coherent quantum control theory, see e.g. \cite{Lloyd} and references therein. The problem we address is two-fold. First, what is a speed limit of such a dislocation?
One of our main findings is that {\it on the phase space scales exceeding
the wavelength scale, symplectic topology yields an improvement of inequality
\eqref{eq-qsl-univ}: the speed limit becomes more restrictive than the universal lower bound $\sim \hbar$.} Second, given a quantum state obtained by the quantization of a classical state, we wish to dislocate it by the Schr\"{o}dinger flow corresponding to a classical Hamiltonian
flow. How to design such a Hamiltonian flow? A brief answer, which will be elaborated below, is
that {\it in certain situations it suffices to consider the flows which displace the support of the classical
state in the phase space.}

For the sake of transparency, we start with the case when $(M^{2n},\omega)$ is a quantizable closed K\"{a}hler manifold equipped with the holomorphic line bundle $L$ whose curvature equals $i\omega$. The Berezin-Toeplitz
quantization of $M$ sends smooth real-valued functions $f$ on $M$ to Hermitian operators $T_\hbar(f)$ on the space $\Hilb_\hbar$ of global holomorphic sections of the tensor power $L^k$. Here $k$ is a sufficiently large possible integer and $\hbar=1/k$ is the Planck constant. The operators $T_\hbar(f)$ are given by
$T_\hbar(f):= \int_M f(x) R_\hbar(x) P_{x,\hbar}d\mu(x)$, where $P_{x,\hbar}$ are the rank one {\it coherent states projectors}, $R_\hbar$ is the {\it Rawnsley} function on $M$ and $d\mu$ is the symplectic volume.
For a classical state, i.e., a Borel probability measure $\tau$ on $M$,
define the corresponding quantum state
$$ Q_\hbar(\tau) = \int_M P_{x, \hbar} \; d\tau(x)\;.$$
The states of this form are sometimes called in the physics literature $P$-representable or classical quantum states \cite{GiBrBr}. Here, the adjective ``classical'' refers to the fact that these states are convex combinations of the projectors $P_{x,\hbar}$, and the coherent states are considered as the pure classical states. Nevertheless, these states  are not necessarily classical in the information theoretic sense \cite{FePa}.

The classical counterpart of the quantum dislocation is the notion of displacement.
Consider a Hamiltonian flow $\phi_t$ on $M$ generated by a compactly supported Hamiltonian $f_t$. Its time one map $\phi=\phi_1$ is called {\it a Hamiltonian diffeomorphism}.  {\it Hofer's length} of the path of diffeomorphisms $\{\phi_t\}$, $t \in [0,1]$,  is given
by
$$
\ell_{cl}(f) := \int_0^1\|f_t\|dt \;.
$$
A subset $X \subset M$ is called {\it displaceable} if there exists a Hamiltonian diffeomorphism
$\phi$ displacing $X$, i.e.,  $\phi(X) \cap X = \emptyset$. The {\it displacement energy} $e_M(X)$ is defined as $\inf \ell_{cl}(f)$, where the infimum is taken over all Hamiltonians whose time one map
displaces $X$. A fundamental result of symplectic topology  states that $e_M(X) >0$ for every displaceable open subset of $M$, see \cite{Hofer,LMc}. Let us mention that $e_M(\cdot)$ is monotone with respect
to inclusion of subsets, i.e.,
\begin{equation}\label{eq-dem}
e_M(X) \leq e_M(Y) \;\;\forall X \subset Y\;.
\end{equation}
Furthermore, for subsets $X$ of the symplectic linear space $\R^{2n}$ equipped with a translation invariant symplectic form, the displacement energy scales under dilations as the 2-dimensional area:
\begin{equation} \label{eq-des}
e_{\R^{2n}}(s\cdot X) = s^2e_{\R^{2n}}(X) \;\; \forall s >0\;,
\end{equation}
where
\begin{equation}\label{eq-dil}
s \cdot X := \{sx\;|\; x \in X\}\;.
\end{equation}
We refer to \cite{P-Rosen-book} for preliminaries on Hofer's geometry.

\subsection{Dislocation yields displacement} Now we are ready to formulate our first main result
which will be proved in Section \ref{sec:proof-theorem-1.2ii} below.

\medskip
\noindent
\begin{thm}\label{thm-intro-main-1} Let $\tau $ be a classical state on $M$ with
$d \tau = u d\mu$, where the density $u$ is of class $\Cl^3$.
Let $f_{t,\hbar}$ be a family of Hamiltonians which are uniformly (in $\hbar$) bounded
together with $4$ derivatives. Suppose that the  Schr\"{o}dinger flow generated by the quantized Hamiltonian $F_{t,\hbar} = T_\hbar(f_{t,\hbar})$  $o(\hbar^n)$-dislocates the state $Q_\hbar(\tau)$. Then for
every $\lambda>0$, when $\hbar$ is sufficiently small, the Hamiltonian flow generated by $f_{t,\hbar}$ displaces the subset $\{u > \lambda\}$ and
\begin{equation}
\label{eq-ellq-main}
\ell_q(F_{\hbar}) \geq e_M(\{u > \lambda\})+ \bigo(\hbar)\;.
\end{equation}
\end{thm}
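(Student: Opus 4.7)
The strategy is contradiction: assume the Hamiltonian diffeomorphism $\phi$ generated by $f_{t,\hbar}$ at time one does \emph{not} displace $X := \{u>\lambda\}$, and exhibit a lower bound on the fidelity $\Phi_\hbar := \Phi(Q_\hbar(\tau), U_\hbar Q_\hbar(\tau) U_\hbar^{-1})$ of the order of $\hbar^{n/2}$, contradicting the assumption $\Phi_\hbar = o(\hbar^n)$. Once displacement of $X$ is established, $\ell_{cl}(f_\hbar) \geq e_M(X)$ holds by definition of $e_M$, and the inequality \eqref{eq-ellq-main} follows from the standard Berezin--Toeplitz operator norm asymptotics $\|T_\hbar(f)\|_{op} = \|f\| + \bigo(\hbar)$, applied to each time slice and integrated (possibly after absorbing a time-dependent scalar into $F_{t,\hbar}$, which leaves both the quantum and classical evolutions invariant).

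The central reduction is to replace the fidelity by a Hilbert--Schmidt pairing. From the elementary inequality $\|C\|_{tr} \geq \|C\|_{HS}$ applied to $C = \sqrt{\theta_\hbar}\sqrt{\sigma_\hbar}$ one has $\Phi_\hbar^2 \geq \mathrm{tr}(\theta_\hbar \sigma_\hbar)$, where $\theta_\hbar = Q_\hbar(\tau)$ and $\sigma_\hbar = U_\hbar \theta_\hbar U_\hbar^{-1}$. Unfolding the integral representation of $Q_\hbar$,
\[
\mathrm{tr}(\theta_\hbar \sigma_\hbar) \;=\; \int_{M\times M} u(x)\,u(y)\,\mathrm{tr}\bigl(P_{x,\hbar}\,U_\hbar P_{y,\hbar} U_\hbar^{-1}\bigr)\,d\mu(x)\,d\mu(y).
\]
The pivotal input is a Berezin--Toeplitz Egorov-type estimate for coherent states: under the uniform $C^4$ control on the family $f_{t,\hbar}$, the projector $U_\hbar P_{y,\hbar} U_\hbar^{-1}$ is sufficiently close to $P_{\phi(y),\hbar}$ that, after the symplectic change of variable $z = \phi(y)$ and writing $v := u\circ \phi^{-1}$,
\[
\mathrm{tr}(\theta_\hbar \sigma_\hbar) \;\geq\; \int_{M\times M} u(x)\,v(z)\,B_\hbar(x,z)\,d\mu(x)\,d\mu(z) \;-\; o(\hbar^n),
\]
where $B_\hbar(x,z) := \mathrm{tr}(P_{x,\hbar} P_{z,\hbar}) \geq 0$.

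It remains to bound the above integral below by $c\hbar^n$ for some $c>0$, using two standard properties of $B_\hbar$ rooted in the resolution of identity $\int_M R_\hbar(x) P_{x,\hbar}\,d\mu(x) = \id$ with $R_\hbar \sim \hbar^{-n}$: (i) $B_\hbar$ concentrates on the diagonal at length scale $\sqrt{\hbar}$ with exponential off-diagonal decay; (ii) $\int_M B_\hbar(x,z)\,d\mu(z) = \hbar^n + o(\hbar^n)$ uniformly in $x$. Since by assumption $\phi(X)\cap X \neq \emptyset$, continuity furnishes a nonempty open set $V \subset X \cap \phi(X)$ on which both $u > \lambda$ and $v > \lambda$; restricting the double integral to $V\times V$ and invoking (i)--(ii) yields the lower bound $c\lambda^2 \mu(V)\hbar^n$ with $c > 0$ independent of $\hbar$. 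Combined with the previous step, $\Phi_\hbar^2 \geq \tfrac{1}{2} c\lambda^2 \mu(V) \hbar^n$ for small $\hbar$, which is incompatible with $\Phi_\hbar = o(\hbar^n)$.

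The main technical obstacle is the coherent-state Egorov reduction: the \emph{integrated} error must be genuinely $o(\hbar^n)$, rather than the $\bigo(\hbar)$ that a crude operator-norm Egorov would furnish. This is where the regularity hypotheses enter essentially---the uniform $C^4$-boundedness of $f_{t,\hbar}$ secures constants in the semiclassical propagation of coherent states that are independent of $\hbar$, and the $C^3$ regularity of $u$ permits pairing the density against the leading-order semiclassical symbol of the kernel $B_\hbar$ without destroying the leading $\hbar^n$ behaviour.
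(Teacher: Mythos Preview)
Your route differs substantially from the paper's. The paper does not argue by contradiction and never computes $\mathrm{tr}(\theta_\hbar\sigma_\hbar)$. Instead it introduces an operator-norm cousin of fidelity, $\Gamma_q(\theta,\sigma)=\|\theta\sigma\|_{op}/(\|\theta\|_{op}\|\sigma\|_{op})$, proves the elementary bound $\Gamma_q\le\Phi(\theta,\sigma)/\|\theta\|_{op}$, and then compares $\Gamma_q$ with the classical overlap $\Gamma_{cl}(g,g\circ\phi^{-1})=\|g\cdot g\circ\phi^{-1}\|/\|g\|^2$ using only the operator-norm Egorov theorem, the G\"arding estimate (P1), and quasi-multiplicativity (P3). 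The observation that drives this reduction is that $Q_\hbar(\tau)=T_\hbar(u/R_\hbar)$ is itself a Toeplitz operator, so the whole argument stays inside Toeplitz operator calculus and no coherent-state propagation is ever invoked. This operator-norm framework is also what the paper reuses for the small-scale results later on.

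Your outline can be completed, but the step you flag as ``the main technical obstacle'' is a genuine gap as written. Approximating $U_\hbar P_{y,\hbar}U_\hbar^{-1}$ by $P_{\phi(y),\hbar}$ with integrated error $o(\hbar^n)$ is not innocent: the pointwise discrepancy between these rank-one projectors is typically only $\bigo(\sqrt{\hbar})$ (a propagated coherent state acquires squeezing and subprincipal phases), and controlling the double integral to order $o(\hbar^n)$ requires either the full Schwartz-kernel description of the quantum propagator or an argument you have not supplied. The clean bypass is the same Toeplitz observation: with $v_\hbar=u/R_\hbar=\bigo(\hbar^n)$ in $\Cl^3$, operator-norm Egorov gives $\sigma_\hbar=T_\hbar(v_\hbar\circ\phi^{-1})+\bigo(\hbar^{n+1})$, (P3) gives $\theta_\hbar\sigma_\hbar=T_\hbar(v_\hbar\cdot v_\hbar\circ\phi^{-1})+\bigo(\hbar^{2n+1})$, and the trace correspondence yields $\mathrm{tr}(\theta_\hbar\sigma_\hbar)=(2\pi\hbar)^n\int_M u\cdot(u\circ\phi^{-1})\,d\mu+\bigo(\hbar^{n+1})$. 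This is precisely the computation carried out in the paper's proof of Theorem~\ref{thm-main-31}; from there your contradiction closes immediately, since non-displacement of $\{u>\lambda\}$ forces $\int_M u\cdot(u\circ\phi^{-1})\,d\mu>0$.
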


\medskip
\noindent
In particular, the speed limit for the semiclassical dislocation is $\sim 1$ which
is more restrictive than the universal quantum speed limit $\sim \hbar$. Let us mention also
that by \eqref{eq-dem} $e_M(\{u > \lambda\})$ is a non-increasing function of $\lambda$.

In Section \ref{sec-ddsc} below we zoom into small scales exceeding the quantum length scale $\sqrt{\hbar}$.
In particular, we extend Theorem \ref{thm-intro-main-1} to dislocation of semiclassical states which, roughly speaking, occupy a ball of radius  $\hbar^\varepsilon$, $ \varepsilon \in [0,1/2)$ in the phase space. We  show that the speed limit on such a scale is of the order $\sim \hbar^{2\varepsilon}$ which, again, is more restrictive than the universal quantum speed limit.

\begin{rem} {\rm
Let us mention that working on small scales is technically challenging: we use sharp remainder bounds for the Berezin-Toeplitz quantization elaborated in \cite{our_paper} and the arXiv version of \cite{oim_symp}.  These bounds, which
involve (higher) derivatives of classical observables and the Planck constant $\hbar$,  are optimized in such a way that they behave in a friendly manner under rescaling of
classical observables. We refer to Section \ref{sec:semiclassical-limit} and formula (\ref{eq:Expansion_complement}) of section \ref{sec-ddsc}. }
\end{rem}

As an immediate consequence of Theorem \ref{thm-intro-main-1}, we get the following symplectic obstruction to $o(\hbar^n)$-dislocation:

\medskip
\noindent
\begin{cor}\label{cor-nodis} If the set $\{u > \lambda\}$ is non-displaceable for some $\lambda >0$, the quantum state $Q_\hbar(ud\mu)$ cannot be $o(\hbar^n)$-dislocated by a semiclassical Hamiltonian
of the form $T_\hbar(f_t)$.
\end{cor}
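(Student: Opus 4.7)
The plan is to argue by contrapositive and reduce the statement directly to Theorem \ref{thm-intro-main-1}. There is essentially no separate content here: the corollary is just a logical repackaging of the theorem, so my proof will be a three-line deduction wrapped around a verification that the theorem's hypotheses apply.

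First I would assume, for contradiction, that the quantum state $Q_\hbar(u d\mu)$ \emph{can} be $o(\hbar^n)$-dislocated by some semiclassical Hamiltonian of the form $F_{t,\hbar} = T_\hbar(f_t)$. Since the generator $f_t$ in the corollary is $\hbar$-independent and $(M,\omega)$ is closed, the family $f_{t,\hbar} := f_t$ is automatically uniformly bounded together with all its derivatives, so the $\mathcal{C}^4$-bound required by Theorem \ref{thm-intro-main-1} holds trivially. The density $u$ is assumed smooth enough (in particular $\mathcal{C}^3$), so all hypotheses of Theorem \ref{thm-intro-main-1} are in place for the prescribed $\lambda > 0$.

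Next I would invoke Theorem \ref{thm-intro-main-1} with this $\lambda$. Its conclusion asserts that, for all $\hbar$ sufficiently small, the \emph{classical} Hamiltonian flow generated by $f_t$ displaces $\{u > \lambda\}$. Picking any single such $\hbar$ already produces a Hamiltonian diffeomorphism displacing $\{u > \lambda\}$, which contradicts the assumed non-displaceability. (The quantitative lower bound \eqref{eq-ellq-main} is irrelevant here; only the qualitative displacement statement in the conclusion of Theorem \ref{thm-intro-main-1} is used.)

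The main, and really the only, point to be careful about is bookkeeping of the $\hbar$-dependence: the Schrödinger flow, the dislocation condition, and the constants in Theorem \ref{thm-intro-main-1} all involve $\hbar$, whereas the classical displaceability statement does not. Once one notices that the theorem delivers a classical conclusion for \emph{every} sufficiently small $\hbar$, the contradiction with non-displaceability is immediate and no further analysis is required.
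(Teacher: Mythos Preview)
Your argument is correct and is exactly the paper's approach: the corollary is stated there as an immediate consequence of Theorem~\ref{thm-intro-main-1}, with no separate proof given. Your contrapositive reduction and the observation that an $\hbar$-independent $f_t$ trivially satisfies the uniform $\Cl^4$ bound are precisely the intended one-line deduction.
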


\begin{exam}\label{exam-CP}{\rm Let $M= \C P^n$ be the complex projective plane equipped with the Fubini-Study
symplectic form whose integral over a complex line equals $2\pi$.  Assume that $\{u >0\}$ contains a closed symplectically embedded ball $B$ whose volume is $\geq \text{Vol}(\C P^n)/2^n$. Then $B$ is non-displaceable
by Gromov's packing theorem \cite{Gromov}, and hence the quantum state $Q_\hbar(ud\mu)$ cannot be $o(\hbar^n)$-dislocated by a semiclassical Hamiltonian of the form $T_\hbar(f_t)$. As a counterpoint,
we shall show below that it can be $\epsilon$-dislocated by such a Hamiltonian with an arbitrary $\epsilon >0$ provided the support of $u$ has volume $< \text{Vol}(\C P^n)/2$.}
\end{exam}

\subsection{Displacement yields dislocation}\label{subsec-dyd}
For any  family $\theta = (\theta_{\hbar}  \in S( \Hilb_{\hbar}))_{\hbar}$ of quantum states,  consider   Borel probability measures $(\nu_{\hbar})$ on $M$ defined by $\int_M f d \nu_{\hbar} = \op{tr} ( T_{\hbar} (f) \theta_{\hbar})$. These measures, which will be called {\it Husimi measures}, govern the phase space distribution of the quantum states. Their limiting
behaviour as $\hbar \to 0$ have been attracting a lot of attention in semiclassical analysis.
An important characteristic of this behaviour is {\it the microsupport} of $\theta$. It can be defined similarly as the semiclassical wavefront set by using Toeplitz operators instead of $\hbar$-pseudodifferential operators. Equivalently, it is the subset $\op{MS} ( \theta)$ of $M$ such that $$ x \notin \op{MS} ( \theta) \Leftrightarrow \text{ $x$ has a neighborhood $U$ such that $\nu_{\hbar} ( U) = \bigo ( \hbar^{\infty})$ }\;.$$ Here and below we say that a sequence $a_\hbar \in \R$ is $\bigo(\hbar^\infty)$ if for every $N \in \N$ \footnote{We denote by by $\N$ the set of non negative integers and let $\N^{*} = \N \setminus \{ 0 \}$.} there exists
a constant $C_N$ such that $|a_\hbar| \leq C_N\hbar^N$ for all sufficiently small $\hbar$. Thus, the sequence
of measures $\nu_\hbar$ ``rapidly dissipates" outside the microsupport.

\begin{exam}\label{exam-MS}{\rm  For a classical state (i.e., a measure) $\tau$ on $M$ , the microsupport
of its quantum counterpart coincides with its support: $\op{MS}(Q_\hbar(\tau)) = \text{supp}\; \tau$,
see Section \ref{subsec-mscm} below.}
\end{exam}

\noindent The next result establishes a link between the notion of microsupport
and fidelity.

\begin{prop}\label{prop-MS-fidelity} Let $\theta= (\theta_\hbar)$ and $\sigma= (\sigma_\hbar)$
be two families of quantum states with disjoint microsupports: $\op{MS}(\theta)\cap \op{MS}(\sigma)=\emptyset$. Then $\Phi(\theta_\hbar,\sigma_\hbar)= \bigo(\hbar^\infty)$.
\end{prop}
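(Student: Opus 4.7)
The idea is to separate $\op{MS}(\theta)$ and $\op{MS}(\sigma)$ by a smooth cutoff and to reduce the fidelity bound to a pair of Hilbert--Schmidt estimates. Since these two microsupports are disjoint closed subsets of the compact manifold $M$, I would pick $\chi \in \Ci(M,[0,1])$ that equals $1$ on a neighborhood of $\op{MS}(\sigma)$ and vanishes on a neighborhood of $\op{MS}(\theta)$, and exploit the Berezin--Toeplitz identity $T_\hbar(\chi) + T_\hbar(1-\chi) = T_\hbar(1) = \id$. Inserting this partition between $\sqrt{\theta_\hbar}$ and $\sqrt{\sigma_\hbar}$, taking trace norms, and using the Cauchy--Schwarz inequality $\|AB\|_{\op{tr}} \leq \|A\|_{\op{HS}}\|B\|_{\op{HS}}$ together with $\|\sqrt{\theta_\hbar}\|_{\op{HS}} = \|\sqrt{\sigma_\hbar}\|_{\op{HS}} = 1$ yields
\[
\Phi(\theta_\hbar,\sigma_\hbar) \leq \|\sqrt{\theta_\hbar}\,T_\hbar(\chi)\|_{\op{HS}} + \|T_\hbar(1-\chi)\,\sqrt{\sigma_\hbar}\|_{\op{HS}}.
\]

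The task thus reduces to the following key estimate: \emph{if $f \in \Ci(M,\R)$ vanishes on a neighborhood of $\op{MS}(\theta)$, then $\op{tr}\bigl(T_\hbar(f)^2\,\theta_\hbar\bigr) = \bigo(\hbar^\infty)$.} To establish this, I would use the full Berezin--Toeplitz product expansion: for every $N \in \N^*$,
\[
T_\hbar(f)^2 = \sum_{k=0}^{N-1}\hbar^k T_\hbar\bigl(C_k(f,f)\bigr) + \bigo(\hbar^N)
\]
in operator norm, where the $C_k$ are bidifferential, so each symbol $C_k(f,f)$ is supported in $\supp(f) \subset M \setminus \op{MS}(\theta)$. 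Taking the trace against $\theta_\hbar$ rewrites each term as the integral of $C_k(f,f)$ against the Husimi measure $\nu_\hbar$ of $\theta_\hbar$. By compactness, $\supp(f)$ is covered by finitely many open sets on which $\nu_\hbar$ has mass $\bigo(\hbar^\infty)$, so each such integral is $\bigo(\hbar^\infty)$, while the remainder contributes $\bigo(\hbar^N)$. As $N$ is arbitrary, $\op{tr}(T_\hbar(f)^2\,\theta_\hbar) = \bigo(\hbar^\infty)$.

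Applying the key estimate with $(f,\theta_\hbar)$ replaced first by $(\chi,\theta_\hbar)$, where $\chi$ vanishes near $\op{MS}(\theta)$, and then by $(1-\chi,\sigma_\hbar)$, where $1-\chi$ vanishes near $\op{MS}(\sigma)$, makes both summands on the right-hand side $\bigo(\hbar^\infty)$ and finishes the proof. The main subtle point is the support property of the symbols $C_k(f,f)$ in the product expansion, which reflects the microlocal nature of the Berezin--Toeplitz calculus; this is built into the sharp remainder bounds referred to in the paper.
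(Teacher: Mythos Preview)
Your argument is correct. The route differs from the paper's in two places. First, the paper does not insert a partition of unity between the square roots; instead it proves directly that $\theta_\hbar\sigma_\hbar=\bigo(\hbar^\infty)$ in operator norm by choosing cutoffs $f,g$ equal to $1$ near $\op{MS}(\theta)$, $\op{MS}(\sigma)$ respectively with disjoint supports, using Lemma~\ref{lem:cut} to write $\theta_\hbar=T_\hbar(f)\theta_\hbar+\bigo(\hbar^\infty)$ and similarly for $\sigma_\hbar$, and then invoking $T_\hbar(\bar f)T_\hbar(g)=\bigo(\hbar^\infty)$ from~(E). It then passes to fidelity via the crude bound $\Phi(\theta,\sigma)\leq \dim\Hilb_\hbar\cdot\sqrt{\|\theta\sigma\|_{op}}$, absorbing the polynomial factor $\dim\Hilb_\hbar=\bigo(\hbar^{-n})$ into $\bigo(\hbar^\infty)$. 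Your Hilbert--Schmidt splitting is arguably tidier since it avoids this dimension factor altogether. Second, your ``key estimate'' $\op{tr}(T_\hbar(f)^2\theta_\hbar)=\bigo(\hbar^\infty)$ is obtained in the paper as an immediate consequence of Lemma~\ref{lem:cut} (which gives $T_\hbar(f)\theta_\hbar=\bigo(\hbar^\infty)$ in every Schatten norm once $\supp f\cap\op{MS}(\theta)=\emptyset$), so you do not actually need to expand $T_\hbar(f)^2$ and integrate each $C_k(f,f)$ against the Husimi measure; that detour works but is more than necessary.
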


\noindent
Another basic fact is that the microsupport propagates along the Hamiltonian flow:
Let $f_t$ be a Hamiltonian on $M$, and let $F_{t,\hbar} = T_\hbar(f_t)$ be the corresponding quantum
Hamiltonian. Write $\phi$ for the time one map of the Hamiltonian flow of $f_t$ and
$U_\hbar$ for the time one map of the Schr\"{o}dinger evolution of $F_{t,\hbar}$. For any sequence
of quantum states $\theta= (\theta_\hbar)$ put $\theta':= (U_\hbar \theta_\hbar U_\hbar^{-1})$.
Then
\begin{equation} \label{eq-propag-MS} \op{MS}(\theta')= \phi(\op{MS}(\theta))\;.
\end{equation}
The proofs of Proposition \ref{prop-MS-fidelity} and equation \eqref{eq-propag-MS} will be given in Section \ref{subsec-dp}. Combining these two results, we get
that displacement of the microsupport yields dislocation.

\medskip
\noindent
\begin{thm}\label{thm-former-main-i} Let $\theta = (\theta_{\hbar}  \in S( \Hilb_{\hbar}))_{\hbar}$ and $(f_t)$ be a time-dependent Hamiltonian of $M$. Assume that the time-one-map of the Hamiltonian flow generated by $f_t$ displaces the microsupport of $\theta$. Then the time-one-map of  the Schr\"{o}dinger flow generated by  $ T_\hbar(f_{t})$ $\bigo(\hbar^\infty)$-dislocates the state $\theta$.
\end{thm}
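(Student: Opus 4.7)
The plan is to observe that this theorem is essentially an immediate consequence of the two tools stated just above it: the propagation formula \eqref{eq-propag-MS} and Proposition \ref{prop-MS-fidelity}. The task amounts to assembling them correctly.

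First I would set up the notation exactly as in the propagation paragraph. Let $\phi$ denote the time-one map of the Hamiltonian flow of $f_t$, let $U_\hbar$ denote the time-one map of the Schr\"odinger flow generated by $F_{t,\hbar} = T_\hbar(f_t)$, and set
\[
\theta'_\hbar := U_\hbar \theta_\hbar U_\hbar^{-1}, \qquad \theta' := (\theta'_\hbar)_\hbar.
\]
The quantity to estimate is $\Phi(\theta_\hbar, \theta'_\hbar)$, and showing that it is $\bigo(\hbar^\infty)$ is precisely the definition of $\bigo(\hbar^\infty)$-dislocation (note that for $\hbar$ small enough it lies in $[0,1)$, so the notion of $a$-dislocation is meaningful).

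Next I apply the propagation formula \eqref{eq-propag-MS} to the evolved family, which gives
\[
\op{MS}(\theta') \;=\; \phi\bigl( \op{MS}(\theta) \bigr).
\]
The displacement hypothesis says exactly that $\phi( \op{MS}(\theta) ) \cap \op{MS}(\theta) = \emptyset$, so combining these two facts we conclude that $\theta$ and $\theta'$ have disjoint microsupports.

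With disjointness of the microsupports in hand, Proposition \ref{prop-MS-fidelity} applied to the pair $(\theta, \theta')$ yields
\[
\Phi(\theta_\hbar, \theta'_\hbar) \;=\; \bigo(\hbar^\infty),
\]
which is the desired conclusion. There is essentially no genuine obstacle here once the two enabling results are granted; the only point requiring a moment of care is the verification that the time-evolved family $\theta'$ is still a family of quantum states to which the definitions of microsupport and of Proposition \ref{prop-MS-fidelity} apply, but this is automatic since $U_\hbar$ is unitary and conjugation by a unitary preserves $\cS(\Hilb_\hbar)$. The real content of the theorem is therefore hidden in the two results it uses: the propagation of microsupports under Toeplitz-generated Schr\"odinger flows, and the fidelity decay for states with disjoint microsupports — both of which are proved separately in Section \ref{subsec-dp}.
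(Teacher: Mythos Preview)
Your proposal is correct and matches the paper's approach exactly: the paper itself presents Theorem \ref{thm-former-main-i} as an immediate combination of the propagation formula \eqref{eq-propag-MS} and Proposition \ref{prop-MS-fidelity}, with no additional argument beyond what you wrote.
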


\medskip
\noindent In view of Example \ref{exam-MS} we get the following corollary.

\medskip
\noindent
\begin{cor}\label{thm-intro-main-1-i} Let $\tau $ be a classical state on $M$.
Assume that the time-one-map of the Hamiltonian flow generated by a Hamiltonian $f_t$ displaces the support of $\tau$. Then  the Schr\"{o}dinger flow generated by the quantized Hamiltonian $ T_\hbar(f_{t})$ $\bigo(\hbar^\infty)$-dislocates the state $Q_\hbar(\tau)$.
\end{cor}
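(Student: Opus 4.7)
The corollary is a direct consequence of Theorem \ref{thm-former-main-i} combined with the microsupport identification recorded in Example \ref{exam-MS}. The plan is simply to set $\theta_\hbar := Q_\hbar(\tau)$ and verify that the hypothesis of Theorem \ref{thm-former-main-i} holds for this family.

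First I would invoke Example \ref{exam-MS} to identify $\op{MS}(\theta) = \text{supp}\,\tau$. By assumption the time-one-map $\phi$ of the Hamiltonian flow generated by $f_t$ satisfies $\phi(\text{supp}\,\tau) \cap \text{supp}\,\tau = \emptyset$, so in particular $\phi$ displaces $\op{MS}(\theta)$. At this point the hypothesis of Theorem \ref{thm-former-main-i} is satisfied verbatim, and applying that theorem yields $\bigo(\hbar^\infty)$-dislocation of $\theta$ by the Schrödinger flow generated by $T_\hbar(f_t)$, which is the desired conclusion.

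Since the proof of Theorem \ref{thm-former-main-i} itself proceeds by combining the propagation law \eqref{eq-propag-MS} (to see that the microsupport of the evolved state $\theta'_\hbar = U_\hbar \theta_\hbar U_\hbar^{-1}$ is $\phi(\op{MS}(\theta))$) with Proposition \ref{prop-MS-fidelity} (to get $\Phi(\theta_\hbar,\theta'_\hbar) = \bigo(\hbar^\infty)$ from disjointness of microsupports), there is no additional obstacle to overcome here: the only content specific to the corollary is the computation of the microsupport of $Q_\hbar(\tau)$, which is already quoted from Example \ref{exam-MS}. If anything requires a moment of care, it is checking that the ``pure'' pointwise displacement of $\text{supp}\,\tau$ is already enough (no neighborhood enlargement is needed), but this is automatic because Proposition \ref{prop-MS-fidelity} applies to the abstract closed sets $\op{MS}(\theta)$ and $\op{MS}(\theta')$ themselves.
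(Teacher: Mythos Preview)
Your proposal is correct and matches the paper's own argument exactly: the paper states the corollary immediately after Theorem \ref{thm-former-main-i} with the single sentence ``In view of Example \ref{exam-MS} we get the following corollary,'' which is precisely your reduction via $\op{MS}(Q_\hbar(\tau)) = \text{supp}\,\tau$. Your additional remarks about the internal structure of the proof of Theorem \ref{thm-former-main-i} are accurate but not needed for the corollary itself.
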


\medskip
\noindent
In particular, the class of Hamiltonians appearing in Theorem \ref{thm-intro-main-1} above
is non-empty provided the support of $\tau$ is displaceable.

\subsection{Symplectic rigidity vs. flexibility on the quantum side}
Exploration of the borderline  between symplectic rigidity and symplectic flexibility is one
of central themes in symplectic topology. For instance, existence
of non-trivial non-displaceable subsets, i.e., subsets which cannot be displaced by a Hamiltonian flow but can be displaced by a volume-preserving flow,  is a manifestation of symplectic rigidity which has been intensively studied since the 1980-ies, and which plays an important role in our discussion.
Think for instance, about the symplectically embedded ball $B \subset \C P^n$ from
Example \ref{exam-CP} above: it has the above-mentioned features provided its volume satisfies
$$  \text{Vol}(\C P^n)/2^n < \text{Vol}(B) < \text{Vol}(\C P^n)/2\;.$$

This phenomenon becomes especially surprising if one confronts it with the following statement
which reflects symplectic flexibility: any open subset $U$ of a symplectic manifold $M$ occupying
less than half of its volume can be displaced by a Hamiltonian flow {\it up to an arbitrary small
measure}: for every $\epsilon >0$ there exists a flow $\phi_t$ with $\text{Vol}(\phi_1U \cap U ) < \epsilon$. Furthermore, while the energy $\ell_{cl}$ required to displace an open subset of a symplectic manifold is bounded away from zero (this is another facet of symplectic rigidity), the displacement up to a small measure can be performed with an arbitrary small energy. The matrix inequality $\|\cdot\|_{op} \leq \|\cdot\|_{tr}$, which has no classical analogue, enables one to use this flexibility phenomenon for producing meaningful dislocation regimes
of semiclassical quantum states with small values of quantum energy $\ell_q$.

\medskip
\noindent
\begin{thm}\label{thm-main-31}
Let $d\tau =ud\mu$ be a classical state on $M$. Suppose that the volume of the support of $u$ is $< \text{Vol}(M)/2$. Then for every
$\epsilon >0, \delta >0$ the state $\theta_\hbar:= Q_\hbar(\tau)$ can be $\epsilon$-dislocated with
$\ell_q \leq \delta$ by the Schr\"{o}dinger flow generated by $T_\hbar(f_t)$ for an appropriately
chosen $\hbar$-independent Hamiltonian $f_t$, for all $\hbar$ small enough.
\end{thm}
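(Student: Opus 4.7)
The plan is to reduce the theorem to the classical ``displacement up to arbitrarily small measure with arbitrarily small Hofer length'' phenomenon described just before the statement, combined with Proposition~\ref{prop-MS-fidelity}, the propagation rule \eqref{eq-propag-MS}, and a two-piece decomposition of the classical state. Fix $\epsilon,\delta>0$ and set $\eta' := \epsilon^2/8$. By that classical flexibility one can, for any $\eta>0$ to be chosen below, pick an $\hbar$-independent time-dependent Hamiltonian $f_t$ with $\ell_{cl}(f) < \delta$ whose time-one map $\phi$ satisfies $\operatorname{Vol}(\phi(\supp u)\cap \supp u) < \eta$. Since $f\mapsto T_\hbar(f)$ is obtained by integrating against a positive operator-valued measure normalized to the identity, one has the pointwise estimate $\|T_\hbar(f_t)\|_{op}\le \|f_t\|_\infty$ and hence $\ell_q(T_\hbar(f))\le \ell_{cl}(f) < \delta$ for every $\hbar$. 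The energy constraint is therefore automatic, and only the fidelity remains to be controlled.

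Set $S := \supp u\cap \phi^{-1}(\supp u)$, split $\tau$ as $d\tau_1 := u\,\chi_S\,d\mu$ and $d\tau_2 := u\,\chi_{\supp u\setminus S}\,d\mu$, and observe that $\phi(\supp \tau_2)\cap\supp u = \emptyset$ by construction. Since $u\in L^1(d\mu)$, absolute continuity of the integral lets us shrink $\eta$ enough that $|\tau_1| \le \eta'$. The state $\theta_\hbar = Q_\hbar(\tau)$ decomposes as $Q_\hbar(\tau_1) + Q_\hbar(\tau_2)$, and after normalization $\tilde\theta_\hbar := Q_\hbar(\tau_2)/|\tau_2|$ is a genuine state. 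By Example~\ref{exam-MS} its microsupport is $\supp\tau_2$, and by \eqref{eq-propag-MS} the Schr\"odinger-evolved state $U_\hbar \tilde\theta_\hbar U_\hbar^{-1}$, where $U_\hbar$ denotes the time-one map of the Schr\"odinger flow generated by $T_\hbar(f_t)$, has microsupport $\phi(\supp\tau_2)$, disjoint from $\supp\tau_2$. Proposition~\ref{prop-MS-fidelity} therefore yields $\Phi(\tilde\theta_\hbar,U_\hbar\tilde\theta_\hbar U_\hbar^{-1}) = \bigo(\hbar^\infty)$.

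To conclude, invoke the Fuchs--van de Graaf inequalities $2(1-\Phi(\theta,\sigma)) \le \|\theta-\sigma\|_{tr} \le 2\sqrt{1-\Phi(\theta,\sigma)^2}$. The first, applied to $(\tilde\theta_\hbar, U_\hbar\tilde\theta_\hbar U_\hbar^{-1})$ and then rescaled by $|\tau_2|$, yields $\|Q_\hbar(\tau_2)-U_\hbar Q_\hbar(\tau_2) U_\hbar^{-1}\|_{tr}\ge 2|\tau_2|(1-\bigo(\hbar^\infty))$, whereas unitary invariance of $\|\cdot\|_{tr}$ and the triangle inequality give $\|Q_\hbar(\tau_1) - U_\hbar Q_\hbar(\tau_1) U_\hbar^{-1}\|_{tr}\le 2|\tau_1|$. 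A final triangle inequality on the decomposition $\theta_\hbar = Q_\hbar(\tau_1)+Q_\hbar(\tau_2)$ produces
\[
\|\theta_\hbar - U_\hbar \theta_\hbar U_\hbar^{-1}\|_{tr}\ge 2|\tau_2| - 2|\tau_1| - \bigo(\hbar^\infty) = 2 - 4|\tau_1| - \bigo(\hbar^\infty),
\]
and the second Fuchs--van de Graaf inequality turns this into
\[
\Phi(\theta_\hbar,U_\hbar\theta_\hbar U_\hbar^{-1})^2 \le 4|\tau_1| + \bigo(\hbar^\infty) \le \tfrac{\epsilon^2}{2} + \bigo(\hbar^\infty),
\]
which is $<\epsilon^2$ once $\hbar$ is small enough, as required.

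The step I expect to require the most care is not the quantum calculation above but the invocation of the classical flexibility lemma in the Hamiltonian (not merely volume-preserving) category; this is where the volume threshold $\operatorname{Vol}(\supp u) < \operatorname{Vol}(M)/2$ enters. If a clean reference is not available, a direct construction should be feasible by partitioning $\supp u$ into finitely many small symplectic balls and sliding them past each other along disjoint Hamiltonian isotopies of controlled Hofer length, with the residual intersection bounded by the combined measure of narrow tubular neighborhoods of the paths.
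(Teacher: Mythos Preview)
Your approach is sound and genuinely different from the paper's. Both routes begin with the same classical flexibility lemma (the paper's Lemma~\ref{lem-folk}, whose proof occupies most of Section~\ref{subsec-thm-main-31-pf} and matches the sketch you give at the end). From there the paper proceeds by a direct semiclassical trace computation: it writes $\theta_\hbar = T_\hbar(u/R_\hbar)$, applies Egorov's theorem \eqref{eq-Egorov} and quasi-multiplicativity (\ref{eq:P3}) to get $\theta_\hbar\sigma_\hbar = (2\pi\hbar)^{2n}\bigl(T_\hbar(u\cdot u\circ\phi^{-1}) + \bigo(\hbar)\bigr)$, and then feeds the trace correspondence \eqref{eq-tr-cor} into the crude bound $\Phi(\theta,\sigma)\le\sqrt{\dim\Hilb_\hbar\cdot\operatorname{trace}(\theta\sigma)}$. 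Your route through the two-piece decomposition, microsupport disjointness, and Fuchs--van de Graaf is more structural and has the advantage of placing no regularity hypothesis on $u$; the paper's Egorov step implicitly needs $u$ to be at least $\Cl^3$.

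One technical point does need repair. With $S := \supp u \cap \phi^{-1}(\supp u)$ closed, the set $\supp u\setminus S$ is not closed, and $\supp\tau_2$ may pick up boundary points of $S$; at such points $\phi$ lands back in $\supp u$, so the microsupports of $\tilde\theta_\hbar$ and $U_\hbar\tilde\theta_\hbar U_\hbar^{-1}$ can touch and Proposition~\ref{prop-MS-fidelity} does not directly apply. The fix is routine: apply the flexibility lemma to an open neighborhood $W\supset\supp u$ with $\mu(W)<\mu(M)/2$ and take $S$ to be the \emph{open} set $W\cap\phi^{-1}(W)$; then $\supp u\setminus S$ is closed and $\phi(\supp u\setminus S)\cap W=\emptyset$, so the microsupports are genuinely disjoint.
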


\medskip
\noindent
The proof is given in Section \ref{subsec-thm-main-31-pf} below.
This result, when compared with Theorem \ref{thm-intro-main-1} and Corollary \ref{cor-nodis},  highlights
a drastic contrast between $o(\hbar^n)$-dislocation and $\epsilon$-dislocation with $\epsilon >0$ independent of $\hbar$, which reflects the quantum counterpart of the interplay between symplectic rigidity and flexibility. For instance, consider the quantum state $\theta_\hbar = Q_\hbar(ud\mu)$ from Example \ref{exam-CP} above such that the support of the density $u$ occupies less than half of the volume of $\C P^n$ while the set $\{u > 0\}$ contains a non-displaceable ball. It follows that $\theta_\hbar$ cannot be $o(\hbar^n)$-dislocated by a Hamiltonian of the form $T_\hbar(f_t)$, but it admits an $\epsilon$-dislocation
by such a Hamiltonian with arbitrarily small $\epsilon$. In other words, {\it the contrast between symplectic rigidity and flexibility on the quantum side is reflected in the  asymptotic behavior of the magnitude of dislocation as $\hbar \to 0$.}

\subsection{Alternative mechanisms of dislocation}
Let $f_{t,\hbar}$ be a Hamiltonian on $M$. For a sequence of quantum states $\theta= (\theta_\hbar)$
denote by $\theta'$ its image under the time one map of the corresponding Schr\"{o}dinger evolution
generated by $T_\hbar(f_{t,\hbar})$. Recall a basic mechanism of dislocation provided by Proposition \ref{prop-MS-fidelity} above: the quantum Hamiltonian $\bigo{(\hbar^\infty)}$-dislocates $\theta$ if
the microsupports of $\theta$ and $\theta'$ are disjoint: $\op{MS}(\theta) \cap \op{MS}(\theta')=\emptyset$. We design two other mechanisms of $\bigo{(\hbar^\infty)}$-dislocation which
demonstrate that, in general, the opposite is not true. Let us outline their main features leaving
a detailed description for Section \ref{sec:other-exampl-disl} below.

Our first example illustrates, in addition,  that the implication ``dislocation yields displacement" provided by Theorem \ref{thm-intro-main-1} above is specific for ``classical" quantum states of the form $Q_\hbar(\tau)$ and in general does not hold for other ``natural" families of quantum states.
Here $f_{t,\hbar}$ is of the form $\hbar^\delta g_t$ with $\delta \in (0,1/2)$.
The sequences of the Husimi measures corresponding to $\theta$ and $\theta'$ weakly converge to the same measure $\tau$ on $M$. The microsupports of $\theta$ and $\theta'$ both coincide with the support of $\tau$.  Furthermore, $\supp {\tau}$ has a non-empty interior. Given any open subset of $\supp {\tau}$,
its displacement energy is positive. Since the total classical energy $\ell_{cl} = o(1)$,
the flow of $f_{t,\hbar}$ does not displace it for sufficiently small $\hbar$. This contrasts with Theorem \ref{thm-intro-main-1} which prohibits such a behavior for ``classical" quantum states.

The construction goes, roughly speaking, as follows. We start
with an open subset $U \subset M$ and fix a grid $\Gamma_\hbar$  in $U$ with the mesh $s = \hbar^\delta$.
The states $\theta_\hbar$ are chosen  as a normalized superposition $\sum_{x \in \Gamma_\hbar} \varphi(x)\xi_{x,\hbar}$  of the coherent states located at the points of the grid. Here $\varphi$ is an $\hbar$-independent function whose support lies in $U$. The Hamiltonian $f_{t,\hbar}$ displaces
the grid $\Gamma_\hbar$ by half-mesh in some direction. This guarantees convergence of the Husimi measures with $\tau= \varphi^2 d\mu$. Here we may or may not consider that this dislocation comes with a displacement, according to the scale we look at. In fact, the displacement takes place on the scale of the mesh
$\sim \hbar^\delta$, but it is invisible at the constant scale.
We refer to Section \ref{subsec-scc} for the precise formulations.

In the second example, $f_{t,\hbar}$ is of the form $\hbar g_t + \bigo(\hbar^2)$, so the dislocation
regime reaches the quantum speed limit $\sim \hbar$. The microsupports of $\theta$ and $\theta'$ coincide and form a closed Lagrangian submanifold $N \subset M$. In particular, their  displacement energy is positive \cite{Chekanov}.  The states $\theta_\hbar$ are Lagrangian states associated to $N$ in the sense of \cite{oim2}. Roughly speaking, they correspond to holomorphic sections of the quantum line bundle sharply concentrated on $N$. The Hamiltonian $f_{t,\hbar}$ is constructed to produce a destructive interference of the state $\theta$ and its image under the Schr\"{o}dinger evolution, see Section \ref{subsec-lasd} for details.

\medskip
\noindent
{\bf Organization of the paper:}
In Section \ref{sec-qprelim} we collect necessary preliminaries
on the Berezin quantization. Section \ref{sec-wdyd} contains the proof of Theorem \ref{thm-intro-main-1}
(``dislocation yields displacement"). In Section \ref{sec-MS} we study microsupports and deduce its
basic properties which were used above for showing that ``displacement yields dislocation".
Section \ref{subsec-thm-main-31-pf} contains the proof of Theorem \ref{thm-main-31} (a quantum counterpart of symplectic flexibility). In Section \ref{sec:other-exampl-disl} we present alternative mechanisms
of dislocation and, in particular, discuss Lagrangian states. In Section \ref{sec-ddsc} we zoom into small scales and extend our results on dislocation vs. displacement in this setting. We conclude with some auxiliary proofs and comments.

\section{Quantization}\label{sec-qprelim}
In this section we collect preliminaries on various aspects of the Berezin quantization and its semiclassical limit
which are used in the present paper.

Let $(M^{2n},\omega)$ be a symplectic closed manifold. The Hamiltonian vector field $X_f$
of a function $f$ is defined by $i_{X_f}\omega + df = 0$, and the Poisson bracket is given by $\{f,g\}= -\omega(X_f,X_g)$. We denote by $\mu$ the measure of $M$ defined by the volume form $\omega^n/n!$.

Fix an auxiliary Riemannian metric $\rho$ on $M$. For a function
$f \in \Cl^\infty(M)$ its $\Cl^k$-norm with respect to $\rho$ is denoted by
$|f|_{k}$. For a pair of smooth functions $f,g$ put
$|f,g|_{N} = \sum_{j=0}^N |f|_{j} \cdot |g|_{N-j}$. Denote
$$
|f,g|_{1,3} := |f|_{1} \cdot |g|_{3} +  |f|_{2} \cdot |g|_{2}+ |f|_{3} \cdot |g|_{1}\;.
$$
We write $\|f\|=|f|_0 =\max |f|$ for the uniform norm, and $\|f\|_{L_1}$ for the $L_1$-norm of $f$ with respect to $\mu$.

\subsection{Basic notions}

Let $L \rightarrow M$ be a Hermitian line bundle. Define the scalar product of two sections $s, t$ of $L$ by $\langle s, t \rangle = \int_M (s,t) (x) d \mu (x)$ where $(s,t)$ is the pointwise scalar product. So the norm is
$$ \| s \| = \Bigl( \int_M |s (x)|^2 \; d\mu (x) \Bigr)^{1/2}$$
where $|s(x)|$ is the poinwise norm of $s$ at $x$.
Let $\mathcal{H}$ be a finite dimensional subspace of $\Ci ( M , L)$, the {\em quantum space}. To each function $f \in \Cl ^0 (M)$ corresponds the {\em Toeplitz operator} $T(f): \mathcal{H} \rightarrow \mathcal{H}$ defined by  $$\langle T(f) s, t \rangle = \langle f s, t \rangle, \qquad  \forall s, t \in \mathcal{H}.$$ By duality, to any state $\theta \in \cS ( \Hilb)$ is associated a Borel probability measure $\nu$ of $M$ such that $\int_M f d \nu = \op{trace} ( T(f) \theta)$, called the {\em Husimi measure} of $\theta$.

We shall denote by $\con{L}$ the line bundle $L$ equipped with the complex structure
$$\lambda \cdot_{\con{L}} \xi :=  \con{\lambda} \cdot_L \xi \;\;\forall \lambda \in \C, \xi \in L\;,$$
where $\cdot_{\con{L}}$ and $\cdot_L$ stands for the complex multiplication in $\con{L}$ and
$L$, respectively. For an element $e \in L$ (respectively, $e \in \con{L}$) , we write $\con{e}$ for the corresponding element of $\con{L}$ (respectively, of $L$). This is done for convenience only, as $L$ and $\con{L}$ coincide as sets and hence, formally speaking, $e=\con{e}$.

The Hermitian structure on $L$ enables one to identify $\con{L}$ with the dual bundle to $L$.
We have the following natural ``contraction" maps which, by a slight abuse of notation, are again denoted by
$\langle \cdot, \cdot \rangle$ and $(\cdot, \cdot )$:

$$\mathcal{H} \otimes (\mathcal{H} \otimes \con{L}_x) \to L_x,  \;\; \langle s,t \otimes e\rangle :=
\langle s,t \rangle \con{e}\;,$$
and
$$L_x \otimes (\mathcal{H} \otimes L_x) \to \mathcal{H}, \;\; (d,t \otimes e) := (d,e)t\;.$$

For any $x\in M$, the {\em coherent state at $x$} is the vector  $e_x \in \mathcal{H} \otimes \con{L}_x$ such that  for any $s \in \mathcal{H}$, $ s(x) = \langle s , e_x \rangle$. The Toeplitz operator $T(f)$ may be equivalently defined by
\begin{gather} \label{eq:defToep_eq}
T (f) = \int_M f(x)  S_x d\mu (x)
\end{gather}
where $S_x $ is the operator $\mathcal{H} \rightarrow \mathcal{H}$ defined by  $S_x (s) = (s(x) , \con{e}_x)$.

Assume that for any $x$, $e_x  \neq 0 $. \footnote{For the standard K\"{a}hler
quantization discussed in the next section this follows from the Kodaira embedding theorem provided
the Planck constant is sufficiently small.} Thus it has the form $\xi_x \otimes \con{v_x}$ where
$v_x \in L_x$ and $\xi_x$ is a unit vector in $\mathcal{H}$ (here we use that $\con{L}_x$ is one-dimensional). We call $\xi_x$ {\it the normalized coherent state}. It is defined up to a phase factor. We write $P_x$ for the rank one {\it coherent state projector} onto the line  spanned by $\xi_x$.
Note that $P_x = \| e_x \|^{-2} S_x $, where $\|\cdot \|$ stands for the natural norm on the
tensor product $\mathcal{H} \otimes \con{L}_x$ coming from the Hermitian structures on the factors.
In the sequel we will denote by $R$  {\it the Rawnsley function} $R(x) = \| e_x \|^2$ on $M$.
With this notation we have
\begin{equation}
\label{eq-povm-rawn}
T(f) = \int_M f(x)R(x)P_xd\mu(x)\;.
\end{equation}

For any Borel probability measure $\tau$ of $M$, define the ``classical" quantum state
$$ Q(\tau) := \int_M P_x d \tau (x) \in \cS( \Hilb )\;.$$
For any $f \in \Cl^0 (M)$, let $\mathcal{B}(f)$ be the {\em Berezin} transform of $f$ defined by\begin{gather} \label{eq:def_Berezin_transform}
\mathcal{B}(f) (x) =  \langle f e_x , e_x \rangle / \| e_x \|^2.
\end{gather}
Since $\mathcal{B}(f) (x) = \op{trace} (T(f) P_x)$, we have
\begin{gather} \label{eq:duality}
 \op{trace} (T(f) Q( \tau) ) = \int_M \mathcal{B}(f) d\tau\;.
\end{gather}
In other words, the Berezin transform and the map sending $\tau$ to the Husimi measure of $Q( \tau)$ are adjoint to each other.

 Let us mention that instead of considering  first the line bundle $L$ and a subspace of $\Ci ( M , L)$, we could have started with the following data:
\begin{itemize}
\item[-] a finite dimensional Hilbert space $\mathcal{H}$;
\item[-] a smooth family of rank one projector $(P_x: \Hilb \rightarrow \Hilb,\; x \in M)$;
\item[-]  a positive function $R \in \Ci (M)$ such that
\begin{equation}\label{eq-star}
\int_M R(x) P_x d \mu (x) = \op{id}_\Hilb\;.
\end{equation}
\end{itemize}
We can then define the line bundle $L$ so that $L_x$ is the image of $P_x$. We have a natural map $\Phi$ from $\Hilb$ to $\Ci ( M, L)$ sending $s$ to the section whose value at $x$ is $s(x) = (R(x))^{1/2} (P_x s)$. By equation \eqref{eq-star}, $\Phi$ is an isometry. In particular $\Phi$ is injective and we can identify $\Hilb$ with the image of $\Phi$. If $e_x$ is the coherent state in the image of $\Phi$ defined as above, we have that  $R(x) = \| e_x \|^2$ and $P_x$ is the projector onto the line determined by $e_x$.

\subsection{Berezin quantization} \label{sec:semiclassical-limit}

Let $\Lambda \subset \R_{>0}$ be a subset having $0$ as a limit point. Suppose that for any $\hbar \in \Lambda$ we are given a Hermitian line bundle $L_{\hbar}  \rightarrow M$ and a finite dimensional subspace $\Hilb_{\hbar}$  of $\Ci ( M , L_\hbar)$.  Let $T_\hbar: \Ci(M) \to \cL(\Hilb)$ be the map sending a function $f$ to
the corresponding Toeplitz operator. We call the family $T_\hbar$, $\hbar \in \Lambda$  a {\em Berezin} quantization of $M$
provided it satisfies certain list of properties reflecting the mathematical formalism behind the quantum-classical correspondence. In this section we present the package fitting our purposes.

A typical example is a closed K\"ahler manifold with a positive line bundle $L \rightarrow M$. In this case, we set $\Lambda = \{ k^{-1}, k \in \N^* \}$ and define $\Hilb_{\hbar}$ as the space of holomorphic sections of $ L^k$ with $k = \hbar ^{-1}$. Here and below we abbreviate $L^k= L^{\otimes k}$.
More generally, for any closed symplectic manifold $M$ such that $[\om ]/(2\pi) \in H^2 ( M , \Z)$, consider a family $(L_{\hbar}, \Hilb_{\hbar}, \hbar \in \Lambda)$ defined as in \cite{oim_symp}.

The main property of these quantizations is that $\langle e_{x,\hbar} , e_{y, \hbar} \rangle $ has a specific behavior in the limit $\hbar \rightarrow 0$. First, for $x=y$, the Rawnsley function
satifies
\begin{gather} \label{eq:estime_diag} \tag{CS1}
R_\hbar(x):=  \| e_{x, \hbar} \|^2 = (2 \pi \hbar)^{-n} ( 1 + \bigo ( \hbar))
\end{gather}
where the $\bigo $ is uniform in $x$. This also holds in $\Ci  $ topology, meaning that the successive derivative of $\| e_{x, \hbar} \|^2$ are uniformly $\bigo ( \hbar ^{ -n +1})$. As an immediate consequence of this we get from \eqref{eq-povm-rawn} (substituting $f=1$) that
$$\dim \Hilb_\hbar= (2 \pi \hbar)^{-n} ( 1 + \bigo ( \hbar))\;.$$

Second, for any disjoint closed sets $X$, $Y$ of $M$, we have
\begin{gather} \label{eq:estime_outside_diag}  \tag{CS2}
 \langle e_{x,\hbar} , e_{y, \hbar} \rangle = \bigo ( \hbar ^{\infty})
\end{gather}
uniformly in $x \in X$ and $y \in Y$.
Actually  $\langle e_{x,\hbar} , e_{y, \hbar} \rangle $ has a precise  asymptotic expansion uniform in $x$ and $y$, which determines it modulo $\bigo ( \hbar ^{\infty})$. In the case of K\"ahler manifold, this is a deep result coming from \cite{BoSj}. In the case of symplectic manifold, the spaces $\Hilb_{\hbar}$ are defined in such a way that this asymptotic expansion holds. All the semiclassical results we will give are consequences of this expansion.

In particular, the Toeplitz operators form a semiclassical algebra.
More precisely, there exists bidifferential (i.e., differential with respect to each variable)  operators $(B_{\ell}$, $\ell \in \N$) from  $\Ci (M) \times \Ci (M)$ to $\Ci (M)$ such that
$$B_0 (f,g) = fg, \qquad B_1(f,g) - B_1 (g,f) = i \{ f, g \}$$ and for any $f,g \in \Ci (M)$ and $N \in \N$, we have the expansion
\begin{gather} \label{eq:expansion_product} \tag{E}
 T_\hbar(f) T_\hbar(g) = \sum_{\ell = 0 }^{N} \hbar^{\ell} T_\hbar( B_{\ell} (f,g) ) + r_{N,\hbar} ( f,g)
\end{gather}
where $r_{N, \hbar} (f,g) = \bigo ( \hbar^{ N+1})$  in the operator norm. For K\"ahler manifold, this has been deduced in \cite{BoMeSc} from \cite{BoGu}. For symplectic manifold, this is proved in \cite{oim_symp}. Other references with similar results are \cite{Gu},\cite{BU} and \cite{MaMa}.

We clearly have $\| T_{\hbar} (f) \|_{op} \leqslant \| f \|$. This upper bound is actually sharp in the semiclassical limit, more precisely we have the following G\"arding estimate
\begin{gather} \label{eq:P1} \tag{P1}
  \|f\|- \alpha |f|_{2}\hbar \leq \|T_\hbar(f)\|_{op} \leq \|f\|
\end{gather}
where $\al$ is a positive constant.
It has been proved by the authors in \cite{our_paper} that the remainders in (\ref{eq:expansion_product}) satisfy $r_{1, \hbar} ( f,g) = \bigo ( \hbar) |f,g|_{2}$ and $r_{2, \hbar} (f,g)  - r_{2, \hbar}  ( g,f) = \bigo ( \hbar^2) |f,g|_{1,3}$. Consequently, there exist constants $\beta$ and $\gamma$ such that
\begin{gather} \label{eq:P2}
\tag{P2} \qquad \| -\frac{i}{\hbar} \cdot [T_{\hbar}(f),T_{\hbar}(g)] - T_\hbar (\{f,g\})\|_{op} \leq \beta \cdot |f,g|_{1,3} \hbar\;; \\
\label{eq:P3} \tag{P3}  \|T_\hbar(fg) - T_\hbar(f)T_\hbar(g)\|_{op} \leq \gamma |f,g|_{2}\hbar
\end{gather}
The number of derivatives in the bounds of (\ref{eq:P1}), (\ref{eq:P2}) and (\ref{eq:P3}) will be important in Section \ref{sec-ddsc} when we will consider small scales.
As a last property, the Berezin transform satisfies
\begin{gather} \label{eq:Berezin_transform} \tag{B}
 \mathcal{B}_{\hbar} ( f) = f + \bigo (\hbar) | f |_2
\end{gather}
Except in Sections \ref{sec:other-exampl-disl}  and \ref{sec:proof-theorem-disloc_21},  (\ref{eq:estime_diag}), (\ref{eq:estime_outside_diag}), (\ref{eq:expansion_product}), (\ref{eq:P1}), (\ref{eq:P2}), (\ref{eq:P3}) and (\ref{eq:Berezin_transform})  are the only properties we will use.

As a consequence of (\ref{eq:estime_diag}) and (\ref{eq:Berezin_transform}), the trace norm of a Toeplitz operator is estimated by
\begin{gather} \label{eq:tracenorm}
         \| f \|_{L_1} + \bigo( \hbar ) |f|_2 \leqslant  (2 \pi \hb )^n \| T_{\hb}(f) \|_{tr} \leqslant \| f \|_{L_1} ( 1 +   \bigo ( \hbar))
\end{gather}
for any $f \in \Cl^ 2 (M)$.
The proof is given in Section \ref{sec:estimate-trace-norm}.
The lower bound could be written differently. Actually we have that $ \| \mathcal{B}_\hbar(f) R_\hbar \|_{L_1} \leqslant  \| T_{\hbar} (f) \| _{\op{tr}}$ where  $R_\hbar$ is the Rawnsley function.

\subsection{The state $Q_{\hbar} ( \tau)$ in the semiclassical limit} \label{sec:state-semiclassical-limit}

Consider a classical state $\tau$ and the corresponding family of quantum states $Q_{\hbar} ( \tau)$. Let $\nu_{\hbar}$ the semiclassical measure of $Q_{\hbar}(\tau)$.  By the duality (\ref{eq:duality}) and the property (\ref{eq:Berezin_transform}) of the Berezin transform, $\nu_{\hbar}$ converges weakly to $\tau$. More precisely for any $f$ of class $\Cl^2$,
$$  \int_{M} f d \nu_{\hbar} = \op{trace} (T_{\hbar} (f) Q_{\hbar} ( \tau) )  = \int_M f d \tau + \bigo ( \hbar) |f|_2 .$$
By this result, we may interpret $Q_{\hbar} ( \tau)$ as a quantization of $\tau$.

Consider now classical states $\tau_1$, $\tau_2$ of $M$ admitting continuous densities $g_1$, $g_2$ with respect to $\mu$ and such that the square roots of $g_1$, $g_2$ are of class $\Cl^2$. Then the fidelity of $\theta_1 = Q_{\hbar} ( \tau_1)$ and $\theta_2 = Q_{\hbar} ( \tau_2)$ satisfies
\begin{gather} \label{eq:fidelity_estimate}
 \Phi ( \theta_1, \theta_2 ) =   \int_M  (g_1 g_2 )^{1/2} d \mu + \bigo ( \hbar^{1/2})
\end{gather}

\begin{proof}[Proof of Equation (\ref{eq:fidelity_estimate})]
Since the Rawnsley function $R$ is positive, $R^{-1/2}$ is smooth and by (\ref{eq:estime_diag}) $R^{-1/2} = ( 2 \pi \hb)^{n/2} ( 1 + \bigo ( \hbar ))$ in the $\Ci$ topology.  Denoting by $f_i$ the square root of $g_i$, we have by (\ref{eq:P3}) that
$$ \bigl( T_{\hbar} ( f_i R^{-1/2} ) \bigr) ^2 = \theta_i + \bigo ( \hbar ^{n+1})$$
Recall now \cite{Ando} that the square root is an operator monotone function and hence
for any positive Hermitian matrices $A$ and $B$
$$\|\sqrt{A}-\sqrt{B}\|_{op} \leq \|\sqrt{|A-B|}\|_{op}\;.$$
Consequently, $\theta_i ^{1/2}=  T_{\hbar} ( f_i R^{-1/2} ) +  \bigo ( \hbar ^{(n+1)/2})$. We deduce from (\ref{eq:P1}), (\ref{eq:P3}) that
$$\theta^{1/2}_1 \theta^{1/2}_2 = T_{\hbar} ( f_1 f_2  R^{-1}) + \bigo ( \hb^{n+\frac{1}{2}})$$
and using (\ref{eq:tracenorm}), (\ref{eq:estime_diag}) and the fact that the dimension of $\mathcal{H}_\hbar$ is a $\bigo ( \hbar^{-n})$, the result follows.
\end{proof}

The assumption that the square roots of $g_1$ and $g_2$ are of class $\Cl^2$ is rather restrictive, especially when $g_1$ and $g_2$ vanish at some points. We can actually show that for any continuous densities $g_1$, $g_2$
\begin{gather} \label{eq:fidelity_estimate_continuous}
 \Phi ( \theta_1, \theta_2 ) =   \int_M  (g_1 g_2 )^{1/2} d \mu + o (1)
\end{gather}
The proof is the same as the one of (\ref{eq:fidelity_estimate}), but we have to adapt (P1), (P3) and (B) to continuous functions, by relaxing the remainders to $o(1)$ instead of $\bigo( \hbar)$. Since (\ref{eq:fidelity_estimate_continuous}) has no application in this paper, we won't give more details.

\subsection{Dynamics}

The relation between classical and quantum dynamics is given by the Egorov theorem. Let us first recall a weak version easily deduced from Heisenberg equation and the expansion (\ref{eq:expansion_product}). Take any time dependent Hamiltonian $f_t$. Denote by $\phi_t: M \to M$ the Hamiltonian flow generated by $f_t$ and by $U_{\hbar}(t) : \Hilb_{\hbar} \rightarrow \Hilb_{\hbar}$ the Schr\"{o}dinger evolution generated by the Hamiltonian $T_\hbar(f_t)$. Then for any smooth function $g$ on $M$, there exists a sequence $g_{\ell}$, $\ell \in \N^*$ of smooth functions such that for any $N$,
\begin{gather} \label{eq:egorov_weak}
 U_{\hbar} T_{\hbar} (g) U_{\hbar}^* = T_{\hbar} ( g \circ \phi^{-1} ) + \sum_{\ell = 1 }^{N} \hbar^{\ell } T_{\hbar} (g_{\ell} \circ \phi^{-1}) + \bigo ( \hbar^{N+1})
\end{gather}
where $U_{\hbar} = U_{\hbar}(1)$ and $\phi = \phi_1$. Furthermore, for any $\ell \in \N^*$, the support of $g_{\ell}$ is contained in the support of $g$.

In the sequel we will need a more precise version for $N=1$ with an explicit control of the remainder in terms of $f$ and $g$. This can be easily deduced from (\ref{eq:P2}) as follows. Put $g_t:= g \circ \phi_t^{-1}$ and note that $\dot{g}_t= \{f_t,g_t\}$.
Hence we have that
$$\frac{d}{d t}\Bigl( U_{\hbar}(t)^*T_\hbar(g_t)U_{\hbar}(t) \Bigr) = U_{\hbar}(t)^*\Bigl( \frac{i}{\hbar} [T_\hbar(f_t),T_\hbar(g_t)]+ T_\hbar(\{f_t,g_t\}\Bigr) U_{\hbar}(t) \;,$$
and therefore by (\ref{eq:P2})
$$ \Bigl\| \frac{d}{d t} U_{\hbar}(t)^*T_\hbar(g_t)U_{\hbar}(t) \Bigr\|_{op} \leq \beta    |f_t,g_t|_{1,3}  \ \hbar\;.$$
Integrating this from $0$ to $1$, conjugating by $U_{\hbar}$ and recalling that $U_{\hbar}$ is unitary,  we get
\begin{equation}\label{eq-Egorov}
\| T_\hbar(g \circ \phi^{-1}) -U_{\hbar}T_\hbar(g)U_{\hbar}^{*}\|_{op} \leq \beta \hbar \int_0^1 |f_t,g \circ \phi_t^{-1}|_ {1,3}dt \;.
\end{equation}


\section{When dislocation yields displacement}\label{sec-wdyd}

In the present section we prove Theorem \ref{thm-intro-main-1}.
Our strategy is as follows. For a pair of quantum states  we
introduce an operator norm cousin of fidelity $\Gamma_q$ defined in \eqref{eq-Gammaq-def}
below. On the one hand, an easy linear algebra argument (Proposition \ref{prop-fidel}) enables us to relate $\Gamma_q$ to fidelity. On the other hand, $\Gamma_q$ admits a classical counterpart $\Gamma_{cl}$, see \eqref{eq-Gamma-disloc} below, which, roughly speaking, enables one to think about ``dislocation of functions" instead of displacement of subsets, and hence is closely related to the notion of symplectic dislacement energy discussed above. Thus for proving Theorem \ref{thm-intro-main-1} it suffices to quantify the relation between $\Gamma_q$ and $\Gamma_{cl}$. This is done in Theorem \ref{thm-disloc-11}  below with the help of quantum-classical correspondence formalized within the framework of the Berezin quantization  $(\Hilb_{\hbar}, \hbar \in \Lambda)$ satisfying (\ref{eq:estime_diag}), (\ref{eq:P1}), (\ref{eq:P2}) and (\ref{eq:P3}).

\subsection{An operator norm cousin of fidelity }\label{subsec-fid}
For a pair $\theta,\sigma$ of positive non-vanishing Hermitian operators on a $d$-dimensional Hilbert space $\Hilb$ put
\begin{equation}\label{eq-Gammaq-def}
 \Gamma_q(\theta,\sigma) = \frac{\|\theta\sigma\|_{op}} {\|\theta\|_{op} \|\sigma\|_{op}} \in [0,1]\;,
 \end{equation}
where $q$ stands for quantum. When $\theta$ and $\sigma$ are quantum states, this quantity serves as yet another measure of overlap. Let us discuss it in more details.

Observe that $\Gamma_q(\theta,\sigma)=0$ if and only if  $\theta$ and $\sigma$ have orthogonal images,
i.e., the states do not overlap in the very strong sense.

At the same time, equality $\Gamma_q(\theta,\sigma)=1$ is less informative. It does not necessarily
yields $\sigma=\theta$. For instance, $\Gamma_q(\theta, d^{-1}\id) = 1$ for all $\theta \in \cS$.

Let us illustrate the introduced notion in the following two cases. First, assume that
$\theta$ and $\sigma$ commute.
Fix a common eigenbasis and look at the probability distributions $p=\{p_1,...,p_d\}$ and $q=\{q_1,...,q_d\}$ on $\{1,...,d\}$ formed by the eigenvalues of $\theta$ and $\sigma$, respectively.
Vanishing of $\Gamma_q$ is equivalent to the fact that $p$ and $q$ have disjoint supports. However
$\Gamma_q=1$ means only that $p$ and $q$ attain maximum at the same point $i \in \{1,...,d\}$.

Second, let $\theta$ and $\sigma$ be pure states defined by unit vectors
$\xi,\eta \in \Hilb$, that is we identify $\theta$ and $\sigma$
with the orthogonal projectors to   $\xi$ and $\eta$, respectively. A direct calculation shows that $$\Gamma_q(\xi,\eta) = \Phi(\xi,\eta) = |\langle \xi,\eta \rangle |\;.$$

\begin{prop}\label{prop-fidel}
\begin{equation}\label{eq-fidel}
\Gamma_q(\theta,\sigma) \leq \frac{\Phi(\theta,\sigma)}{\|\theta\|_{op}^{1/2}\|\sigma\|_{op}^{1/2}}
\end{equation}
for all $\theta,\sigma \in \cS( \Hilb )$.
\end{prop}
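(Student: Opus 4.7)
The plan is to reduce everything to operator-norm submultiplicativity plus the elementary inequality $\|A\|_{op}\le \|A\|_{tr}$. The key observation is that since $\theta,\sigma\ge 0$ we can factor
$$\theta\sigma = \sqrt{\theta}\,(\sqrt{\theta}\sqrt{\sigma})\,\sqrt{\sigma},$$
and the middle factor is exactly the operator whose trace norm defines the fidelity.

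So I would proceed as follows. First, take operator norms on both sides of the factorization and apply submultiplicativity:
$$\|\theta\sigma\|_{op}\le \|\sqrt{\theta}\|_{op}\cdot\|\sqrt{\theta}\sqrt{\sigma}\|_{op}\cdot\|\sqrt{\sigma}\|_{op}.$$
Second, since $\sqrt{\theta}$ is the positive square root of the positive operator $\theta$, its eigenvalues are the square roots of those of $\theta$, so $\|\sqrt{\theta}\|_{op}=\|\theta\|_{op}^{1/2}$, and similarly for $\sigma$. Third, the operator norm is bounded by the trace norm (the largest singular value is at most the sum of all singular values), so
$$\|\sqrt{\theta}\sqrt{\sigma}\|_{op}\le \|\sqrt{\theta}\sqrt{\sigma}\|_{tr}=\Phi(\theta,\sigma).$$
Combining these three ingredients yields
$$\|\theta\sigma\|_{op}\le \|\theta\|_{op}^{1/2}\|\sigma\|_{op}^{1/2}\,\Phi(\theta,\sigma),$$
and dividing by $\|\theta\|_{op}\|\sigma\|_{op}$ gives exactly \eqref{eq-fidel}.

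Honestly, there is no serious obstacle here: the argument is one line once the factorization $\theta\sigma = \sqrt{\theta}(\sqrt{\theta}\sqrt{\sigma})\sqrt{\sigma}$ is written down. The only thing worth double-checking is that the hypothesis $\theta,\sigma\in\cS(\Hilb)$ is used only through positivity (so that the square roots exist and $\|\sqrt{\theta}\|_{op}=\|\theta\|_{op}^{1/2}$); the trace-one condition plays no role in the inequality itself, which suggests the statement holds for arbitrary positive operators.
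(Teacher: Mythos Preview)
Your argument is correct and is genuinely different from the paper's. The paper works on the fidelity side: it rewrites $\Phi(\theta,\sigma)=\op{trace}\sqrt{\sigma^{1/2}\theta\sigma^{1/2}}$, uses the elementary inequality $\op{trace}\sqrt{A}\geq\sqrt{\op{trace} A}$ for positive $A$, and then pushes down via the operator inequalities $\|\theta\|_{op}\,\theta\geq\theta^2$ and $\|\sigma\|_{op}\,\sigma\geq\sigma^2$ together with trace cyclicity to reach $\op{trace}(\sigma^{1/2}\theta\sigma^{1/2})\geq\|\theta\|_{op}^{-1}\|\sigma\|_{op}^{-1}\|\theta\sigma\|_{op}^2$. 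Your route is shorter and more transparent: the factorization $\theta\sigma=\sqrt{\theta}\,(\sqrt{\theta}\sqrt{\sigma})\,\sqrt{\sigma}$ immediately isolates the operator whose trace norm is $\Phi$, and then only submultiplicativity of $\|\cdot\|_{op}$ and the trivial bound $\|\cdot\|_{op}\leq\|\cdot\|_{tr}$ are needed. Your observation that the trace-one condition is irrelevant is also correct; the paper's argument likewise uses only positivity.
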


\begin{proof} First observe that
\begin{equation}
\label{eq-fidel-vsp-1}
\Phi(\theta,\sigma) = \text{trace}\sqrt{\sigma^{1/2}\theta\sigma^{1/2}} \geq
\sqrt{\text{trace}(\sigma^{1/2}\theta\sigma^{1/2})}\;.
\end{equation}
Since $\theta$ is non negative, $\|\theta\|_{op} \cdot \theta \geq \theta^2$, so
$\sigma^{1/2}\theta\sigma^{1/2} \geq \|\theta\|_{op}^{-1}\cdot \sigma^{1/2}\theta^2\sigma^{1/2}$.
Thus, $$\text{trace}(\sigma^{1/2}\theta\sigma^{1/2}) \geq \|\theta\|_{op}^{-1}\cdot \text{trace}(\sigma^{1/2}\theta^2\sigma^{1/2})= \|\theta\|_{op}^{-1} \cdot \text{trace}(\theta \sigma\theta) \; .$$
In the same way, $\si$ being non negative, $\op{trace} (\theta \sigma\theta) \geq \|\sigma\|_{op}^{-1}\cdot \text{trace}(\theta \sigma^2 \theta)$ and we obtain
\begin{xalignat*}{2}
\text{trace}(\sigma^{1/2}\theta\sigma^{1/2}) & \geq  \|\theta\|_{op}^{-1}\|\sigma\|_{op}^{-1}\cdot \text{trace}(\theta \sigma^2 \theta)\\
&  \geq \|\theta\|_{op}^{-1}\|\sigma\|_{op}^{-1}\cdot \|\theta \sigma^2 \theta\|_{op} =
\|\theta\|_{op}^{-1}\|\sigma\|_{op}^{-1}\cdot \|\theta\sigma\|_{op}^2
\end{xalignat*}
where we have used that $\theta \sigma ^ 2 \theta $ is non negative.
Combining this with \eqref{eq-fidel-vsp-1}, we get inequality \eqref{eq-fidel}.
\end{proof}

\subsection{Classical dislocation}
For a pair of smooth functions $g,h \in \Cl^\infty(M) \setminus \{0\}$ define the
{\it classical} overlap
\begin{equation} \label{eq-Gamma-disloc}
\Gamma_{cl}(g,h):= \frac{\|gh\|}{\|g\|\cdot\|h\|}\;.
\end{equation}
Consider a Hamiltonian flow $\phi_t$ on a symplectic manifold $(M,\omega)$
generated by a compactly supported Hamiltonian $f_t$. Denote by $\phi=\phi_1$ its time-one map.
We say that the flow {\it $a$-dislocates} a function $g \in \Cl^\infty(M) \setminus \{0\}$, $a \in [0,1)$,
if
$$\Gamma_{cl}(g, g \circ \phi^{-1}) \leq a\;.$$
Put $X:= \{|g| > \sqrt{a}\|g\|\}$. Clearly, $a$-dislocation yields $\phi(X) \cap X = \emptyset$.
In particular, the Hofer length of the path $\{\phi_t\}$ satisfies
\begin{equation}\label{eq-ell-cl}
\ell_{cl}(f) := \int_0^1\|f_t\|dt \geq e_M(g,\sqrt{a}) >0\;,
\end{equation}
where by definition $e_M(g,r)$ is Hofer's displacement energy of $\{|g|>r\|g\|\}$ in $M$.

\subsection{Comparing $\Gamma_{cl}$ and $\Gamma_{q}$}\label{subsec-clsemicldisloc}
Fix the following classical data:
\begin{itemize}
\item  a non-negative function $g \in \Cl^\infty(M)$ with $\max g=1$;
\item  a time-dependent Hamiltonian $f_{t}$ generating a Hamiltonian flow $\phi_{t}$ with $\phi_{1}=\phi$.
\end{itemize}

\noindent
Produce the following quantum data:
\begin{itemize}
\item The operator
\begin{equation}\label{eq-thetahbar}
\theta_\hbar := T_\hbar(g) \in \cL ( \Hilb_\hbar)\;.
\end{equation}
\item The Schr\"{o}dinger evolution $U_{\hbar}(t)$ on $\Hilb_\hbar$ generated by the Hamiltonian $F_{t,\hbar}:= T_\hbar(f_{t})$
with $U_{\hbar}(1)=U_\hbar$.
\end{itemize}
Introduce the following numerical characteristics of the classical data:
\begin{equation}\label{eq-b-defin}
\begin{split}
b(g,f) :=\max \Big{(}\alpha |g|_2,
\alpha |g \circ \phi^{-1}|_2, \alpha|g \cdot g\circ \phi^{-1}|_2, \\ \beta\int_0^1 |f_{t},g \circ \phi_{t}^{-1}|_{1,3}dt, \gamma |g,g \circ \phi^{-1}|_2
\Big{)}
\end{split}
\end{equation}
and
\begin{equation}\label{eq-c-defin}
c(f) := \alpha\int_0^1 |f_{t}|_2 dt\;,
\end{equation}
where $\alpha,\beta,\gamma$ are the constants in the properties (\ref{eq:P1}),(\ref{eq:P2}),(\ref{eq:P3}). The derivatives entering in the definition of $c(f)$ and $b (g,f)$ will play an important role in the proof of Theorem \ref{thm-disloc-1}, which is the generalization of Theorem \ref{thm-intro-main-1} to small scale.

We shall assume that $b(g,f) \hbar< 1$. Then by (\ref{eq:P1}),  $T_\hbar(g) \neq 0 $.

Put $$\Gamma_{q,\hbar}:= \Gamma_{q}(\theta_\hbar, U_\hbar\theta_\hbar U_\hbar^{-1}),\;\; \Gamma_{cl}:= \Gamma_{cl}(g,g\circ \phi^{-1})\;,$$
and
$$\ell_{cl}= \ell_{cl}(f),\;\;\ell_{q,\hbar}= \ell_q(F _{\hbar})\;.$$
Write $b =  b ( g,f)$ and $c = c(f)$.

\medskip
\noindent
\begin{thm}\label{thm-disloc-11} When  $b \hbar< 1$, we have that
\begin{equation}\label{eq-disloc-main}
\Gamma_{cl} -3b \hbar \leq \Gamma_{q,\hbar} \leq \frac{\Gamma_{cl}+ 2b \hbar}{(1-b \hbar)^2} .
\end{equation}
Furthermore,
\begin{equation}\label{eq-disloc-ell}
\ell_{cl}-c \hbar \leq \ell_{q,\hbar} \leq \ell_{cl}\;.
\end{equation}
\end{thm}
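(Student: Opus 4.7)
The plan is to treat the two inequalities of the theorem separately. The energy estimate (\ref{eq-disloc-ell}) will follow immediately from the G\"arding bound (\ref{eq:P1}), integrated in $t$: the upper bound $\|T_\hbar(f_t)\|_{op}\leq\|f_t\|$ integrates to $\ell_{q,\hbar}\leq\ell_{cl}$, and the lower bound $\|T_\hbar(f_t)\|_{op}\geq\|f_t\|-\alpha|f_t|_2\hbar$ integrates, in view of the definition (\ref{eq-c-defin}) of $c$, to $\ell_{q,\hbar}\geq\ell_{cl}-c\hbar$.

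For the dislocation estimate (\ref{eq-disloc-main}) I would set $A=T_\hbar(g)$, $B=T_\hbar(g\circ\phi^{-1})$, $C=T_\hbar(g\cdot g\circ\phi^{-1})$, and $B'=U_\hbar A U_\hbar^{-1}$. Since $U_\hbar$ is unitary, $\|B'\|_{op}=\|A\|_{op}$, so $\Gamma_{q,\hbar}=\|AB'\|_{op}/\|A\|_{op}^2$, and the task reduces to pinning down the numerator and denominator separately. Four ingredients will be used, each contributing a remainder of size at most $b\hbar$: the Egorov bound (\ref{eq-Egorov}) gives $\|B-B'\|_{op}\leq b\hbar$; (\ref{eq:P1}) applied to $g$ gives $1-b\hbar\leq\|A\|_{op}\leq 1$ (using $\|g\|=1$); the product formula (\ref{eq:P3}) gives $\|AB-C\|_{op}\leq b\hbar$; and (\ref{eq:P1}) applied to $g\cdot g\circ\phi^{-1}$ gives $\|g\cdot g\circ\phi^{-1}\|-b\hbar\leq\|C\|_{op}\leq\|g\cdot g\circ\phi^{-1}\|$. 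Each of the quantities $\alpha|g|_2$, $\alpha|g\cdot g\circ\phi^{-1}|_2$, $\beta\int_0^1|f_t,g\circ\phi_t^{-1}|_{1,3}\,dt$ and $\gamma|g,g\circ\phi^{-1}|_2$ that appears here is bounded by $b$ according to the definition (\ref{eq-b-defin}).

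Chaining these via submultiplicativity and the triangle inequality, using $\|A(B'-B)\|_{op}\leq\|A\|_{op}\cdot b\hbar\leq b\hbar$, one obtains
\[
\|g\cdot g\circ\phi^{-1}\|-3b\hbar\;\leq\;\|AB'\|_{op}\;\leq\;\|g\cdot g\circ\phi^{-1}\|+2b\hbar.
\]
Dividing by $\|A\|_{op}^2\in[(1-b\hbar)^2,1]$ and recalling that $\Gamma_{cl}=\|g\cdot g\circ\phi^{-1}\|$ (since $\|g\|=1$) then delivers (\ref{eq-disloc-main}): the upper bound uses $\|A\|_{op}^2\geq(1-b\hbar)^2$ in the denominator, while the lower bound uses $\|A\|_{op}^2\leq 1$, supplemented by the trivial observation $\Gamma_{q,\hbar}\geq 0$ to handle the case $\Gamma_{cl}<3b\hbar$.

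The only real obstacle is the bookkeeping: one must ensure that every remainder arising in the chain of estimates is already absorbed into the single quantity $b(g,f)$. This is precisely what motivates the somewhat heavy but carefully tailored definition (\ref{eq-b-defin}) — it packages exactly those derivatives of $g$, of $g\circ\phi^{-1}$, of their product, and the Egorov integral that appear in the argument.
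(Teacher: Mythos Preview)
Your proof is correct and follows the same approach as the paper: both collect the Egorov bound (\ref{eq-Egorov}), the G\"arding estimates (\ref{eq:P1}), and the product estimate (\ref{eq:P3}) --- each contributing a term $\leq b\hbar$ --- chain them via the triangle inequality to control the numerator $\|T_\hbar(g)\,U_\hbar T_\hbar(g)U_\hbar^{-1}\|_{op}$, and divide by the denominator bounded via (\ref{eq:P1}). The only cosmetic difference is that the paper writes the denominator as $\|T_\hbar(g)\|_{op}\cdot\|T_\hbar(g\circ\phi^{-1})\|_{op}$ and invokes the (\ref{eq:P1}) bound on $g\circ\phi^{-1}$ as well, whereas you use $\|U_\hbar A U_\hbar^{-1}\|_{op}=\|A\|_{op}$ to write it directly as $\|A\|_{op}^2$; your version is marginally cleaner.
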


\medskip
\noindent
\begin{cor}\label{cor-main-disloc} Assume that $f_{t}$ and $g$ are compactly supported in an open subset $W \subset M$. Then the time-one-map of $f_{t}$ displaces the set
$$\bigl\{g > \sqrt{\Gamma_{q,\hbar}+ 3b\hbar}\bigr\}$$
and
\begin{equation}
\label{eq-main-disloc}
\ell_{q,\hbar} \geq e_W \bigl( g, \sqrt{\Gamma_{q,\hbar}+ 3b\hbar} \bigr) -c\hbar\;
\end{equation}
provided $b\hbar < 1$.
\end{cor}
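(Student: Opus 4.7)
The plan is to derive the corollary almost directly from Theorem \ref{thm-disloc-11}, combined with the elementary ``$a$-dislocation of a function implies displacement of a superlevel set'' observation recorded just before equation (\ref{eq-ell-cl}). First I would rearrange the left half of the double inequality (\ref{eq-disloc-main}) into the form $\Gamma_{cl} \leq \Gamma_{q,\hbar} + 3 b \hbar$, and set $a := \Gamma_{q,\hbar} + 3 b \hbar$. In this way the quantum assumption that the Schr\"odinger flow achieves operator-norm overlap $\Gamma_{q,\hbar}$ is translated into the purely classical statement that the Hamiltonian flow $\phi_t$ generated by $f_t$ is an $a$-dislocation of $g$ in the sense of Section 3.2.

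Next I would invoke the elementary remark from Section 3.2: an $a$-dislocation of $g$ displaces the superlevel set $\{|g|>\sqrt{a}\,\|g\|\}$. Since by hypothesis $g\geq 0$ and $\|g\|=\max g =1$, this set is exactly $\{g>\sqrt{a}\}$, which yields the first assertion of the corollary. To pass from $e_M$ to the (generally larger) $e_W$ in the energy bound I would use the support hypothesis: because both $g$ and $f_t$ are compactly supported in $W$, the isotopy $\{\phi_t\}$ is itself supported in $W$, so it is a legitimate candidate in the definition of displacement energy \emph{within} $W$. This gives
\[
\ell_{cl}(f)\;\geq\; e_W\bigl(g,\sqrt{a}\bigr).
\]

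Finally I would chain this with the right half of (\ref{eq-disloc-ell}), namely $\ell_{q,\hbar}\geq \ell_{cl}-c\hbar$, to obtain
\[
\ell_{q,\hbar}\;\geq\;\ell_{cl}-c\hbar\;\geq\; e_W\bigl(g,\sqrt{\Gamma_{q,\hbar}+3b\hbar}\bigr)-c\hbar,
\]
which is precisely (\ref{eq-main-disloc}). I do not anticipate any real obstacle; the corollary is essentially a bookkeeping repackaging of Theorem \ref{thm-disloc-11} in geometric (displacement-energy) language. The only thing worth flagging is the trivial boundary case $\sqrt{\Gamma_{q,\hbar}+3b\hbar}\geq 1$, in which $\{g>\sqrt{a}\}$ is empty and $e_W$ vanishes by convention, so the bound reduces to $\ell_{q,\hbar}\geq -c\hbar$, which is automatic.
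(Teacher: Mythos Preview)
Your proposal is correct and follows essentially the same route as the paper: rearrange the left inequality of \eqref{eq-disloc-main} to get $\Gamma_{cl}\le\Gamma_{q,\hbar}+3b\hbar$, invoke the classical $a$-dislocation $\Rightarrow$ displacement observation (with $e_W$ in place of $e_M$ thanks to the support hypothesis), and then chain with $\ell_{q,\hbar}\ge\ell_{cl}-c\hbar$. One cosmetic slip: you call this last bound the ``right half'' of \eqref{eq-disloc-ell}, but it is the \emph{left} inequality there; the inequality you actually use is the correct one.
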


\medskip
\noindent
\begin{proof} By Theorem \ref{thm-disloc-11}
$$\Gamma_{cl} \leq \Gamma_{q,\hbar} + 3b\hbar\;,$$
which yields the statement about displacement. Furthermore,
$$\ell_{q,\hbar} \geq \ell_{cl,\hbar} - c\hbar\;.$$
By \eqref{eq-ell-cl}
$$\ell_{cl} \geq e_W \bigl( g, \sqrt{\Gamma_{cl}} \bigr) \geq e_W \bigl( g, \sqrt{\Gamma_{q,\hbar}+ 3b\hbar} \bigr) \;.$$
Combining these inequalities, we get \eqref{eq-main-disloc}.
\end{proof}

\medskip
\noindent
{\bf Proof of Theorem \ref{thm-disloc-11}:}
By Egorov theorem \eqref{eq-Egorov}, (\ref{eq:P1}) and (\ref{eq:P3}) we have that
\begin{gather*}
\| T_\hbar(g \circ \phi^{-1}) -U_\hbar T_\hbar(g)U_\hbar^{-1} \|_{op} \leq  b\hbar\;, \\
\|T_\hbar(g)T_\hbar(g \circ \phi^{-1}) - T_\hbar(g \cdot g \circ \phi^{-1}) \|_{op} \leq b\hbar\;, \\
 1 - b\hbar \leq \|T_\hbar(g)\|_{op} \leq 1 \;,\\
1- b\hbar \leq \|T_\hbar(g \circ \phi^{-1})\|_{op} \leq 1 \;, \\
\|g \cdot g\circ \phi^{-1}\|  - b\hbar \leq \|T_\hbar(g \cdot g\circ \phi^{-1})\|_{op} \leq \|g \cdot g\circ \phi^{-1}\|\;,
\end{gather*}
and
\begin{equation}\label{eq-vsp-rk18}
\|f_{t}\|-\alpha|f_t|_2\hbar \leq \|F_{t, \hbar}\|_{op} \leq \|f_{t}\|\;.
\end{equation}
We deduce that
\begin{gather*}
\Gamma_{q,\hbar} = \frac{\|T_\hbar(g)U_\hbar T_\hbar(g)U_\hbar^{-1}\|_{op}}{\|T_\hbar(g)\|_{op} \cdot \|T_\hbar(g\circ \phi^{-1})\|_{op}} \leq \frac{\|T_\hbar(g)T_\hbar(g \circ \phi^{-1})\|_{op}+b\hbar}{(1-b\hbar)^2} \\
\leq \frac{\|T_\hbar(g \cdot g\circ \phi^{-1})\|_{op} + 2 b \hbar}{(1-b\hbar)^2} \leq \frac{\Gamma_{cl}+ 2b\hbar}{(1-b\hbar)^2}
\end{gather*}
and
\begin{gather*}
\Gamma_{q,\hbar} = \frac{\|T_\hbar(g)U_\hbar T_\hbar(g)U_\hbar^{-1}\|_{op}}{\|T_\hbar(g)\|_{op} \cdot \|T_\hbar(g\circ \phi^{-1})\|_{op}} \geq \|T_\hbar(g)T_\hbar(g \circ \phi^{-1})\|_{op}- b\hbar \\
 \geq \|T_\hbar(g \cdot g\circ \phi^{-1})\|_{op}-2b\hbar \geq
\Gamma_{cl}- 3b\hbar\;,
\end{gather*}
which proves \eqref{eq-disloc-main}.

Integrating \eqref{eq-vsp-rk18} over the time interval $[0,1]$ and applying  (\ref{eq:P1})
we immediately deduce inequality \eqref{eq-disloc-ell}.
\qed

\subsection{Proof of Theorem \ref{thm-intro-main-1}} \label{sec:proof-theorem-1.2ii}

We need the following control on the constant $b=b(g,f)$.

\begin{lemma} \label{lem:constant_control}
The constant $b(g,f)$ remains bounded when $\sup _{t \in [0,1] } |f_t|_4$ and $|g|_3$ remain bounded.
\end{lemma}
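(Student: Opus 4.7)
My plan is to bound each of the five quantities appearing in the definition
$$b(g,f) = \max\Bigl(\alpha |g|_2,\; \alpha |g\circ\phi^{-1}|_2,\; \alpha|g\cdot g\circ\phi^{-1}|_2,\; \beta\int_0^1 |f_t, g\circ\phi_t^{-1}|_{1,3}\,dt,\; \gamma|g,g\circ\phi^{-1}|_2\Bigr)$$
separately, using only standard ODE regularity together with the chain and product rules on $(M,\rho)$. The first term is trivial since $|g|_2 \leq |g|_3$ is already assumed bounded, so the issue is to control the derivatives of $\phi_t$ and $\phi_t^{-1}$ and to push them through the composition and product with $g$.

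The main step is an ODE estimate for the flow. The Hamiltonian vector field satisfies $|X_{f_t}|_k \leq C_k |f_t|_{k+1}$ for a constant $C_k$ depending only on $(M,\omega,\rho)$. Hence $\sup_{t\in[0,1]}|f_t|_4$ gives a uniform $C^3$ bound on the time-dependent vector field $X_{f_t}$. Applying Gr\"onwall's inequality to the variational equations of the ODE $\dot\phi_t = X_{f_t}\circ\phi_t$ up to order $3$ (a standard argument, uniform on the compact manifold $M$), I obtain a constant $K$, depending only on $\sup_{t\in[0,1]}|f_t|_4$, such that
$$\sup_{t\in[0,1]} \bigl(|\phi_t|_3 + |\phi_t^{-1}|_3\bigr) \leq K\;.$$

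With this in hand, the remaining terms follow from routine multivariable calculus. The Fa\`a di Bruno / chain rule gives a bound
$$|g\circ \psi|_k \;\leq\; P_k\bigl(|g|_1,\dots,|g|_k,\,|\psi|_1,\dots,|\psi|_k\bigr)$$
for a universal polynomial $P_k$, which applied with $\psi=\phi^{-1}$ or $\psi=\phi_t^{-1}$ and $k\leq 3$ controls $|g\circ\phi^{-1}|_2$ and $|g\circ\phi_t^{-1}|_j$ for $j\leq 3$ in terms of $|g|_3$ and $K$. The Leibniz rule $|gh|_k \leq C_k \sum_{j=0}^k |g|_j|h|_{k-j}$ then bounds $|g\cdot g\circ\phi^{-1}|_2$ and the paired norm $|g,g\circ\phi^{-1}|_2 = \sum_{j=0}^2 |g|_j|g\circ\phi^{-1}|_{2-j}$. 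Finally, the integrand
$$|f_t, g\circ\phi_t^{-1}|_{1,3} = |f_t|_1|g\circ\phi_t^{-1}|_3 + |f_t|_2|g\circ\phi_t^{-1}|_2 + |f_t|_3|g\circ\phi_t^{-1}|_1$$
is bounded uniformly in $t$ by the chain rule (since $|f_t|_j\leq |f_t|_4$ for $j\leq 3$ and the derivatives of $g\circ\phi_t^{-1}$ up to order $3$ are controlled as above), so its integral over $[0,1]$ is bounded.

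Since the bound on $\phi_t^{-1}$ is the only nonelementary input and is itself standard, there is no real obstacle here; the only place to be careful is to make sure the order of derivatives matches up, namely that $|f_t|_4$ (one more than what appears in $|\cdot|_{1,3}$) is indeed what is needed to control $|\phi_t^{-1}|_3$ via the variational equations. Everything else is an application of the chain and product rules.
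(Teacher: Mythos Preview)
Your proposal is correct and follows essentially the same approach as the paper's proof: bound $X_{f_t}$ in $\Cl^3$ using $|f_t|_4$, apply the higher-order variational equations together with Gr\"onwall to get a uniform $\Cl^3$ bound on $\phi_t$ and $\phi_t^{-1}$, and then use the chain and product rules to control each term in $b(g,f)$. The paper's own proof is only a sketch along precisely these lines, so your write-up is in fact more detailed than what appears there.
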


\medskip
\noindent{\bf Sketch of the proof:} The Hamiltonian vector field of $f_t$ is bounded with $3$ derivatives.
This implies, by the standard ODE arguments (the higher order variational equation combined with the Gr\"{o}nwall's lemma, cf. \cite[Section 2.4]{T12}), that the Hamiltonian flow $\phi_t$ is also bounded with $3$ derivatives. Combined with the fact that $g$ is bounded with $3$ derivatives,
this readily yields the statement of the lemma.
\qed

\medskip

For $u \in \Cl^ 3 (M)$ with $\int_M ud\mu=1$ define the quantum state $\theta_{\hbar} := Q_{\hbar} ( u d\mu)$.  Set $v_{\hbar} = u  / R_{\hbar}$ so that $\theta_{\hbar} = T_{\hbar} ( v_{\hbar})$. Let $g_{\hbar} = v_{\hbar} / \| v_{\hbar} \| $.
Let $(f_{t,\hbar})$ be a family of Hamiltonians which are uniformly (in $\hbar$) bounded in $\Cl ^3 (M)$.

We will apply the result of the previous section to the $\hbar$ dependent data $g_{\hbar}$ and $(f_{t,\hbar})$. Set $b_\hbar = b  ( g_{\hbar}, f_{t, \hbar})$ and $c_{\hbar}= c ( f_{t,\hbar})$. Then
\begin{enumerate}
\item  It follows from (\ref{eq:estime_diag}) that $| g_{\hbar} |_3$ is bounded independently of $\hbar$. Thus by Lemma \ref{lem:constant_control}, there exists $c >0$ such that the constants defined in (\ref{eq-b-defin}) and (\ref{eq-c-defin}) satisfy for any $\hbar \in \Lambda$, $b_{\hbar} \leqslant c$ and $c_{\hbar} \leqslant c$.
\item
By (\ref{eq:P1}), there exists $c'>0$ such that when $\hbar$ is sufficiently small, we have $\| \theta_{\hbar}  \| \geqslant \hbar^n /c'$.
\end{enumerate}
Assume that the Schr\"odinger flow generated by $T_{\hbar} ( f_{t, \hbar})$ $a_\hbar$-dislocate $\theta_\hbar $. Proposition \ref{prop-fidel} yields $\Ga_{q, \hbar} \leqslant c' a_\hbar \hbar^{-n}$. By Corollary \ref{cor-main-disloc}, the time-one-map of $f_{t,\hbar}$ displaces the set $\{ g_{\hbar}  > A \}$ with
\begin{gather} \label{eq:estimA}
 A = \sqrt{\Gamma_{q,\hbar}+ 3b_\hbar\hbar}  \leqslant \sqrt{c'a_{\hbar} \hbar^{-n} + 3 c \hbar } \; .
\end{gather}
Let $\lambda >0$. When $\hbar$ and $a_\hbar \hbar^{-n}$ are sufficiently small, $ A \leqslant  \lambda /(2 \|u \|)$, so that $\{ g_{\hbar} > \lambda /(2\|u\|) \}$ is contained in $\{ g_{\hbar} > A \}$. Furthermore, by (\ref{eq:estime_diag}), when $\hbar$ is sufficiently small $g_{\hbar} \geqslant u/(2\|u \|)$, so that $\{ u > \lambda \}$ is contained in $\{ g_{\hbar} > \lambda/ (2\|u\|) \}$. Thus $\{ u > \lambda \}$  is displaced by the time-one-map of $f_{t,\hbar}$. This implies by (\ref{eq-disloc-ell}) that
\begin{equation}
\label{eq-disloc-unitscale}
\ell_q(F_\hbar) \geqslant \ell_{cl} - c \hbar \geqslant  e_M( \{ u > \lambda\}) - c \hbar\;.
\end{equation}
This completes the proof of Theorem \ref{thm-intro-main-1}.
\qed

The above proof highlights our assumption on the $\Cl^3$-smoothness of the classical density $u$.
We use it in order to apply sharp  remainder estimates for the semi-classical remainders.


\section{Microsupport}\label{sec-MS}

For a continuous section $s$ of a vector bundle $E \rightarrow X$, note that the support of $s$ is $\bigcap \{ x /  f (x) = 0 \}$ where the intersection is taken over all $f \in \Cl^0 (M)$ such that $f s=0$. Replacing the zero sets of functions by the characteristic sets of pseudodifferential operators, H\"ormander defined the wave front of a distribution \cite{Ho_acta1}. Using Toeplitz operators, we can define similarly a microsupport of a family $(\Psi_\hbar \in \Hilb_\hbar, \; \hbar \in \Lambda)$ as was done in \cite{oim1} for K\"ahler manifold.
Here we will consider more generally the microsupport of a mixed state. Consider a family $(\Hilb_{\hbar} , \hbar \in \Lambda)$ of Berezin quantizations satisfying  (\ref{eq:estime_diag}), (\ref{eq:estime_outside_diag}) and (\ref{eq:expansion_product}).

\subsection{Definitions and examples}

Let $\theta$ be a family $( \theta_\hbar \in \cS (\Hilb_\hbar), \; \hbar \in \Lambda )$.
By definition, the {\em microsupport} of $\theta$ is the subset $\op{MS} ( \theta)$  of $M$ such that:  $x \notin \op{MS} ( \theta)$ if and only if there exists $f \in \Ci (M)$  such that $f(x) \neq 0$ and
$$T_\hbar (f) \theta_\hbar = \bigo ( \hbar ^{\infty}).$$
Here the $\bigo(\hbar^{\infty})$ is for the operator     norm. Since the dimension of $\Hilb_\hbar$ is a $\bigo ( \hbar^{-n})$ we could equivalently work with trace norm or Hilbert-Schmidt norm.

 The microsupport is certainly closed. It can not be empty. Indeed, if it were empty, then by the following lemma, $\theta_{\hbar} = \bigo ( \hbar ^{\infty})$ contradicting the fact $\| \theta_\hbar \|_{tr} = 1$.

\begin{lemma} \label{lem:cut}
For any family $\theta$ and function $u \in \Ci (M )$ such that the microsupport of $\theta$ and  the support of $u$ are disjoint,  $T_\hbar (u) \theta_\hbar = \bigo ( \hbar^{\infty})$.
\end{lemma}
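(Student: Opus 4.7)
The plan is to reduce the global statement to a local one by a partition of unity, and then establish the local assertion through a symbol-type construction in the Toeplitz calculus. Since $\supp u$ is compact in $M$ and disjoint from the closed set $\op{MS}(\theta)$, for each $x \in \supp u$ the definition of microsupport furnishes $f_x \in \Ci(M)$ with $f_x(x) \neq 0$ and $T_\hbar(f_x)\theta_\hbar = \bigo(\hbar^\infty)$; set $W_x = \{f_x \neq 0\}$. By compactness, finitely many $W_{x_1},\ldots,W_{x_N}$ cover $\supp u$, and a partition of unity subordinate to this cover yields a decomposition $u = \sum_{i=1}^{N} u_i$ with each $\supp u_i$ a compact subset of $W_{x_i}$. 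Since $T_\hbar$ is linear, it suffices to prove the following local claim: \emph{if $f \in \Ci(M)$ satisfies $T_\hbar(f)\theta_\hbar = \bigo(\hbar^\infty)$ and $g \in \Ci(M)$ has compact support in $W := \{f \neq 0\}$, then $T_\hbar(g)\theta_\hbar = \bigo(\hbar^\infty)$.}

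For the local claim I would construct, inductively, a sequence $h_0, h_1, h_2, \ldots \in \Ci(M)$, each supported in a fixed compact subset $K$ of $W$ containing $\supp g$, such that for every $N$,
$$T_\hbar\Bigl(\sum_{k=0}^{N} \hbar^k h_k\Bigr)\, T_\hbar(f) = T_\hbar(g) + \bigo(\hbar^{N+1}).$$
Expanding the left-hand side via (\ref{eq:expansion_product}) and matching powers of $\hbar$ gives the recursion $h_0 f = g$ and $h_m f = -\sum_{\ell=1}^{m} B_{\ell}(h_{m-\ell}, f)$ for $m \geq 1$. Since $f$ is nowhere zero on $W$ and the bidifferential operators $B_\ell$ are local, the right-hand side at each stage is supported in $K$; dividing by $f$ on $W$ and extending by zero therefore produces a globally smooth function $h_m$ with $\supp h_m \subset K$, because $K$ is uniformly separated from $\partial W$.

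With the $h_k$ in hand, applying the identity above to $\theta_\hbar$ and using the uniform bound $\bigl\|T_\hbar\bigl(\sum_{k=0}^{N} \hbar^k h_k\bigr)\bigr\|_{op} \leq \bigl\|\sum_{k=0}^{N} \hbar^k h_k\bigr\|$ from (\ref{eq:P1}) to control the prefactor yields
$$T_\hbar(g)\theta_\hbar = T_\hbar\Bigl(\sum_{k=0}^{N} \hbar^k h_k\Bigr)\, T_\hbar(f)\,\theta_\hbar + \bigo(\hbar^{N+1}) = \bigo(\hbar^{N+1}),$$
since $T_\hbar(f)\theta_\hbar = \bigo(\hbar^\infty)$. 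As $N$ is arbitrary, this forces $T_\hbar(g)\theta_\hbar = \bigo(\hbar^\infty)$, and summing over the partition of unity gives the lemma. The main point to watch throughout is the bookkeeping of supports in the recursion, to ensure at every step that the division by $f$ is legitimate and produces a smooth global function; the rest is the routine matching of asymptotic expansions.
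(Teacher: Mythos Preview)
Your proof is correct and follows essentially the same parametrix-plus-partition-of-unity approach as the paper. The only cosmetic difference is that the paper first builds an approximate inverse $g_0$ with $g_0 f = 1$ on a neighborhood of $x_0$ and then uses a disjoint-supports step to pass to $u$, whereas you build $h_0 = g/f$ directly and absorb that step into the recursion; both arguments rest on the expansion (\ref{eq:expansion_product}).
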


\begin{proof}
 Let $x_0 \notin \op{MS}( \theta)$. We will prove that for any function $u$ supported in a sufficiently small neighborhood of $x_0$, we have $T_\hbar (u) \theta_\hbar = \bigo ( \hbar^{\infty})$. It is then easy to deduce the lemma with a partition of unity argument.

Let $ f\in \Ci (M)$ be such that $f(x_0) \neq 0$ and $T_\hbar (f) \theta_\hbar = \bigo ( \hbar^{\infty})$. Choose a neighborhood $U$ of $x_0$ such that $| f | \geqslant c >0$ on $U$. Let $\chi \in \Ci ( \R, \R)$ be equal to $0$ on $]-\infty, 1/2]$ and to $1$ on $[1, \infty[$. The function $g_0 =  \chi  ( |f|/ c)  f^{-1}  $ is smooth, supported in $\{ |f | \geqslant c/2\}$ and  $g_0 f =1$ on $U$. Using (\ref{eq:expansion_product}), we construct by successive approximations a sequence $(g_{\ell})$ of $\Ci (M)$ such that for any $N \in \N$,
$$ T_\hbar( g_0 + \hbar g_1 + \ldots + \hbar^N g_N ) T_\hbar(f) = T_\hbar( g_0 f ) + \bigo ( \hbar^{N +1} )$$
Since $T_\hbar( f) \theta_\hbar = \bigo ( \hbar^{\infty})$, we have $T_{\hbar} (g_0 f) \theta_\hbar = \bigo ( \hbar^{\infty})$. For any function $ u \in \Ci (M)$ supported in $U$, $u$ and $g_0 f-1 $ have disjoint supports, so that $T_\hbar ( u) = T_\hbar ( u) T_\hbar ( g_0 f) + \bigo ( \hbar^{\infty})$, hence  $T_\hbar ( u ) \theta_\hbar = \bigo ( \hbar^{\infty})$.
\end{proof}

\begin{exam}
\begin{enumerate}
\item
The microsupport of the normalized coherent state at $x$ is $\{ x\}$. Indeed, it follows from  (\ref{eq:estime_outside_diag}) that the microsupport is contained in $\{ x \}$ and the microsupport can not be empty.
\item The microsupport of a Lagrangian state (see Section \ref{subsec-lasd} below) whose symbol does not vanish is the associated Lagrangian manifold. This follows readily from the characterization of the microsupport  given in Proposition \ref{prop:pure_state}.
\end{enumerate}
\end{exam}

\begin{prop} \label{prop:pure_state}
\noindent
\begin{enumerate}
\item For any pure state $\psi =    ( \psi_\hbar \in \Hilb_\hbar, \; \hbar \in \Lambda)$, for any $x \in M$,  $ x \notin \op{MS} ( \psi)$ if and only if there exists a neighborhood $U$ of $x$, such that for any $N$, $| \psi_\hbar (y) | = \bigo  (\hbar^{N})$ uniformly in $y \in U$.
\item  Any non empty closed set is the microsupport of some pure state.
\end{enumerate}
\end{prop}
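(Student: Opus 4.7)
For part (1), I would prove both implications using the reproducing identity $\psi_\hbar(y) = \langle \psi_\hbar, e_{y,\hbar}\rangle$ together with the observation that for a pure state $\theta_\hbar = \psi_\hbar \psi_\hbar^*$ one has $\|T_\hbar(f)\theta_\hbar\|_{op} = \|T_\hbar(f)\psi_\hbar\|$. For the forward direction, assume $x \notin \op{MS}(\psi)$. Since $\op{MS}(\psi)$ is closed I pick nested open neighborhoods $W \Subset W' \Subset V$ of $x$ with $\overline{V} \cap \op{MS}(\psi) = \emptyset$, and a cutoff $\chi \in \Ci(M)$ with $\supp \chi \subset V$ and $\chi \equiv 1$ on $W'$. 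Lemma \ref{lem:cut} gives $\|T_\hbar(\chi)\psi_\hbar\| = \bigo(\hbar^\infty)$. For $y \in W$ I split
$$\psi_\hbar(y) = \langle T_\hbar(\chi)\psi_\hbar, e_{y,\hbar}\rangle + \langle \psi_\hbar, T_\hbar(1-\chi)e_{y,\hbar}\rangle.$$
Cauchy--Schwarz and (CS1) bound the first term by $\|T_\hbar(\chi)\psi_\hbar\|\, R_\hbar(y)^{1/2} = \bigo(\hbar^\infty)$. For the second term I expand $T_\hbar(1-\chi)e_{y,\hbar}$ via \eqref{eq:defToep_eq} as an integral whose kernel is a multiple of $\langle e_{x'}, e_{y,\hbar}\rangle$, and apply (CS2) to the disjoint closed sets $\overline{W}$ and $\supp(1-\chi) \subset M \setminus W'$ to obtain a uniform $\bigo(\hbar^\infty)$ bound.

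For the reverse direction, assume $|\psi_\hbar(y)| = \bigo(\hbar^\infty)$ uniformly for $y$ in some neighborhood $U$ of $x$, and pick real $f \in \Ci(M)$ with $\supp f \subset U$ and $f(x) \neq 0$. I compute
$$\|T_\hbar(f)\psi_\hbar\|^2 = \langle T_\hbar(f)^2 \psi_\hbar, \psi_\hbar \rangle = \sum_{\ell = 0}^{N} \hbar^\ell \langle T_\hbar(B_\ell(f,f)) \psi_\hbar, \psi_\hbar\rangle + \langle r_{N,\hbar}\psi_\hbar, \psi_\hbar\rangle$$
using \eqref{eq:expansion_product}, with $\|r_{N,\hbar}\|_{op} = \bigo(\hbar^{N+1})$. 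A direct computation from \eqref{eq-povm-rawn} identifies the Husimi measure of $\psi_\hbar$ as $|\psi_\hbar|^2 d\mu$, so each inner product equals $\int B_\ell(f,f) |\psi_\hbar|^2 d\mu$. Since $B_\ell$ is bidifferential, $\supp B_\ell(f,f) \subseteq \supp f \subset U$, so this integral is $\bigo(\hbar^\infty)$ by hypothesis. Taking $N$ arbitrarily large gives $\|T_\hbar(f)\psi_\hbar\|^2 = \bigo(\hbar^\infty)$, hence $x \notin \op{MS}(\psi)$.

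For part (2), I would construct $\psi_\hbar$ as a normalized finite superposition of coherent states based at a countable dense subset $\{x_n\} \subset K$. Setting $\psi_\hbar := c_\hbar \sum_{i=1}^{N_\hbar} \xi_{x_i,\hbar}$ with $N_\hbar \to \infty$ slowly, the off-diagonal inner products $\langle \xi_{x_i,\hbar}, \xi_{x_j,\hbar}\rangle$ are $\bigo(\hbar^\infty)$ by (CS2), so $\|\psi_\hbar\|^2 = c_\hbar^2 N_\hbar(1 + o(1))$ and I take $c_\hbar \sim N_\hbar^{-1/2}$. To show $\op{MS}(\psi) \subseteq K$, for $y \notin K$ apply (CS2) with $X = K$ and $Y$ a compact neighborhood of $y$ disjoint from $K$ to get $|\xi_{x_i,\hbar}(y')| = \bigo(\hbar^\infty)$ uniformly, so $|\psi_\hbar(y')| \leq c_\hbar N_\hbar \cdot \bigo(\hbar^\infty) = \bigo(\hbar^\infty)$, and part (1) excludes $y$. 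To show $\op{MS}(\psi) \supseteq K$, I note that $\op{MS}(\psi)$ is closed, so it suffices by density to see that each $x_i \in \op{MS}(\psi)$: evaluating at $x_i$, the self term $|\xi_{x_i,\hbar}(x_i)| = R_\hbar(x_i)^{1/2} \sim \hbar^{-n/2}$ from (CS1) dominates the $\bigo(\hbar^\infty)$ off-diagonal contributions, so $|\psi_\hbar(x_i)| \gtrsim c_\hbar \hbar^{-n/2}$ decays at most polynomially, forcing $x_i \in \op{MS}(\psi)$ via part (1).

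The main obstacle is calibrating the growth of $N_\hbar$: the $\bigo(\hbar^\infty)$ bounds from (CS2) hold pair-by-pair with implicit constants that depend on the disjoint closed sets and hence degenerate as the minimum separation $\min_{i \neq j \leq N_\hbar} d(x_i, x_j)$ shrinks to zero when $\{x_n\}$ is dense in $K$. In the K\"ahler setting this is handled by the sharp Gaussian bound $|\langle e_x, e_y\rangle| \lesssim \hbar^{-n} \exp(-c\, d(x,y)^2/\hbar)$ provided by the asymptotic expansion of the reproducing kernel mentioned in the paper; in the general symplectic setting of \cite{oim_symp}, the analogous expansion gives the same quantitative control. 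Alternatively, a diagonal-extraction argument over the finite blocks $\{x_1, \ldots, x_N\}$ picks $N_\hbar \to \infty$ slowly enough that the estimates from (CS2) remain uniform across the $\binom{N_\hbar}{2}$ index pairs.
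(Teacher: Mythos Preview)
Your proof of part (1) is correct. The forward direction is essentially the paper's argument. For the reverse direction you invoke the product expansion \eqref{eq:expansion_product} and the bidifferential nature of the $B_\ell$, which works, but the paper's route is a one-liner: from the definition $\langle T_\hbar(f)s,t\rangle=\langle fs,t\rangle$ one has $\|T_\hbar(f)\psi_\hbar\|\le\|f\psi_\hbar\|$, and $\|f\psi_\hbar\|=\bigo(\hbar^\infty)$ directly from the pointwise hypothesis. No expansion is needed.

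For part (2) your construction is genuinely different from the paper's, and more delicate. The paper does \emph{not} superpose coherent states. Instead it picks a countable set $D=\{x_k\}$ whose set of limit points is exactly the target closed set $C$, chooses a decreasing sequence $\hbar_k\to 0$, and for $\hbar\in(\hbar_{k+1},\hbar_k]$ sets $\Psi_\hbar=\xi_{x_k,\hbar}$, a \emph{single} normalized coherent state. This completely sidesteps the calibration problem you flag: there are no off-diagonal inner products to control, and both inclusions follow directly from \eqref{eq:estime_diag}, \eqref{eq:estime_outside_diag} and part (1). Your superposition approach can be made to work via the diagonal extraction you sketch (choosing $N_\hbar\to\infty$ slowly enough that the finitely many \eqref{eq:estime_outside_diag} bounds in play remain uniform), but it requires exactly the bookkeeping you describe, whereas the moving-coherent-state trick needs none. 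What your construction buys, if carried out, is that each $\psi_\hbar$ visibly ``sees'' a growing portion of $K$; what the paper's buys is brevity.
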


The characterization of the first assertion was actually the definition of \cite{oim1}. The proof of the second assertion is surprisingly much simpler than the proof of the similar result for wave front (cf. Theorem 8.1.4 in \cite{Ho1}).

\begin{proof}
If $\psi_\hbar = \bigo ( \hbar^{\infty})$ on a neighborhood $U$ of $x$, then for any $f$ supported in $U$, $\| f \psi_\hbar \| = \bigo ( \hbar^{\infty})$. Since $\| T_{\hbar} (f) \psi_{\hbar} \| \leqslant \| f \psi_{\hbar} \|$, it follows that  $x \notin \op{MS} ( \psi)$. Conversely, if $x \notin \op{MS}(\psi)$, by Lemma \ref{lem:cut}, for any $f$ supported in a sufficiently small neighborhood of $x$, $T_\hbar(f) \psi_\hbar = \bigo ( \hbar^{\infty})$. Choose $f$ so that it is equal to $1$ on a neighborhood of $x$. Then for any $y $ sufficiently close to $x$, by (\ref{eq:estime_outside_diag}), we have $|e_{y, \hbar} (z) | = \bigo( \hbar^{\infty})$ when $f(z) \neq 1$. So $f e_{y,\hbar} = e_{y,\hbar} + \bigo ( \hbar^{\infty})$ uniformly with respect to $y$. Consequently,
$$   \psi_\hbar (y) =  \langle \psi_\hbar , e_{y,\hbar} \rangle = \langle \psi_\hbar , f e_{y,\hbar} \rangle + \bigo ( \hbar^{\infty}) = \langle f \psi_\hbar , e_{y,\hbar} \rangle + \bigo ( \hbar^{\infty}) = \bigo ( \hbar^{\infty})$$ which proves the first assertion.

As in the proof of Theorem 8.1.4 of \cite{Ho1}, note that for any non empty closed set $C$ of $M$, there exists a countable set $D$ of $M$ whose set of limit points is exactly $C$. Let $(x_k, \; k \in \N^* )$ be an enumeration of $D$. Choose a decreasing sequence $(\hbar_k )$ of $\Lambda$ converging to $0$. For any $\hbar \in \Lambda$, set  $y_{\hbar} = x_k$ if $\hbar \in ] \hbar_{k+1}, \hbar_k] $.
Define $\Psi_{\hbar} = e_{y_\hbar, \hbar} / \|e_{y_\hbar, \hbar} \| $. Using the first assertion of the proposition and estimates (\ref{eq:estime_diag}), (\ref{eq:estime_outside_diag}),  we prove that $\op{MS} ( \Psi) = C$.
\end{proof}

\subsection{Microsupport and classical measures}\label{subsec-mscm}

We will now describe the microsupport of $\theta$ in terms of the corresponding semiclassical measure $\nu_{\hbar}$ given by $\int_M f d \nu_{\hbar}  = \op{trace} (T_{\hbar} (f) \theta_\hbar) $.

\begin{prop} \label{prop:micr-class-meas}
 $x \notin \op{MS} ( \theta)$ if and only if $x$ has a neighborhood $U$ such that $\nu_\hbar (U) = \bigo ( \hbar^{\infty})$
\end{prop}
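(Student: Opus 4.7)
For the forward direction, suppose $x \notin \op{MS}(\theta)$. Choose an open neighborhood $V$ of $x$ disjoint from $\op{MS}(\theta)$, a smaller open $U$ with $\con{U} \subset V$, and a cutoff $\chi \in \Ci(M)$ with $0 \leqslant \chi \leqslant 1$, $\chi \equiv 1$ on $U$, and $\supp \chi \subset V$. Since $\supp \chi$ is disjoint from $\op{MS}(\theta)$, Lemma \ref{lem:cut} yields $T_\hbar(\chi)\theta_\hbar = \bigo(\hbar^\infty)$ in operator norm, so
$$\nu_\hbar(U) \leqslant \int_M \chi \, d\nu_\hbar = \op{trace}\bigl(T_\hbar(\chi)\theta_\hbar\bigr) \leqslant \dim \Hilb_\hbar \cdot \|T_\hbar(\chi)\theta_\hbar\|_{op} = \bigo(\hbar^\infty),$$
using $\chi \geqslant 0$ and $\chi \equiv 1$ on $U$ in the first inequality and absorbing the factor $\dim \Hilb_\hbar = \bigo(\hbar^{-n})$ coming from (\ref{eq:estime_diag}) into the final $\bigo(\hbar^\infty)$.

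For the converse, suppose $\nu_\hbar(U) = \bigo(\hbar^\infty)$ and choose $f \in \Ci(M)$ with $f \geqslant 0$, $f(x) > 0$, and $\supp f \subset U$. The goal is to prove $T_\hbar(f)\theta_\hbar = \bigo(\hbar^\infty)$ in operator norm, which forces $x \notin \op{MS}(\theta)$. The starting observation is that for any $g \in \Ci(M)$ with $\supp g \subset U$,
$$\bigl| \op{trace}\bigl(T_\hbar(g)\theta_\hbar\bigr) \bigr| = \Bigl| \int_M g \, d\nu_\hbar \Bigr| \leqslant \|g\| \, \nu_\hbar(U) = \bigo(\hbar^\infty).$$
Expanding $T_\hbar(f)^2 = \sum_{\ell=0}^{N} \hbar^\ell T_\hbar(B_\ell(f,f)) + \bigo(\hbar^{N+1})$ via (\ref{eq:expansion_product}) and noting that the $B_\ell$ are bidifferential, so each $B_\ell(f,f)$ is again supported in $\supp f \subset U$, the displayed bound applies term by term and gives
$$\op{trace}\bigl(T_\hbar(f)^2 \theta_\hbar\bigr) = \bigo(\hbar^\infty).$$

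The remaining step, which I expect to be the main obstacle since the hypothesis controls only expectations and not operator norms, is a positivity trick converting the trace bound into an operator-norm bound on $T_\hbar(f)\theta_\hbar$. Since $\|\theta_\hbar\|_{op} \leqslant \op{trace}\theta_\hbar = 1$, one has $\theta_\hbar^2 \leqslant \theta_\hbar$; combining this with $T_\hbar(f)^2 \geqslant 0$ and cyclicity of the trace yields
$$\|T_\hbar(f)\theta_\hbar\|_{HS}^2 = \op{trace}\bigl(T_\hbar(f)^2 \theta_\hbar^2\bigr) \leqslant \op{trace}\bigl(T_\hbar(f)^2 \theta_\hbar\bigr) = \bigo(\hbar^\infty).$$
Since the operator norm is dominated by the Hilbert--Schmidt norm, $T_\hbar(f)\theta_\hbar = \bigo(\hbar^\infty)$ in operator norm, and as $f(x) > 0$ this gives $x \notin \op{MS}(\theta)$, completing the proof.
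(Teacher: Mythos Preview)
Your proof is correct and follows the same overall architecture as the paper's: cutoff plus Lemma \ref{lem:cut} for the forward direction, and for the converse a trace bound on $\op{trace}(T_\hbar(f)^2\theta_\hbar)$ combined with the positivity trick $\theta_\hbar^2 \leqslant \theta_\hbar$ to pass to $\|T_\hbar(f)\theta_\hbar\|_{HS}^2$ and hence to the operator norm.

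The one difference is how you reach $\op{trace}(T_\hbar(f)^2\theta_\hbar) = \bigo(\hbar^\infty)$. You invoke the full product expansion (\ref{eq:expansion_product}) and the support property of the $B_\ell$ to reduce to traces of Toeplitz operators with symbols supported in $U$. The paper instead uses the elementary inequality $T_\hbar(f^2) \geqslant T_\hbar(f)^2$ (immediate from $\langle T_\hbar(f^2)s,s\rangle = \|fs\|^2 \geqslant \|T_\hbar(f)s\|^2$), which gives directly
\[
\op{trace}\bigl(T_\hbar(f)^2\theta_\hbar\bigr) \leqslant \op{trace}\bigl(T_\hbar(f^2)\theta_\hbar\bigr) = \int_M f^2\, d\nu_\hbar \leqslant \|f\|^2\,\nu_\hbar(U).
\]
This avoids (\ref{eq:expansion_product}) altogether and so needs only the bare definition of Toeplitz operators and the Husimi measure. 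Your route works fine, but the paper's shortcut is worth knowing: it is both shorter and shows that this direction requires no semiclassical input beyond the positivity of the Toeplitz map.
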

The converse of this equivalence seems to be new, even for pure states.
\begin{proof} If $x$ is not in the microsupport, choose a non negative smooth function $f$ equal to 1 on a neighborhood $U$ of $x$ and such that the support of $f$ is disjoint of $\op{MS} ( \theta)$. Then $T_\hbar ( f) \theta_\hbar = \bigo ( \hbar^{\infty})$ and
$$\nu_\hbar (U) \leqslant \int f d\nu_\hbar \leqslant \| T_\hbar ( f) \theta_\hbar \|_{tr}.$$
So $\nu_\hbar ( U) = \bigo ( \hbar^{\infty})$.

Conversely, if $\nu_\hbar (U) = \bigo ( \hbar^{\infty})$ and $f$ is smooth, real valued  and supported in $U$, then $\op{trace} ( T_\hbar (f^2) \theta_\hbar ) = \int f^2 d\nu_\hbar = \bigo ( \hbar^{\infty})$. To complete the proof, we will show that this implies $T_\hbar (f) \theta_\hbar = \bigo ( \hbar^{\infty})$.

Since  $\| T_{\hbar} (f) s \| \leqslant \| f s\|$,
$\langle T_{\hbar} (f^2) s, s \rangle = \langle f^2 s, s\rangle = \| fs \|^2 \geqslant \| T_{\hbar}(f) s \| ^2 = \langle T_{\hbar} (f)^2 s, s \rangle$. Hence we have proved that
\begin{gather}\label{eq:ineq1}
T_{\hbar} ( f^2 ) \geqslant T_{\hbar} (f) ^2
\end{gather}
$\theta_\hbar$ being non negative and having trace 1, we have
\begin{gather} \label{eq:ineq2}
\theta_{\hbar} \geqslant \theta_\hbar^2
\end{gather}
We also note that if $A \geqslant B$ and $C \geqslant 0$, then
\begin{gather} \label{eq:ineq3}
 \op{trace} (AC) = \op{trace} ( C^{1/2} A C^{1/2}) \geqslant \op{trace} ( C^{1/2} B C^{1/2}) = \op{trace} (BC) \; .
\end{gather}
Denote $D_{\hbar} = T_{\hbar} (f) \theta_{\hbar}$. Combining (\ref{eq:ineq3}) with (\ref{eq:ineq1}) and (\ref{eq:ineq2}), we get
\begin{xalignat*}{2}
 \op{trace} ( T_{\hbar} (f^2 ) \theta_\hbar ) & \geqslant \op{trace} ( T_{\hbar}(f)^2 \theta_{\hbar} )
 \geqslant \op{trace} ( T_{\hbar} (f)^2 \theta_\hbar^2) \\
& = \op{trace} ( D_{\hbar} D_{\hbar}^* ) \geqslant \| D_{\hbar} \|_{op}^2
\end{xalignat*}
Since $ \op{trace} (T_{\hbar} (f^2 ) \theta_\hbar) = \bigo ( \hbar ^{\infty})$, we conclude that $ D_{\hbar}  = \bigo (\hbar^{ \infty})$.
\end{proof}

\begin{cor} \label{cor}
If $\nu_\hbar$ converges weakly to some measure $\nu$, in the sense that for any $f \in \Ci (M)$, $ \int f d \nu_\hbar \rightarrow \int f d\nu$, then the support of $\nu$ is contained in the microsupport of $\theta$.
\end{cor}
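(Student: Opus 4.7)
The strategy is to prove the contrapositive: if $x \notin \op{MS} ( \theta)$, then $x \notin \op{supp}(\nu)$. This is essentially a direct combination of Proposition \ref{prop:micr-class-meas} (which characterizes non-membership in the microsupport via rapid decay of the Husimi mass on a neighborhood) with the definition of weak convergence.

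In more detail, the plan is as follows. First, pick any $x \notin \op{MS} ( \theta)$. By Proposition \ref{prop:micr-class-meas}, there exists an open neighborhood $U$ of $x$ such that $\nu_{\hbar} ( U) = \bigo ( \hbar^{\infty})$. Next, I would choose an open neighborhood $V$ of $x$ with $\overline{V} \subset U$, together with a test function $f \in \Ci(M)$ satisfying $0 \leqslant f \leqslant 1$, $f \equiv 1$ on $V$, and $\op{supp} f \subset U$ (this is possible by a standard bump-function construction). Since $f$ is supported in $U$ and bounded by $1$, one has
\[
 0 \leqslant \int_M f \, d\nu_{\hbar} \leqslant \nu_{\hbar} (U ) = \bigo ( \hbar^{\infty}),
\]
so the left-hand side tends to $0$ as $\hbar \to 0$.

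By the assumed weak convergence, $\int f \, d\nu = \lim_{\hbar \to 0} \int f \, d\nu_\hbar = 0$. Because $f \geqslant 0$ and $f \equiv 1$ on $V$, this forces $\nu ( V) = 0$. Hence $x$ has an open neighborhood of $\nu$-measure zero, so $x \notin \op{supp} ( \nu)$. This establishes $\op{supp} ( \nu) \subseteq \op{MS} ( \theta)$.

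There is no serious obstacle here: the only substantive input is Proposition \ref{prop:micr-class-meas}, which has already been proved; the rest is the routine observation that a sequence of positive measures whose mass on an open set goes to $0$ cannot have a weak limit charging that set. The only minor care needed is to interpose $V \subset \overline{V} \subset U$ so that the cutoff $f$ is genuinely compactly supported inside $U$ and the inequality $\int f\, d\nu_\hbar \leqslant \nu_\hbar(U)$ is valid.
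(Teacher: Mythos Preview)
Your proof is correct and is exactly the natural argument implied by the paper: the paper states this as an immediate corollary of Proposition~\ref{prop:micr-class-meas} without giving a separate proof, and your write-up is precisely the routine verification one would supply.
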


For pure states in the Bargmann space, this corollary is proposition 29 in \cite{Combescure_Robert}. The microsupport is called the frequency set in this book.

From Proposition \ref{prop:micr-class-meas}, we can also deduce that the microsupport of the quantum state $Q_\hbar ( \tau) = \int P_{\hbar,x} d\tau (x) $ is the support of $\tau$.  We will consider more generally a classical state $\tau$ depending on $\hbar$.

\begin{prop} \label{prop:quantum_state_microsupport}
Let $( \tau_{\hbar}, \; \hbar \in \Lambda)$ be a family of Borel probability measures and $\theta = (Q_\hbar ( \tau_{\hbar}))$. Then $x \notin \op{MS} ( \theta) $ if and only if $x$ has a neighborhood $U$ such that $\tau_{\hbar} (U)= \bigo ( \hbar^{\infty})$.
\end{prop}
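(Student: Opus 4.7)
The plan is to establish the two implications separately by exploiting the duality $\op{trace}(T_\hbar(f) Q_\hbar(\tau)) = \int_M \mathcal{B}_\hbar(f)\, d\tau$ from (\ref{eq:duality}) together with the leading-order behaviour $\mathcal{B}_\hbar(f) = f + \bigo(\hbar)|f|_2$ of the Berezin transform given by (\ref{eq:Berezin_transform}).

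For the direction $x \notin \op{MS}(\theta) \Rightarrow \tau_\hbar$ is $\bigo(\hbar^\infty)$ near $x$, I would pick nested open neighbourhoods $V_0 \subset \overline{V_0} \subset U$ of $x$ with $\overline{U}$ disjoint from the closed set $\op{MS}(\theta)$, and a non-negative cutoff $\chi \in \Ci(M)$ with $\chi \equiv 1$ on $V_0$ and $\supp\chi \subset U$. Lemma \ref{lem:cut} then yields $\|T_\hbar(\chi)\theta_\hbar\|_{op} = \bigo(\hbar^\infty)$; since $\theta_\hbar$ has rank at most $\dim \Hilb_\hbar = \bigo(\hbar^{-n})$, the trace norm is also $\bigo(\hbar^\infty)$. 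Applying (\ref{eq:duality}) gives $\int_M \mathcal{B}_\hbar(\chi)\,d\tau_\hbar = \bigo(\hbar^\infty)$. Now $\chi \geq 0$ forces $T_\hbar(\chi) \geq 0$ and hence $\mathcal{B}_\hbar(\chi) \geq 0$ everywhere, while (\ref{eq:Berezin_transform}) ensures $\mathcal{B}_\hbar(\chi) \geq 1 - C\hbar$ on $V_0$; discarding the (nonnegative) contribution outside $V_0$ delivers $\tau_\hbar(V_0) \leq (1-C\hbar)^{-1} \bigo(\hbar^\infty) = \bigo(\hbar^\infty)$.

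For the converse, assume $\tau_\hbar(U) = \bigo(\hbar^\infty)$ on some neighbourhood $U$ of $x$, and choose $f \in \Ci(M)$ with $f(x) \neq 0$ and $\supp f$ contained in a smaller neighbourhood $V$; then fix an intermediate open $W$ with $\supp f \subset W \subset \overline{W} \subset U$. Writing $\theta_\hbar = \int P_y \, d\tau_\hbar(y)$, I would split
$$T_\hbar(f)\theta_\hbar = \int_W T_\hbar(f) P_y \, d\tau_\hbar(y) + \int_{M\setminus W} T_\hbar(f) P_y \, d\tau_\hbar(y).$$
The first piece has operator norm $\leq \|f\| \tau_\hbar(W) = \bigo(\hbar^\infty)$ at once. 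For the second, $P_y$ being the projector onto $\xi_y$ gives $\|T_\hbar(f) P_y\|_{op} = \|T_\hbar(f)\xi_y\|$, and expanding via (\ref{eq-povm-rawn}) together with (\ref{eq:estime_diag}) and the off-diagonal decay (\ref{eq:estime_outside_diag}) applied to the disjoint closed sets $\supp f$ and $M \setminus W$ yields $\|T_\hbar(f)P_y\|_{op} = \bigo(\hbar^\infty)$ uniformly in $y \in M \setminus W$. Hence $T_\hbar(f)\theta_\hbar = \bigo(\hbar^\infty)$, so $x \notin \op{MS}(\theta)$.

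The step I expect to be the main obstacle is the first direction: the tempting route through Proposition \ref{prop:micr-class-meas} relating the Husimi measure $\nu_\hbar$ of $\theta_\hbar$ to $\tau_\hbar$ via (\ref{eq:Berezin_transform}) fails, since the Berezin transform only approximates the identity to order $\hbar$, never to order $\hbar^\infty$. The trick that rescues the argument is sign-based: because $\mathcal{B}_\hbar(\chi) \geq 0$ \emph{everywhere}, one needs only a one-sided lower bound near $x$ and can simply throw away the rest of the integral, rather than trying to control it quantitatively.
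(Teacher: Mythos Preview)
Your proof is correct, but it takes a somewhat different route from the paper.

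The paper reduces both directions to Proposition \ref{prop:micr-class-meas} (the Husimi-measure characterisation of the microsupport) and then shows that $\nu_\hbar$-smallness and $\tau_\hbar$-smallness on nested neighbourhoods are equivalent. For this it does \emph{not} use the crude estimate (\ref{eq:Berezin_transform}); instead it derives from (\ref{eq:estime_diag}) and (\ref{eq:estime_outside_diag}) the sharp localisation estimates $\mathcal{B}_\hbar(f)=\bigo(\hbar^\infty)$ outside any neighbourhood of $\supp f$ and $\mathcal{B}_\hbar(f)=1+\bigo(\hbar^\infty)$ on the set where $f\equiv 1$, and then uses the duality $\langle\nu_\hbar,f\rangle=\langle\tau_\hbar,\mathcal{B}_\hbar(f)\rangle$ symmetrically for both implications. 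So your remark that ``the route through Proposition \ref{prop:micr-class-meas} fails'' is only half right: it fails if one insists on using (\ref{eq:Berezin_transform}), but the paper rescues it by improving the Berezin-transform estimate via (\ref{eq:estime_outside_diag}).

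Your first direction is essentially the paper's argument with a nice economy: where the paper invokes the $\bigo(\hbar^\infty)$ estimate $\mathcal{B}_\hbar(f)=1+\bigo(\hbar^\infty)$ on $V$, you observe that the cruder lower bound $\mathcal{B}_\hbar(\chi)\geq 1-C\hbar$ from (\ref{eq:Berezin_transform}) already suffices once one exploits global nonnegativity of $\mathcal{B}_\hbar(\chi)$. Your second direction is genuinely more direct: instead of comparing $\nu_\hbar$ and $\tau_\hbar$, you split the defining integral of $Q_\hbar(\tau_\hbar)$ and bound $T_\hbar(f)P_y$ for $y$ away from $\supp f$ straight from (\ref{eq:estime_outside_diag}). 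This bypasses Proposition \ref{prop:micr-class-meas} altogether. Both approaches ultimately lean on the same off-diagonal decay (\ref{eq:estime_outside_diag}); the paper packages it through the Berezin kernel, you apply it to the coherent projectors.
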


\begin{proof}
Let $\nu_\hbar$ be the semiclassical measure of $Q_{\hbar} ( \tau_{\hbar})$. By Proposition \ref{prop:micr-class-meas}, it suffices to show that for any open sets $U$, $V$ such that the closure $\op{cl}(V)$ of $V$ is contained in $U$, $\tau_{\hbar} (U) = \bigo( \hbar^{\infty})$ implies that $\nu_{\hbar} (V) = \bigo ( \hbar ^{\infty})$ and $\nu_{\hbar} (U)  = \bigo( \hbar^{\infty})$ implies that $\tau_{\hbar} (V) = \bigo ( \hbar ^{\infty})$

Choose a non negative function $f$ such that $ V \subset \{ f =1 \}$ and $\op{supp} f \subset U$. Recall that $ \langle \nu_{\hbar} , f \rangle = \langle \tau_{\hbar},  \mathcal{B}_{\hbar} (f) \rangle$ where $\langle \cdot, \cdot \rangle$ stands for the pairing between Borel measures and functions.
We will use the following properties of the Berezin transform $\mathcal{B}_\hbar$. First, by definition (\ref{eq:def_Berezin_transform}), we certainly have for any $g$
\begin{gather} \label{eq:est_berezin_transf}
\| \mathcal{B}_{\hbar} (g) \| \leqslant \|g \|
\end{gather}
Rewrite
$$ \mathcal{B}_{\hbar} ( g) (x) = \int_M g(y) \frac{ |\langle e_x, e_y \rangle |^2}{\| e_x \|^2} d \mu (y).$$
By (\ref{eq:estime_diag}) and (\ref{eq:estime_outside_diag}), for any disjoint closed sets $C_1$, $C_2$,
\begin{gather} \label{eq:estim_kern_Berezin}
 \int_{C_2}  \frac{ |\langle e_x, e_y \rangle |^2}{\| e_x \|^2 } d \mu (y) = \bigo ( \hbar^{\infty}) \qquad \forall x \in C_1
\end{gather}
with a uniform $\bigo$. We deduce the following estimates
\begin{gather} \label{eq:est_berezin_transf_2}
\mathcal{B}_{\hbar} (f)  = \bigo(\hbar^{\infty}) \qquad  \text{ uniformly on } M \setminus U \\
\label{eq:est_berezin_transf_3}
\mathcal{B}_{\hbar}(f) = 1 + \bigo ( \hbar^{\infty} ) \qquad \text{ uniformly on $V$}.
\end{gather}
To prove (\ref{eq:est_berezin_transf_2}), set $C_1 = \op{supp} f$ and $C_2 = M \setminus U$ in (\ref{eq:estim_kern_Berezin}) and use that the support of $f$ is contained in $U$. To prove (\ref{eq:est_berezin_transf_3}), choose an open set $W$ such that $\op{cl} (V) \subset W \subset \{ f = 1\}$. By (\ref{eq:estim_kern_Berezin}) with $C_1 = \op{cl} (V)$ and $C_2 = M \setminus W$, we get that
$$\mathcal{B}_{\hbar}( f) (x) = \int_{W}  |\langle e_x, e_y \rangle |^2/\| e_x \|^2  d \mu (y) + \bigo(\hbar^{\infty}). $$
We can also apply this last equality to $f=1$ by choosing $U = M$. Since the right-hand side does not depend on $f$, we get $\mathcal{B}_{\hbar}(f)(x) = \mathcal{B}_{\hbar} (1) (x) + \bigo ( \hbar^{{\infty}}) $, for any $x \in \op{cl} (V)$. Since $\mathcal{B}_{\hbar}(1) =1$, this proves (\ref{eq:est_berezin_transf_3}).

Now, assume that $\tau_{\hbar} ( U) = \bigo ( \hbar^{ \infty})$. Then (\ref{eq:est_berezin_transf}) and (\ref{eq:est_berezin_transf_2}) yield that $\langle \tau_{\hbar}, \mathcal{B}_{\hbar}(f) \rangle = \bigo ( \hbar^{\infty})$. Since $ \nu_{\hbar} (V) \leqslant \langle \nu_{\hbar} , f \rangle = \langle \tau_{\hbar}, \mathcal{B}_{\hbar}(f) \rangle$, we get that  $\nu_{\hbar} (V) = \bigo ( \hbar^{\infty})$.

Conversely, assume that $\nu_{\hbar} (U) = \bigo ( \hbar^{ \infty})$ so that $\langle \nu_{\hbar}, f \rangle = \bigo  (\hbar^{\infty})$. Then $\mathcal{B}_{\hbar}(f)$ being non negative, we have by (\ref{eq:est_berezin_transf_3}),
$$ \tau_{\hbar} (V) \leqslant \langle \tau_{\hbar}, \mathcal{B}_{\hbar}(f) \rangle + \bigo ( \hbar^{ \infty}) = \langle \nu_{\hbar} , f \rangle + \bigo( \hbar^{ \infty}) = \bigo ( \hbar^{\infty}) $$
which completes the proof.
\end{proof}

\subsection{Dislocation and propagation}\label{subsec-dp}
Here we establish basic facts about dislocation and propagation of microsupports which
were used for the proofs of the results of Section \ref{subsec-dyd} of the introduction.

\begin{prop} \label{prop:product_microsupport}
For any two states $\theta$, $\sigma$ with disjoint microsupports, we have $\theta_\hbar \sigma_\hbar = \bigo(\hbar^{\infty})$ and $\Phi(\theta,\sigma) = \bigo(\hbar^{\infty})$.
\end{prop}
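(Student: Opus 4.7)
\medskip
\noindent
The plan is to separate the two disjoint closed microsupports by a smooth cutoff and then apply Lemma~\ref{lem:cut} to both sides. Since $M$ is compact and $\op{MS}(\theta)$, $\op{MS}(\sigma)$ are disjoint closed subsets, we choose a real-valued $u \in \Ci(M)$ with $u \equiv 1$ on a neighborhood of $\op{MS}(\theta)$ and $\op{supp}(u) \cap \op{MS}(\sigma) = \emptyset$. Applying Lemma~\ref{lem:cut} to $u$ (whose support misses $\op{MS}(\sigma)$) yields $T_\hbar(u)\sigma_\hbar = \bigo(\hbar^\infty)$, while applying it to $1-u$ (which vanishes on a neighborhood of $\op{MS}(\theta)$) gives $\theta_\hbar = T_\hbar(u)\theta_\hbar + \bigo(\hbar^\infty)$.

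Taking Hermitian adjoints of the last identity, and using that both $\theta_\hbar$ and $T_\hbar(u)$ are Hermitian (the latter because $u$ is real), we obtain the symmetrised form $\theta_\hbar = \theta_\hbar T_\hbar(u) + \bigo(\hbar^\infty)$. Multiplying on the right by $\sigma_\hbar$ and using $\|\theta_\hbar\|_{op}, \|\sigma_\hbar\|_{op} \leq 1$, we conclude
\[
\theta_\hbar \sigma_\hbar \;=\; \theta_\hbar \cdot T_\hbar(u)\sigma_\hbar + \bigo(\hbar^\infty) \;=\; \bigo(\hbar^\infty),
\]
which is the first assertion.

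For the fidelity, the idea is to convert this operator-norm decay into a trace-norm decay, absorbing the polynomial factor $\dim \Hilb_\hbar = \bigo(\hbar^{-n})$ into the super-polynomial estimate. The elementary inequality $\|A\|_{tr} \leq (\op{rank} A)\|A\|_{op}$, combined with $\op{rank}(\theta_\hbar\sigma_\hbar) \leq \dim \Hilb_\hbar$ and the positivity $\op{trace}(\theta_\hbar\sigma_\hbar) = \op{trace}(\sqrt{\sigma_\hbar}\,\theta_\hbar\,\sqrt{\sigma_\hbar}) \geq 0$, gives $\op{trace}(\theta_\hbar\sigma_\hbar) = \bigo(\hbar^\infty)$. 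Recognising $\op{trace}(\theta_\hbar\sigma_\hbar) = \|\sqrt{\theta_\hbar}\sqrt{\sigma_\hbar}\|_{HS}^2$ and applying $\|A\|_{tr} \leq \sqrt{\op{rank} A}\;\|A\|_{HS}$ once more, we arrive at
\[
\Phi(\theta_\hbar,\sigma_\hbar) \;=\; \bigl\|\sqrt{\theta_\hbar}\sqrt{\sigma_\hbar}\bigr\|_{tr} \;\leq\; \sqrt{\dim \Hilb_\hbar}\;\bigl\|\sqrt{\theta_\hbar}\sqrt{\sigma_\hbar}\bigr\|_{HS} \;=\; \bigo(\hbar^\infty).
\]
There is no substantive obstacle here: once the separating cutoff $u$ is in place, the proof is essentially algebraic, relying on Lemma~\ref{lem:cut} and elementary comparisons between Schatten norms on a space of dimension $\bigo(\hbar^{-n})$. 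The only genuine use of the geometry of $M$ is in producing $u$, which is immediate by compactness and the fact that the microsupports are closed.
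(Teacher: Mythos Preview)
Your proof is correct and follows essentially the same route as the paper: separate the microsupports by a smooth cutoff, invoke Lemma~\ref{lem:cut}, and then pass from the operator-norm bound on $\theta_\hbar\sigma_\hbar$ to the fidelity via the same Schatten-norm chain $\Phi \leq \sqrt{\dim\Hilb_\hbar}\,\sqrt{\op{trace}(\theta_\hbar\sigma_\hbar)}$. Your version is in fact marginally cleaner, since using a single cutoff $u$ and applying Lemma~\ref{lem:cut} directly to $\sigma_\hbar$ avoids the paper's intermediate appeal to the product expansion~(\ref{eq:expansion_product}) for $T_\hbar(\bar f)T_\hbar(g)$.
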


\begin{proof} Let $f, g$ be functions with disjoint supports such that $f = 1$ (resp. $g=1$) on a neighborhood of $\op{MS}( \theta)$ (resp. $\op{MS} ( \sigma)$). By Lemma \ref{lem:cut}, $T_\hbar(f) \theta_\hbar = \theta_\hbar +  \bigo ( \hbar^{\infty} )$ and $T_\hbar (g) \sigma_\hbar = \sigma_\hbar + \bigo(\hbar^{\infty})$. Multiplying the adjoint of the first equality by the second one, we get
$$ \theta_\hbar \sigma_\hbar = \theta_\hbar T_\hbar ( \con{f})T_\hbar (g) \sigma_\hbar + \bigo ( \hbar^{\infty}) = \bigo ( \hbar^{\infty})$$ because $ T_\hbar ( \con{f})T_\hbar (g) = \bigo ( \hbar^{\infty})$ by (\ref{eq:expansion_product}).

Next, we use the  following elementary upper bound for fidelity.
For any quantum states $\theta$ and $\sigma$, put $A:= \sqrt{\theta}\sigma\sqrt{\theta}$. Then
\begin{xalignat}{2} \label{eq-fid-upbd}
\begin{split}
\Phi(\theta,\sigma)  = \text{trace}\sqrt{A}  & \leq
\sqrt{\dim \Hilb_\hbar}\cdot  \sqrt{\text{trace}(A)} \\ &  =  \sqrt{\dim \Hilb_\hbar}\cdot \sqrt{\text{trace}(\theta\sigma)} \leq \dim \Hilb_\hbar \cdot \sqrt{||\theta\sigma||_{op}}\;.
\end{split}
\end{xalignat}
Since $\dim \Hilb_\hbar = \bigo(\hbar^{-n})$, we conclude that $\Phi(\theta,\sigma) = \bigo(\hbar^{\infty})$.
\end{proof}

It is a well-known consequence of Egorov Theorem that the Schr\"odinger equation propagates the wave front of any distribution by the Hamiltonian flow of the corresponding classical Hamiltonian. The same arguments applies here. So consider a Hamiltonian $f_t, t \in [0,1]$. Let $\phi$ be the corresponding Hamiltonian diffeomorphism and $U_\hbar$ be the quantum propagator at time 1 generated by $T_{\hbar} ( f_t)$.  From (\ref{eq:egorov_weak}) and Lemma \ref{lem:cut}, we deduce the following result.

\begin{prop}\label{prop:propagation_microsupport}
For any state $\theta $, we have $\op{MS} ( U \theta U^* ) = \phi ( \op{MS} ( \theta))$.
\end{prop}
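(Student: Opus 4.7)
The plan is to establish the inclusion $\op{MS}(U\theta U^*) \subset \phi(\op{MS}(\theta))$ and then deduce the reverse inclusion by symmetry. For the reverse, I will apply the forward statement to the family $\theta' = U\theta U^*$ together with the quantum propagator $U^*$; classically, $U^*$ is the time-one map of the Schr\"odinger flow generated by $-f_{1-t}$, whose underlying Hamiltonian flow is $\phi^{-1}$, so that the forward inclusion applied to $(\theta', U^*)$ reads $\op{MS}(\theta) \subset \phi^{-1}(\op{MS}(U\theta U^*))$, i.e.\ $\phi(\op{MS}(\theta)) \subset \op{MS}(U\theta U^*)$.

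To prove the forward inclusion, I plan to fix $y \notin \phi(\op{MS}(\theta))$, set $x = \phi^{-1}(y) \notin \op{MS}(\theta)$, and choose a neighborhood $V$ of $y$ so small that $\phi^{-1}(V) \cap \op{MS}(\theta) = \emptyset$, together with $f \in \Ci(M)$ satisfying $f(y) \neq 0$ and $\supp(f) \subset V$. The goal is then to show
$$ T_\hbar(f)\, U\theta_\hbar U^* = \bigo(\hbar^\infty), $$
which directly gives $y \notin \op{MS}(U\theta U^*)$.

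For this I would write $T_\hbar(f)\, U\theta_\hbar U^* = U\bigl(U^* T_\hbar(f) U\bigr) \theta_\hbar U^*$ and apply the Egorov expansion \eqref{eq:egorov_weak} to the propagator $U^*$: for every $N \in \N$,
$$ U^* T_\hbar(f) U = T_\hbar(f \circ \phi) + \sum_{\ell=1}^N \hbar^\ell T_\hbar(f_\ell \circ \phi) + \bigo(\hbar^{N+1}), $$
where each $f_\ell$ is supported in $\supp(f) \subset V$, so that both $f \circ \phi$ and $f_\ell \circ \phi$ are supported in $\phi^{-1}(V)$, a set disjoint from $\op{MS}(\theta)$. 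Lemma \ref{lem:cut} then forces $T_\hbar(f \circ \phi)\theta_\hbar = \bigo(\hbar^\infty)$ and $T_\hbar(f_\ell \circ \phi)\theta_\hbar = \bigo(\hbar^\infty)$ for every $\ell$. Since $U$ is unitary, these combine to give $T_\hbar(f)\, U\theta_\hbar U^* = \bigo(\hbar^{N+1})$ for arbitrary $N$, hence $\bigo(\hbar^\infty)$.

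The main obstacle I anticipate is bridging the gap between the naive leading-order Egorov estimate, which would only deliver $\bigo(\hbar)$, and the $\bigo(\hbar^\infty)$ bound that the definition of microsupport requires. This is exactly what the full asymptotic form \eqref{eq:egorov_weak} is designed to handle, the crucial feature being that every correction symbol $f_\ell$ inherits the support of the original symbol $f$. Combined with Lemma \ref{lem:cut}, this support control forces every term in the expansion to annihilate $\theta$ to all orders in $\hbar$, which is what makes the argument work.
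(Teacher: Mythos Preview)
Your argument is correct and is essentially a detailed version of the paper's own proof, which simply says the result follows from the Egorov expansion \eqref{eq:egorov_weak} together with Lemma~\ref{lem:cut}. You have spelled out precisely the support-control step and the symmetry reduction that the paper leaves implicit.
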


\noindent Intuitively speaking, this shows that the microsupport of a propagating quantum state evolves according to
the underlying Hamiltonian flow.

\begin{rem}\label{rem-hbardepham}{\rm
In Section \ref{sec-ddsc}, we will need the following variation of this proposition. Denote by $U_{\hbar}(t)$ the quantum propagator at time $t$ generated by $T_{\hbar} ( f_t)$. Choose a family $(t_{\hbar}, \hbar \in \Lambda)$ converging to $0$ as $\hbar \rightarrow 0$. Then the microsupport of $(U_{\hbar}( t_{\hbar}) \theta_\hbar  U_{\hbar}(t_{\hbar})^*)$ coincide with the microsupport of $\theta$. This result holds more generally for a quantum Hamiltonian of the form $T_{\hbar} ( f_t ( \cdot , \hbar))$ where $f_{t} ( \cdot, \hbar)$ is a familly of time dependent Hamiltonian admitting an asymptotic expansion of the form $ f_{t,0} + \hbar f_{t,1} + \hbar^2 f_{t,2} + \ldots $ whose coefficients $f_{t,\ell}$ are functions in $\Ci (M, \R)$ depending smoothly on $t$. To prove this, one has to extend equation (\ref{eq:egorov_weak}) to time $t \in [0,1]$ and one  uses again Lemma \ref{lem:cut}.}
\end{rem}

\section{Proof of Theorem \ref{thm-main-31}}\label{subsec-thm-main-31-pf}
We start with the following folkloric result.

\medskip
\noindent
\begin{lemma}\label{lem-folk} Let $M$ be a closed symplectic manifold, and
$W \subset M$ be an open subset with $\mu(W) < \mu(M)/2$. Then for every $\epsilon >0$, $\delta >0$
there exists a time-dependent Hamiltonian $g_t$, $t \in [0,1]$ with $\ell_{cl} \leq \delta$ such that
its time-one-map $\psi$  ``displaces $W$ up to measure $\epsilon$":
$\mu(W  \cap\psi(W))\leq \epsilon$.
\end{lemma}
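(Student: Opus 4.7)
The plan is to construct $g_t$ as an autonomous oscillatory Hamiltonian of the form $H_{C,N} := C\sin(N\rho)$ built from an auxiliary function $\rho \in \Ci(M,\R)$. The key observation is that $\ell_{cl}(H_{C,N}) = \|H_{C,N}\| = C$ is sensitive only to the amplitude, while its Hamiltonian vector field $X_{H_{C,N}} = CN\cos(N\rho)\,X_\rho$ has size $CN$ which can be kept macroscopic as $C \to 0$ by taking $N \gg 1/C$. Moreover $\rho$ is conserved by $X_{H_{C,N}}$ (since the latter is a scalar multiple of $X_\rho$), so $\cos(N\rho(x))$ is constant along each trajectory and the time-one map factors cleanly as
\[
\psi(x) \;=\; \phi^\rho_{\,CN\cos(N\rho(x))}(x),
\]
where $\phi^\rho_s$ is the Hamiltonian flow of $\rho$.

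The construction of $\rho$ is where the hypothesis $\mu(W) < \mu(M)/2$ enters. Since $\phi^\rho$ preserves the level sets of $\rho$, what is needed is a $\rho$ whose flow $\phi^\rho_s$ moves $W$ through regions disjoint from itself for a set of times $s$ of large relative measure in $[-A,A]$, for some large $A$. Concretely, the goal is to exhibit a constant $c_W$ such that the arcsine-weighted average of $\mu(W \cap \phi^\rho_s(W))$ over $s \in [-A,A]$ is at most $c_W/A$ for all sufficiently large $A$. By a Moser/Alexander-trick argument exploiting the volume slack $\mu(M) - 2\mu(W) > 0$ together with connectedness of $M$, one can always find such a $\rho$; on manifolds admitting a Hamiltonian circle action, such as $\C P^n$, one may simply take $\rho$ to be a moment-map coordinate, which sweeps through $M$ and makes the averaging property transparent.

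With $\rho$ fixed and $A := CN$, a Riemann--Lebesgue type averaging as $N \to \infty$ with $A$ fixed shows that $\mu(W \cap \psi(W))$ converges to the aforementioned arcsine-weighted average, which is at most $c_W/A$. Given $\epsilon, \delta > 0$, choose first $A$ large enough that $c_W/A < \epsilon/2$, then set $C = \delta$ and $N = A/\delta$, and take $g_t \equiv H_{C,N}$. The Hofer length is then exactly $\delta$, and $\mu(W \cap \psi(W)) \leq \epsilon$ for $N$ sufficiently large, which finishes the proof. The main obstacle in executing this plan on a general closed symplectic manifold is producing the auxiliary function $\rho$ with the required time-averaged dispersion property; once the argument is stripped to the case of $\C P^n$ (as needed for Theorem \ref{thm-main-31} and Example \ref{exam-CP}), the moment-map coordinate provides such a $\rho$ directly.
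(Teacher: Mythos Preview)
Your approach has a genuine gap that cannot be repaired without changing the strategy. The crux of your argument is the claim that one can choose $\rho\in\Ci(M)$ so that the arcsine-weighted average
\[
\frac{1}{\pi}\int_{-A}^{A}\mu\bigl(W\cap\phi^\rho_s(W)\bigr)\,\frac{ds}{\sqrt{A^2-s^2}}
\]
is $\leq c_W/A$ and hence tends to $0$ as $A\to\infty$. This is impossible for \emph{any} autonomous Hamiltonian $\rho$. Since $\phi^\rho_s$ preserves $\mu$, the mean ergodic theorem gives
\[
\lim_{A\to\infty}\frac{1}{2A}\int_{-A}^{A}\mu\bigl(W\cap\phi^\rho_s(W)\bigr)\,ds
=\|P\chi_W\|_{L^2}^2,
\]
where $P$ is the $L^2$-projection onto $\phi^\rho$-invariant functions; the arcsine-weighted average has the same limit. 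But $P\chi_W$ is constant on ergodic components, so by Cauchy--Schwarz $\|P\chi_W\|_{L^2}^2\geq \mu(W)^2/\mu(M)>0$. In fact, because $\rho$ itself is conserved, $P\chi_W$ is at best a function of $\rho$, which typically makes the limit even larger. Thus your averaging mechanism can drive the overlap down to roughly $\mu(W)^2/\mu(M)$ at best, never below an arbitrary $\epsilon$. The $\C P^n$ moment-map example illustrates this: the rotation is periodic, and a rotation-invariant $W$ (or any $W$ meeting every circle orbit in positive measure) has $\mu(W\cap\phi^\rho_s(W))$ bounded away from $0$ for all $s$.

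The paper's proof takes a completely different route and avoids this obstruction. It first uses Moser's theorem to find a volume-preserving (not Hamiltonian) diffeomorphism $S$ that genuinely displaces a union of small balls covering most of $W$; then it invokes a Katok--Ostrover--Wagner lemma to approximate $S$ in measure by a Hamiltonian diffeomorphism $\phi$; finally, and this is where the small Hofer length comes from, it replaces $\phi$ by a map $\psi$ which agrees with $\phi$ on each small ball but is generated by a Hamiltonian that has been recentered (subtracting the value at a point of each ball) and cut off near the ball's trajectory. The recentering trick makes the Hamiltonian small in sup-norm because the balls are small, yet the time-one map still moves each ball as $\phi$ does. Your oscillatory construction attempts to get small $\ell_{cl}$ with large displacement in one stroke, but the conservation of $\rho$ forces recurrence that defeats the goal.
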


\medskip
\noindent
We shall need a number of auxiliary statements.
By a {\it ball} in $M$ we mean an embedded open ball with smooth boundary.
Choose any Riemannian metric on $M$ with the distance function $d$.

\begin{lemma}\label{lem-1-mos} Let $Z \subset M$ be an open subset. For every $c>0,\epsilon >0$ there exists a collection $X_1,...,X_N \subset Z$ of balls of diameter $< c$ with pairwise disjoint closures such that $$\mu (Z \setminus \sqcup_j X_j) \leq \epsilon\;.$$
\end{lemma}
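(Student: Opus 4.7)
The plan is to combine the Vitali covering theorem applied to the symplectic measure $\mu$ with a mild shrinking of the resulting finite family, so that their closures become disjoint. Since $M$ is closed, $\overline Z$ is compact and the chosen Riemannian metric $\rho$ has a uniform positive injectivity radius $\iota>0$ on a neighborhood of $\overline Z$. I may assume $c<2\iota$ (any smaller $c$ only strengthens the conclusion); then every $\rho$-geodesic ball of diameter $<c$ centered in $\overline Z$ is an embedded open ball with smooth spherical boundary, as required by the convention on balls. The collection
\[
\mathcal{F}=\{B_{\rho}(x,r)\,:\,x\in Z,\ 2r<c,\ \overline{B_{\rho}(x,r)}\subset Z\}
\]
is then a Vitali cover of $Z$ consisting of balls in the sense of the paper.

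Next, the symplectic volume $\mu$ is a doubling Radon measure on $\overline Z$: in any chart it is given by a smooth positive density against Lebesgue measure while $\rho$ is locally bi-Lipschitz to the Euclidean distance. Standard Vitali theory therefore supplies a countable pairwise disjoint subfamily $\{B_{\rho}(x_i,r_i)\}_{i\ge 1}\subset\mathcal{F}$ with $\mu\bigl(Z\setminus\bigsqcup_i B_{\rho}(x_i,r_i)\bigr)=0$. Countable additivity of $\mu$ then lets me choose $N$ so large that
\[
\mu\Bigl(Z\setminus\bigsqcup_{i=1}^{N} B_{\rho}(x_i,r_i)\Bigr)<\epsilon/2.
\]

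It remains to mildly shrink these finitely many balls so that their closed versions become pairwise disjoint. For $\eta>0$ put $X_i^\eta:=B_{\rho}(x_i,r_i-\eta)$. Since the open balls $B_\rho(x_i,r_i)$ lie inside a domain of injectivity radius size and are pairwise disjoint, the standard geodesic argument gives $\rho(x_i,x_j)\ge r_i+r_j$ for $i\neq j$, hence $\rho(x_i,x_j)>(r_i-\eta)+(r_j-\eta)$ for every $\eta>0$, and the closed balls $\overline{X_i^\eta}$ are pairwise disjoint. By continuity of the $\mu$-volume of a geodesic ball with respect to its radius (the bounding sphere has $\mu$-measure zero), $\mu(X_i^\eta)\to\mu(B_{\rho}(x_i,r_i))$ as $\eta\to 0^+$. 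Choosing $\eta>0$ sufficiently small, the balls $X_i:=X_i^\eta$ are embedded open balls with smooth boundary, of diameter $<c$, contained in $Z$, with pairwise disjoint closures, and satisfy
\[
\mu\Bigl(Z\setminus\bigsqcup_{i=1}^N X_i\Bigr)<\epsilon,
\]
which is the desired collection.

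The only slightly non-routine point is the invocation of a Vitali-type theorem in this Riemannian/symplectic setting; I expect it to be the main (and essentially only) obstacle, and it is overcome by noting that the Vitali property holds for any doubling Radon measure on a separable metric space, which in turn reduces via coordinate charts to the classical Euclidean Vitali theorem thanks to the local bi-Lipschitz equivalence between $(\rho,\mu)$ and (Euclidean distance, Lebesgue measure).
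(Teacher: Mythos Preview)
Your proof is correct and complete, but it follows a genuinely different route from the paper's own argument.

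The paper proceeds by \emph{triangulation}: after shrinking $Z$ to have smooth boundary (via Sard's lemma), one takes a smooth triangulation of $M$ by simplices of diameter $<c$, keeps those simplices lying strictly inside $Z$, and then removes a thin collar of each simplex boundary to obtain the balls $X_i$. The measure estimate comes from the fact that only a thin collar of $\partial Z$ and thin collars of the simplex boundaries are lost.

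Your approach is purely measure-theoretic: you invoke the Vitali covering theorem for the doubling measure $\mu$ to extract a countable disjoint family of geodesic balls exhausting $Z$ in measure, truncate to finitely many, and then shrink radii slightly so that the closures separate. The key non-obvious step---that disjointness of the open balls already forces $\rho(x_i,x_j)\ge r_i+r_j$---is justified by the geodesic (length-space) property of a compact Riemannian manifold, and the shrinking then makes the inequality strict.

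Each approach has its merits. The paper's triangulation argument is more hands-on and avoids appealing to the Vitali machinery, at the cost of the auxiliary smoothing of $\partial Z$ and some bookkeeping. Your argument is cleaner from a measure-theoretic standpoint and generalizes verbatim to any smooth volume form on a closed Riemannian manifold; the only external input is the Vitali theorem for doubling measures, which, as you note, reduces locally to the classical Euclidean case.
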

\begin{proof}
Fix a sufficiently small $c > 0$, and consider a triangulation of $M$ by smooth simplices (i.e., images of the standard simplex under a diffeomorphism) of diameter $< c$. Slightly shrinking $Z$ we can assume without loss of generality that $\partial Z$ is smooth. Indeed, $Z$ admits a smooth proper bounded from below function
\cite[Proposition 2.28]{Lee}, say $u: Z \to \R$. By Sard's lemma, there exists a sequence of regular values of $u$, $c_i \to +\infty$. The domains $\{u < c_i\}$ possess smooth boundaries and exhaust $Z$, so we replace
$Z$ by one of them.

Pick all the simplices lying strictly inside $Z$. Their union, say $Q$, covers the complement in $Z$ to the collar neighborhood of $\partial Z$
of width $c$ and hence $\mu(Z \setminus Q) = o(1)$ as $c \to 0$. Next, for every simplex $\Delta_i \subset Q$, $i=1,\dots,N$ remove a tiny collar neighbourhood of its boundary.
We get a topological ball $X_i$ of diameter $< c$ and with $\mu(\Delta_i \setminus X_i) < c/N$.
Thus $$\mu (Z \setminus \sqcup X_i) < c/N \cdot N + o(1) = o(1)\;,$$
and $\text{diam} (X_i) = o(1)$, so we are done by choosing $c$ sufficiently small.
\end{proof}

\medskip
\noindent
A Hamiltonian
diffeomorphism is the time-one map of the flow of a (time-dependent) Hamiltonian function.
All Hamiltonians below are assumed to be $\Ci$-smooth.

\begin{lemma}\label{lem-2-mos}
Let $\phi$ be any Hamiltonian diffeomorphism of $M$, and $Z \subset M$
be an open subset. Then for all $\epsilon>0,\delta>0$ there exists a Hamiltonian
$g_t$ with $\ell_{cl} \leq \delta$ and a subset $U \subset Z$ so that $\mu ( Z \setminus U )< \epsilon$
and the time-one-map $\psi$ of $g_t$ coincides with $\phi$ on $U$.
\end{lemma}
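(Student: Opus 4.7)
The plan is to decompose $Z$ into many small balls via Lemma~\ref{lem-1-mos}, then for each ball construct a local Hamiltonian whose time-one map reproduces $\phi$ on that ball but whose sup-norm is of order (ball diameter), and finally combine these into a single $g_t$ with the required properties. The essential trick is that a time-only additive constant can be subtracted from the generator $f_t$ of $\phi$ without changing the flow; this produces, after a local cutoff, a Hamiltonian whose values near the trajectory are controlled by the ball size rather than by $\|f_t\|$.

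Concretely, fix $f_t$ with $\phi_1 = \phi$ and set $K := \sup_t |f_t|_1$. Apply Lemma~\ref{lem-1-mos} with a diameter $c$ to be chosen, obtaining balls $X_1, \dots, X_N \subset Z$ with pairwise disjoint closures and $\mu(Z \setminus \sqcup_i X_i) < \epsilon$; set $U := \sqcup_i X_i$. For each $i$ pick $x_i \in X_i$ and put $\tilde f_t^{(i)} := f_t - f_t(\phi_t(x_i))$, which still generates $\phi_t$ but vanishes on the orbit of $x_i$, so $|\tilde f_t^{(i)}(y)| \le K \, \dist(y, \phi_t(x_i))$. Let $\chi_t^{(i)}$ be a smooth cutoff equal to $1$ on an open tube around $\phi_t(X_i)$ and supported in a slightly wider tube of radius $O(c)$, and set $h_t^{(i)} := \chi_t^{(i)} \tilde f_t^{(i)}$. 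Then $\|h_t^{(i)}\| = O(Kc)$, and because $\chi_t^{(i)} \equiv 1$ near $\phi_t(X_i)$, the Hamiltonian vector field of $h_t^{(i)}$ agrees with $X_{f_t}$ on that neighborhood; consequently the flow of $h_t^{(i)}$ starting at $x \in X_i$ tracks $\phi_t(x)$ for all $t \in [0,1]$, and in particular its time-one map equals $\phi$ on $X_i$.

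Finally set $g_t := \sum_i h_t^{(i)}$. Because $\phi_t$ is a diffeomorphism, the sets $\phi_t(X_i)$ are pairwise disjoint at every $t$, and by compactness of $[0,1]$ their minimum separation over $t$ is a uniformly positive multiple of that of the $X_i$'s themselves. Thus, by taking the cutoff tubes thin enough, the supports of the $h_t^{(i)}$ become pairwise disjoint at each fixed $t$, so $\|g_t\| = \max_i \|h_t^{(i)}\| = O(Kc)$, yielding $\ell_{cl}(g) = O(Kc) \le \delta$ once $c$ is small. The time-one map $\psi$ of $g_t$ coincides with that of $h_t^{(i)}$ on each $X_i$, hence with $\phi$ on $U$.

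The hardest step is this last geometric combining: the tubes must be wider than the displaced balls $\phi_t(X_i)$ (of diameter $\lesssim c \cdot \sup_t \|d\phi_t\|$), while a tight covering from Lemma~\ref{lem-1-mos} may admit arbitrarily small ball-separations. I expect to handle this via a hierarchical refinement --- a spaced packing at one scale, then filling in the interstices at geometrically smaller scales --- which simultaneously maintains the coverage $\mu(Z \setminus U) < \epsilon$ and keeps the total Hofer contribution bounded by a convergent geometric series of order $Kc$.
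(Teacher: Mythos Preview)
Your approach is essentially the paper's: subtract the time-dependent constant $f_t(\phi_t(x_i))$, cut off near the flowed ball, and sum. The core argument is correct, but your final paragraph manufactures a difficulty that is not there, and the proposed hierarchical refinement is unnecessary.

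The concern is that the cutoff tube around $\phi_t(X_i)$ must be ``wider than the displaced ball'' while the ball separations may be tiny. But you are conflating the \emph{diameter} of the tube with the \emph{shell width} $r$ (the distance by which the tube exceeds $\phi_t(X_i)$). The tube is $\{y:\dist(y,\phi_t(X_i))<r\}$; it contains $\phi_t(X_i)$ for \emph{any} $r>0$, and the cutoff $\chi_t^{(i)}$ can be taken $\equiv 1$ on $\phi_t(X_i)$ and supported in this tube regardless of how small $r$ is (smoothness imposes no lower bound on $r$, since we only need the sup norm of $h_t^{(i)}$, not its derivatives). Once the finitely many balls $X_1,\dots,X_N$ with pairwise disjoint closures are fixed, compactness of $[0,1]$ gives
\[
d_1:=\min_{i\ne j,\,t\in[0,1]}\dist\bigl(\phi_t(X_i),\phi_t(X_j)\bigr)>0.
\]
Now choose $r<\min(d_1/2,\,c)$. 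The tubes are then pairwise disjoint, and on each tube $|\tilde f_t^{(i)}(y)|\le K\bigl(\diam\phi_t(X_i)+r\bigr)\le K(Lc+c)$, so $\|g_t\|=\max_i\|h_t^{(i)}\|\le K(L+1)c<\delta$ once $c<\delta/(K(L+1))$. This is exactly how the paper proceeds (with $L$ bounding both the Lipschitz constant of $f_t$ and of $\phi_t$, and $c=\delta/(2L^2)$); no multi-scale packing is needed.
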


\begin{proof}
Let $f_t$ be a Hamiltonian generating a flow $\phi_t$ with $\phi_1=\phi$. Choose $L>0$ exceeding the Lipschitz constants of $f_t$ and $\phi_t$ for all $t$.
Let $X_1,...,X_N \subset Z$ be balls of diameter $< \delta/(2L^2)$ with pairwise disjoint closures such that
$$\mu (Z \setminus \sqcup_j X_j) \leq \epsilon\;,$$ see Lemma \ref{lem-1-mos}.
Pick a point $x_j \in X_j$ and observe that
$$\max_{X_j} |f_t(\phi_t(x)) - f_t(\phi_t(x_j))| \leq L^2 \cdot \frac{\delta}{2L^2} = \delta/2\;.$$
For every $t$, let $g^{(j)}_t$ be the cut-off of the function $f_t(x) - f_t(\phi_t(x_j))$ such that
$g^{(j)}_t = f_t(x) - f_t(\phi_t(x_j))$ in a tiny neighbourhood of $\phi_t(X_j)$,
$g^{(j)}_t$ vanishes outside a slightly bigger neighbourhood and the supports of $g^{(j)}_t$'s
are pair-wise disjoint for different $j$'s. One can always perform the cut-off so that
$g^{(j)}_t$ is smooth in $t$ and $\|g^{(j)}_t\| < \delta$. Observe that the Hamiltonian
flow $\psi^{(j)}_t$ generated by $g^{(j)}_t$ coincides with $\phi_t$ on $X_j$, and
the supports of $g^{(j)}_t$ for different $j$'s are disjoint. Thus, the product
$\psi_t:= \prod_j \psi^{(j)}_t$ is generated by the Hamiltonian
$g_t = \sum_j g^{(j)}_t$
and $$\|g_t\| \leq \max_j \|g^{(j)}_t\| < \delta\;.$$ Moreover,
$\psi_1|_{X_j}= \phi|_{X_j}$ for all $j$. The statement of the lemma follows with $U = \sqcup_j X_j$.
\end{proof}

\medskip
\noindent
\begin{lemma}\label{lem-3-mos}
Let $Y_1,...,Y_K \subset M$ be a collection of  balls with pair-wise disjoint closures,
and let $Y:= \sqcup_j Y_j$. Assume that $\mu(Y) < \mu(M)/2$.  Then there exists a volume-preserving
diffeomorhism $S$ of $M$ with $$\text{Closure}(Y) \cap S(\text{Closure}(Y) )= \emptyset\;.$$
\end{lemma}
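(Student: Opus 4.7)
The plan is to build $S$ in three phases: first, locate a ``target'' configuration of $K$ pairwise disjoint open balls $Y'_1,\dots,Y'_K$ inside $M\setminus\overline{Y}$ with $\mu(Y'_j)=\mu(Y_j)$ and smooth pairwise disjoint closures; second, produce a (not necessarily volume-preserving) compactly supported diffeomorphism $\phi\colon M\to M$ mapping each $\overline{Y_j}$ onto $\overline{Y'_j}$; and third, correct $\phi$ to a volume-preserving diffeomorphism $S$ via a Moser-type argument designed so that $S$ still agrees with $\phi$ on $\overline{Y}$. The hypothesis $2\mu(Y)<\mu(M)$ is used precisely to ensure there is enough room in $M\setminus\overline{Y}$ for the target configuration.

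For the first phase, I would pick the balls $Y'_j$ one at a time. Since $M$ is a closed symplectic manifold of dimension $2n\geq 2$, it is connected and removing finitely many open balls keeps it connected; at the $i$-th step the remaining complement $M\setminus(\overline{Y}\cup\overline{Y'_1}\cup\cdots\cup\overline{Y'_{i-1}})$ has measure $\mu(M)-\mu(Y)-\sum_{j<i}\mu(Y_j)$, which strictly exceeds $\sum_{j\geq i}\mu(Y_j)\geq\mu(Y_i)$, and a standard differential-topology argument (grow an embedded ball across charts and adjust its radius by the intermediate value theorem) produces an embedded open ball $Y'_i$ with smooth boundary and volume exactly $\mu(Y_i)$. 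For the second phase I would appeal to the iterated disc theorem (Palais--Cerf isotopy extension) for connected smooth manifolds of dimension $\geq 2$, which says that any two ordered $K$-tuples of disjoint embedded open balls are related by an ambient isotopy; this yields the required compactly supported $\phi$.

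The third phase is the one where care is needed, and is the main obstacle: a direct application of Moser's theorem to $\mu$ and $\phi^{*}\mu$ on $M$ produces some volume-preserving diffeomorphism, but gives no control over the image of $\overline{Y}$. To circumvent this I would first modify $\phi$ locally so that it becomes volume-preserving on $\overline{Y}$: on the closed ball $\overline{Y_j}$ the forms $\mu$ and $\phi^{*}\mu$ have the same total integral $\mu(Y_j)=\mu(Y'_j)$, and because $\overline{Y_j}$ is contractible one can choose a primitive of $\phi^{*}\mu-\mu$ vanishing on $\partial Y_j$; the standard relative Moser argument on a ball then yields a diffeomorphism $\tau_j$ of $\overline{Y_j}$ fixing $\partial Y_j$ pointwise and satisfying $\tau_j^{*}\phi^{*}\mu=\mu$. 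After pre-composing $\phi$ with the $\tau_j$'s (extended by the identity outside $\overline{Y}$), the new $\phi$ still sends $\overline{Y_j}$ onto $\overline{Y'_j}$ and now satisfies $\phi^{*}\mu=\mu$ on all of $\overline{Y}$. A relative Moser theorem on $M$ applied to $\mu$ and $\phi^{*}\mu$, which now agree on $\overline{Y}$ and have equal total integrals, then produces $\psi\colon M\to M$ with $\psi|_{\overline{Y}}=\mathrm{id}$ and $\psi^{*}\phi^{*}\mu=\mu$; setting $S:=\phi\circ\psi$ gives a volume-preserving diffeomorphism with $S(\overline{Y})=\phi(\overline{Y})=\bigsqcup_{j}\overline{Y'_j}$, which is disjoint from $\overline{Y}$ by construction.
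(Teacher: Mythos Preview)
Your strategy is sound, but Phase~3a has a genuine smoothness gap. Even if you succeed in choosing a primitive $\alpha$ of $\phi^{*}\mu-\mu$ that vanishes on $\partial Y_j$ (contractibility alone does not give this; you need the zero-integral condition, and even then the natural relative-cohomology argument only yields $i^{*}\alpha=0$, i.e.\ the Moser field tangent to the boundary), the resulting $\tau_j$ merely fixes $\partial Y_j$ pointwise: its normal derivative along $\partial Y_j$ need not be the identity, so the extension of $\tau_j$ by $\mathrm{id}$ across $\partial Y_j$ is in general only continuous, not smooth. The standard repair is a two-step Moser: first flatten $\phi^{*}\mu$ to $\mu$ on a collar of $\partial Y_j$ (a local normal-form statement with no obstruction), then run your argument with a primitive supported in the interior, producing $\tau_j=\mathrm{id}$ near $\partial Y_j$. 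Alternatively, bypass Phase~3a entirely by building volume preservation into Phase~2: Moser on $D^{2n}$ lets you choose parametrisations $D^{2n}\to\overline{Y_j}$ and $D^{2n}\to\overline{Y'_j}$ pulling $\mu$ back to the same form, and then the ambient disc-theorem isotopy carrying one embedding to the other already satisfies $\phi^{*}\mu=\mu$ on $\overline{Y}$. Your Phase~1 is also somewhat informal (an embedded ball of prescribed volume in a connected open set of larger measure is true but not immediate from ``growing across charts''); one clean route is again Moser, redistributing mass so a fixed small ball acquires the target volume.

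With these fixes your proof goes through, and it is genuinely different from the paper's. The paper never works on closed balls or invokes the disc theorem. It enlarges $Y\subset Y'\subset Y''$ with $\mu(Y'')<\mu(M)/2$, uses only transitivity of $\mathrm{Diff}(M)$ on $K$-tuples of \emph{points} to find some $F$ with $F(Y')\subset\mathrm{Int}(M\setminus Y'')$, and then runs two Moser corrections on the closed manifold $M$ via an auxiliary volume form $\sigma$ equal to $\mu$ on $Y''$ and to $F_{*}\mu$ on $F(Y')$. The open buffer $Y''\setminus Y'$ replaces your boundary manipulations, and the intermediate form $\sigma$ replaces your Phase~1 volume matching, so the paper never needs balls of prescribed volume. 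Your approach trades these devices for stronger inputs (iterated disc theorem, exact volume matching) and is more direct once the extension issue is handled; the paper's approach is more elementary in its topological input but slightly trickier in the Moser bookkeeping.
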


\begin{proof} The proof is based on a standard application of Moser's argument \cite{Moser}, whose crux
is given by the following statement. Let $V$ be a closed connected manifold equipped with
a pair of volume forms $\mu, \nu$ with $\mu(V)=\nu(V)$. Then there exists a diffeomorphism $T$ of $V$ with $T_*\mu=\nu$. Furthermore, assume that $A \subset M$ is a closed (possibly, disconnected) domain
with smooth boundary such that $M \setminus A$ is connected. If $\mu=\nu$ on $A$,  Moser's method guarantees that one can choose $T$ to be the identity on $A$.

We build the desired diffeomorphism $S$ as follows. Enlarge slightly the balls from $Y$ to get
open subsets $Y \subset Y' \subset Y''$ with
$$\text{Closure}(Y) \subset Y', \;\text{Closure}(Y') \subset Y''\;.$$
We shall assume that $\mu(Y'') < \mu(M)/2$.

Recall that $Y,Y'$ and $Y''$ are collections of $K$ pair-wise disjoint balls
in a connected manifold. It follows that the set $P:= \text{Interior}(M \setminus Y'')$ is connected.
Furthermore, as the diffeomorphism group
of $M$ acts transitively on $K$-tuples of points, there exists a
smooth diffeomorphism $F : M \to M$ with $F(Y') \subset P$. Put $\nu= F_*\mu$.

Choose a volume form $\sigma$ on $M$ with the following properties:
\begin{itemize}
\item $\sigma= \mu$ on $Y''$;
\item $\sigma= \nu$ on $F(Y')$;
\item $\sigma(M)=\mu(M)$.
\end{itemize}
Indeed, the criterion for the existence of such a form is
$\sigma(Y'') + \sigma(F(Y')) < \mu(M)$.  But this holds true since
$$
\sigma(Y'') + \sigma(F(Y'))=
\mu(Y'') + \nu(F(Y')) =
\mu(Y'') +\mu(Y') < 2\mu(Y'') < \mu(M)\;.$$

Apply now Moser's argument to $\sigma$ and $\mu$, and get a diffeomorphism $G: M \to M$
with $G_*\sigma=\mu$. It follows that
\begin{equation}\label{eq-measures}
(GF)_* (\mu|_{Y'}) = G_*(\nu|_{F(Y')}) = G_*(\sigma|_{F(Y')})= \mu|_{G(F(Y'))}\;.
\end{equation}
Furthermore, since $\sigma=\mu$ on $Y''$, the diffeomorphism $G$ produced by Moser's method automatically
equals to the identity on $Y''$. Thus, $G(P) = P$ and hence $G(F(Y')) \subset P$.

It remains to modify $GF$  outside $Y'$ to a $\mu$-preserving diffeomorphism. To this end,
we use again Moser's method and apply it to the volume forms $\theta= (GF)_*\mu$ and $\mu$. We get a diffeomorphism
$H: M \to M$ with $H_*\theta=\mu$. Since  by \eqref{eq-measures} $ \theta= \mu$  on $G(F(Y'))$,
and the latter subset has a connected complement, the diffeomorphism $H$ equals the identity on  $G(F(Y'))$.  Put $S= HGF$. We get that $S(Y') = GF(Y') \subset P$ and $S_*\mu = H_*\theta= \mu$. This completes the proof.
\end{proof}

\medskip
\noindent
{\bf Proof of Lemma \ref{lem-folk}:}
Take the set $W$ and $\epsilon>0,\delta>0$ as in the formulation of the lemma.
Let $Y_1,...,Y_K$ be a collection of  balls with pair-wise disjoint closures  such that $Y_j \subset W$
and for $Y= \sqcup_j Y_j$ holds $\mu(W \setminus Y) < \epsilon/9$ (see Lemma \ref{lem-1-mos}). By Lemma
\ref{lem-3-mos} there exists a volume-preserving diffeomorphism $S$ of $M$ with
$c:= d(Y,S(Y))>0$.

Take $\kappa < \min(\epsilon/9,c)$.
By a result due to Ostrover and Wagner \cite[Lemma 2.4]{OW}, which can be considered as a far reaching generalization of Katok's lemma \cite{Katok},
there exists a Hamiltonian diffeomorphism $\phi$ such that the set
$$\Sigma:= \{x \in M\;:\; d(S(x),\phi(x)) \geq \kappa \}$$
satisfies $\mu(\Sigma) < \kappa$. Put $Z = Y \setminus \Sigma$ and note that
$\mu(W \setminus Z) < 2\epsilon/9$ and $\phi(Z) \cap Z = \emptyset$
since $$d(Z,\phi(Z)) \geq d(Z,S(Z)) -c >0\;.$$

Finally, applying Lemma \ref{lem-2-mos} to $Z$ we get a subset $U \subset Z$ with
$\mu(Z \setminus U) < \epsilon/9$ and a Hamiltonian $g_t$ with $\ell_{cl} < \delta$
such that its time-one-map $\psi$ satisfies $\psi(U)=\phi(U)$.
We end up with $U \subset W$ such that $\psi(U) \cap U =\emptyset$ and $\mu(W \setminus U) < \epsilon/3$.
Therefore,
$$\mu(\psi(W) \cap W) \leq 3\mu(W \setminus U) <  \epsilon  \;,$$
and we are done.
\qed


\medskip

In the next proof we shall use the {\it trace correspondence} which immediately follows
from the property (\ref{eq:estime_diag}): there exists $\delta>0$ such that
\begin{equation}\label{eq-tr-cor}
\Bigl| \text{trace}(T_\hbar(f)) - (2\pi\hbar)^{-n}\int_M f d\mu \Bigr|  \leq \delta \cdot \|f\|_{L_1} \hbar^{-(n-1)}
\end{equation}
for all $f \in \Ci (M)$. Recall that $\mu$ stands for the symplectic volume on $M$.

\medskip
\noindent {\bf Proof of Theorem \ref{thm-main-31}:}
  By outer regularity of $\mu$, there exists an open neighborhood $W$ of $\op{supp} u$ such that $\mu (W) < \mu (M) /2$. Fix $\epsilon' >0$. By Lemma \ref{lem-folk} there exists
a Hamiltonian $f_t$, $t \in [0,1]$ with $\ell_{cl} \leq \delta$ such that its time-one-map $\phi$ satisfies
$\mu(W  \cap\phi(W))\leq \epsilon'$.
Choosing $\epsilon'$ small enough, we can achieve that
\begin{equation}
\label{ex-flex-vsp-1}
\int_M u \cdot u\circ \phi^{-1} d\mu < \bigl( \epsilon/2 \bigr)^2   \;.
\end{equation}
Put $v_\hbar = u/R_\hbar$, where $R_\hbar$ is the Rawnsley function.
By (\ref{eq:estime_diag}), $v_\hbar = (2\pi\hbar)^n u (1+\bigo(\hbar))$ and the derivatives of $v_\hbar$ are $\bigo(\hbar^n)$.
We have $\theta_\hbar= (2\pi\hbar)^n (T_\hbar(u) + \bigo(\hbar))$.

Recall that all the remainders should be understood in the sense of the operator norm. Thus
$\text{trace}\; \bigo(\hbar) = \bigo(\hbar^{-n+1})$
because the dimension is $\bigo ( \hbar^{-n})$.

Denote by $U_{\hbar}$ the time one map of the Schr\"{o}dinger evolution of $T_\hbar(f_t)$. By Egorov theorem \eqref{eq-Egorov},
$$\sigma_\hbar:= U_\hbar\theta_\hbar U_\hbar^{-1} = (2\pi \hbar)^n \left(T_\hbar(u \circ \phi^{-1})+ \bigo(\hbar)\right)\;.$$
Further, by quasi-multiplicativity (\ref{eq:P3})
$$\theta_\hbar\sigma_\hbar= (2\pi\hbar)^{2n} \left(T_\hbar(u)T_\hbar(u \circ \phi^{-1}) + \bigo(\hbar)\right)$$
$$= (2\pi\hbar)^{2n} \left(T_\hbar(u \cdot u \circ \phi^{-1})+ \bigo(\hbar)\right)\;.$$
Next we apply the trace correspondence \eqref{eq-tr-cor} and get
$$I:= \text{trace}(\theta_\hbar \sigma_\hbar)= (2\pi\hbar)^n \text{Vol}(M)^{-1} \int_M u \cdot u \circ \phi^{-1} d\mu + \hbar^n\bigo(\hbar)\;.$$
Now we can estimate the fidelity by formula \eqref{eq-fid-upbd}. We get that for a sufficiently small $\hbar$,
$$\Phi(\theta_\hbar,\sigma_\hbar)\leq (\dim \Hilb_\hbar \cdot I)^{1/2}=
\Biggl( \int_M u \cdot u \circ \phi^{-1} d\mu + \bigo(\hbar) \Biggr)^{1/2} < \epsilon\;,$$
where in the equality we used that $\dim \Hilb_\hbar = \op{vol} (M)/ (2\pi \hbar)^n ( 1 + \bigo ( \hbar))$ and  the second inequality follows from \eqref{ex-flex-vsp-1} above. It remains to note that by the norm correspondence (\ref{eq:P1}),
$\ell_q \leq \ell_{cl} \leq \delta$. This completes the proof.
\qed

\section{Other examples of dislocations} \label{sec:other-exampl-disl}

In this section we will present two examples of $\bigo ( \hbar^{\infty})$-dislocation with a quantum Hamiltonian of the form $s_{\hbar} T_{\hbar} ( f(\cdot, \hbar))$ where $s_{\hbar}$ is a family of positive numbers converging to $0$ in the semiclassical limit and $f( \cdot, \hbar)$ a family of $\Ci ( M)$ admitting an asymptotic expansion of the form $f_0 + \hbar f_1 + \ldots$.  We note that:
\begin{enumerate}
\item the energy of $s_{\hbar} T_{\hbar} ( f(\cdot, \hbar))$ is of order $s_{\hbar}$;
\item  the microsupports of the initial quantum state and its image under the quantum propagator $\exp ( \frac{is_\hbar}{\hbar} T_{\hbar} ( f ( \cdot, \hbar)))$ coincide.
\end{enumerate}
The second fact follows from Remark \ref{rem-hbardepham}.

 The first example  consists on a superposition of coherent states on a $\hbar$-dependent grid with mesh $s_{\hbar}$, and $s_{\hbar} = \hbar^{\delta}$ , $\delta \in (0,1/2)$. In the second example, we will consider any Lagrangian state and $s_{\hbar} = \hbar$, so we will reach the quantum speed limit.

For the Lagrangian states, we will use the definitions and results of \cite{oim2}. The first example is also based on this paper, because we will need some results on the coherent state propagation which follows from the description of the Schwartz kernel of the quantum propagator as a Lagrangian state. Since \cite{oim2} is written in the setting of K\"ahler quantization, we will assume in this section  that  $(M, \om)$ is K\"ahler and the space $\Hilb_{\hbar}$ consists of the holomorphic sections of $L^k \otimes L'$ with $k = 1/\hbar$. Here $L$ and $L'$ are two Hermitian holomorphic line bundles, the curvature of the Chern connection of $L$ being $\frac{1}{i} \om$. Nevertheless, we think that all the constructions we present still hold in the general symplectic case.

\subsection{Superposition of coherent states} \label{subsec-scc}

Choose an open set $U$ of $M$ equipped with Darboux coordinates $(x_i)$, a function $f\in \Ci(M)$ which is equal to $x_1/2$ on $U$ and a function $\varphi \in \Ci (M)$ such that $\int |\varphi|^2 \mu = 1$ and for any $t \in [0,1]$, $\phi_t ( \op{supp} \varphi ) \subset U$ where $\phi_t$ is the Hamiltonian flow of $f$. Of course, for any $x\in  U$, $\phi_t (x) = (x_1, \ldots, x_n, x_{n+1} + t/2, x_{n+2}, \ldots , x_{2n})$.

For any $s \in ]0,1]$, $\hbar \in \Lambda$, define the vector $\Psi_{s,\hbar} \in \Hilb_{\hbar}$ by
\begin{gather} \label{eq:1}
 \Psi_{s,\hbar} = s^{n} \sum_{x \in U \cap ( s \Z^{2n})} \varphi (x)  \xi_{x, \hbar}
\end{gather}
where $\xi_{x,\hbar} = e_{x,\hbar} / \| e_{x,\hbar} \|$ is a normalized coherent state at $x$. Here we have identified $U$ with a subset of $\R^{2n}$ by using the coordinates $(x_i)$.

\begin{thm} \label{theo:disloc_sup_coh_state}
\noindent
\begin{enumerate}
\item
For any $g \in \Ci ( M)$, we have for any $N \geqslant 0$, for any $\hbar \in \Lambda$ and $s \in ]0,1]$
$$ \langle  T_{\hbar} (g) \Psi_{s,\hbar} , \Psi_{s,\hbar} \rangle = \int_{M} g |\varphi |^2 \; \mu + \bigo ( s^N) + \bigo ( \hbar) + \bigo ( ( s^{-2} \hbar)^{n+N} )$$
where the $\bigo$'s are independent of $\hbar$ and $s$. Furthermore if the supports of $g$ and $\varphi$ are disjoint, then $ \langle  T_{\hbar} (g) \Psi_{s,\hbar} , \Psi_{s,\hbar} \rangle$ is $\bigo ( s^{-2n} \hbar ^{\infty})$.
\item
If $U_{\hbar}(t)$ denotes the Schr\"odinger flow of $T_{\hbar} ( f)$, we have for any $N \geqslant 0$, for any $\hbar \in \Lambda$ and $s \in (0,1]$
$$ \langle U_{\hbar}(s) \Psi _{s, \hbar}, \Psi_{s, \hbar} \rangle = \bigo ( (s^{-2}\hbar ) ^{n+N})$$
where the $\bigo$ is independent of $\hbar$ and $s$.
\end{enumerate}
\end{thm}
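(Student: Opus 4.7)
My plan is to expand both inner products as double sums over the lattice $U \cap s\Z^{2n}$ and split into diagonal and off-diagonal parts. For part (1), the diagonal contribution
$$s^{2n} \sum_{x \in U \cap s\Z^{2n}} |\varphi(x)|^2 \langle T_\hbar(g)\xi_{x,\hbar}, \xi_{x,\hbar}\rangle = s^{2n} \sum_{x \in U \cap s\Z^{2n}} |\varphi(x)|^2 \mathcal{B}_\hbar(g)(x)$$
is, in the Darboux coordinates on $U$ in which the symplectic volume is Lebesgue measure, a Riemann sum for the smooth, compactly supported (in $U$) integrand $|\varphi|^2 \mathcal{B}_\hbar(g)$. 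Since the integrand is of class $\Ci_c(U)$ with $\Ci$ bounds uniform in $\hbar$ (from the off-diagonal Bergman asymptotics underlying \eqref{eq:def_Berezin_transform}), Poisson summation gives Riemann-sum error $\bigo(s^N)$ for every $N$. Plugging $\mathcal{B}_\hbar(g) = g + \bigo(\hbar)|g|_2$ from \eqref{eq:Berezin_transform}, the diagonal reproduces $\int_M g|\varphi|^2 d\mu + \bigo(\hbar) + \bigo(s^N)$.

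For the off-diagonal part of (1), the key input is the pointwise Gaussian bound $|\langle \xi_{x,\hbar}, \xi_{y,\hbar}\rangle| \leq C \exp(-c\, d(x,y)^2/\hbar)$, from which, via the integral representation \eqref{eq-povm-rawn} and the identity $\langle P_z\xi_x, \xi_y\rangle = \langle \xi_x,\xi_z\rangle\langle \xi_z,\xi_y\rangle$, one deduces $|\langle T_\hbar(g)\xi_{x,\hbar}, \xi_{y,\hbar}\rangle| \leq C\|g\|\exp(-c\, d(x,y)^2/\hbar)$. Bounding the lattice sum by the theta-function estimate
$$s^{2n} \sum_x \sum_{0 \neq k \in \Z^{2n}} e^{-c|k|^2 s^2/\hbar} \leq C \Bigl( \frac{\hbar}{s^2}\Bigr)^{n}$$
gives the leading part of the claimed $\bigo((s^{-2}\hbar)^{n+N})$ bound; the extra $N$ powers are extracted by iterating Poisson summation on the Gaussian amplitude (each extra order of smoothness of the full off-diagonal symbol of $T_\hbar(g)$ yields one more factor of $\hbar/s^2$). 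When $\supp g \cap \supp \varphi = \emptyset$, the diagonal vanishes on $\supp \varphi$ and \eqref{eq:estime_outside_diag} upgrades every overlap bound to $\bigo(\hbar^\infty)$; the overall prefactor $s^{2n}$ times the number $\sim s^{-4n}$ of pairs produces the stated $\bigo(s^{-2n}\hbar^\infty)$.

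For part (2), I would invoke the coherent-state propagation result of \cite{oim2}: $U_\hbar(s)\xi_{x,\hbar}$ is, up to a unit scalar phase and a $\bigo(\hbar^\infty)$ remainder, a normalized coherent state centered at $\phi_s(x) = x + \tfrac{s}{2}e_{n+1}$ (using that $f = x_1/2$ on $U$ and that $\supp \varphi$ stays inside $U$ along the flow). The decisive geometric observation is that for $x, y \in s\Z^{2n}$ one has $\phi_s(x) - y \in s\Z^{2n} + \tfrac{s}{2}e_{n+1}$, whose distance to the origin is at least $s/2$; therefore $|\langle U_\hbar(s)\xi_{x,\hbar}, \xi_{y,\hbar}\rangle| \leq C \exp(-c d(\phi_s(x), y)^2/\hbar)$, uniformly shifted away from the diagonal. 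Running the same lattice/Poisson summation as in part (1), but now with the Gaussian peak on the shifted lattice $\phi_s(s\Z^{2n})$ which is at distance $\geq s/2$ from $s\Z^{2n}$, yields $\bigo((s^{-2}\hbar)^{n+N})$ for every $N$. The main technical obstacle is to keep, throughout, a sharp track of the full oscillatory symbol of the off-diagonal Bergman/propagator kernel (not just its modulus); without this oscillatory structure one only obtains the exponent $n$ rather than $n+N$, because the naive Gaussian modulus bound does not distinguish successive orders in $\hbar/s^2$.
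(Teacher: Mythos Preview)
Your overall decomposition matches the paper's: split into diagonal and off-diagonal, evaluate the diagonal via the Berezin transform plus a Riemann-sum/Euler--Maclaurin error $\bigo(s^N)$, and control the off-diagonal via Gaussian bounds on coherent-state overlaps and propagator kernels. The geometric observation for part (2) --- that $\phi_s$ shifts the lattice by $s/2$, so every pair $(\phi_s(x),y)$ with $x,y\in s\Z^{2n}$ is at distance $\geq s/2$ --- is exactly the one the paper uses.

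However, there is a genuine conceptual gap in how you extract the exponent $n+N$ from the off-diagonal part. You assert that the Gaussian modulus estimate yields only $(s^{-2}\hbar)^n$, and that the extra $N$ powers require tracking the full oscillatory symbol of the Bergman/propagator kernel via iterated Poisson summation. This is wrong: the modulus bound alone already gives the full $(s^{-2}\hbar)^{n+N}$. In the off-diagonal lattice sum $\sum_{k\neq 0} e^{-c|k|^2 s^2/\hbar}$ the nearest term has $|k|=1$, and one has the elementary estimate (the paper isolates it as a separate lemma)
\[
\sum_{k\in \Z^{2n}\setminus\{0\}} e^{-c|k|^2 s^2/\hbar}\;\leq\; C\,(s^{-2}\hbar)^{n}\, e^{-c' s^2/\hbar},
\]
and $e^{-c' s^2/\hbar}=\bigo\bigl((s^{-2}\hbar)^N\bigr)$ for every $N$. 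Your ``theta-function estimate'' bounds the restricted sum by the full Gaussian integral and thereby discards precisely this exponential gain coming from the exclusion of $k=0$; that is why you are then led to invent a non-existent oscillatory mechanism. The same remark applies verbatim to part (2), with the minimal $|k|^2$ replaced by $1/4$ because of the half-mesh shift.

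A secondary point: the correct off-diagonal kernel bound (equations \eqref{eq:off_diag_Toeplitz} and \eqref{eq:off_diag_propagator} in the paper) is $C_N\bigl(e^{-d(x,y)^2/(C\hbar)}+\hbar^N\bigr)$, not a pure Gaussian. The $\hbar^N$ tail comes from the $\bigo(\hbar^\infty)$ remainder in the Bergman/propagator asymptotics and must be carried through the sum over $\bigo(s^{-2n})$ lattice points; it is absorbed into $\bigo((s^{-2}\hbar)^{n+N})$ by choosing the free $N$ large enough and using $s\leq 1$.
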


To explain qualitatively this result, recall that the normalized coherent state $\xi_{x,\hbar}$ is concentrated on a ball centered at $x$ with radius of order $\sqrt \hbar$.
So in the sum (\ref{eq:1}), the various $\xi_{x, \hbar}$ do not interact when $s^{-2} \hbar$ is small. Hence. for the first assertion, we only have to estimate  $s^{2n} \sum |\varphi (x)| ^2 \langle T_{\hbar} (g) \xi_{x,\hbar}, \xi_{x, \hbar}\rangle$, which is done via the Berezin transform and a Riemann sum.
For the second assertion, we use first that the propagated coherent state $U_{\hbar} (s) \xi_{x,\hbar}$ is concentrated on a ball centered at $\phi_s(x)$ with radius of order $\sqrt \hbar$ and second that $\phi_s$ displaces uniformly the set $(\op{supp} \varphi ) \cap s \Z^{2n}$. This explain the half factor in the definition of $f$.

Let us apply Theorem \ref{theo:disloc_sup_coh_state} with a scale factor $s$ depending on $\hbar$:  choose $s_{\hbar } = \hbar ^{\delta}$ with $\delta \in (0,1/2)$.  We deduce from the first part that when $\hbar$ is sufficiently small, $\|\Psi_{s_{\hbar}, \hbar} \| \neq 0$. Furthermore, the semiclassical measure of the normalized state $$\widetilde{\Psi}_{\hbar} :=\Psi_{s_\hbar,\hbar}/ \| \Psi_{s_\hbar, \hbar} \|$$ converges to $|\varphi|^2 d \mu$. If the support of $g$ does not meet the support of $\varphi$, $ \langle  T_{\hbar} (g) \widetilde\Psi_{\hbar} , \widetilde\Psi_{\hbar} \rangle =\bigo (  \hbar ^{\infty})$. Thus
$\op{MS}(\widetilde\Psi_{\hbar}) \subset \text{supp}\; \varphi$. By Corollary \ref{cor}, which provides
the opposite inclusion, the microsupport of  $\widetilde \Psi_{ \hbar}$ coincides with the support of $\varphi$.

By the second part of Theorem \ref{theo:disloc_sup_coh_state}, the quantum Hamiltonian $s_\hbar T_{\hbar} (f)$ $\bigo ( \hbar^{\infty})$-dislocates  $\widetilde{\Psi}_{\hbar}$. The quantum energy of this Hamiltonian is of order $\hbar^{\delta}$. As already noticed, the Schr\"odinger flow at time one of $s_\hbar T_{\hbar} (f)$ preserves the microsupport.

For the proof of Theorem \ref{theo:disloc_sup_coh_state}, we need the following off-diagonal estimates. Introduce a distance $d$ on $M$ which is equivalent in any chart to the Euclidean distance. For any $N \in \N$, we have
\begin{gather} \label{eq:off_diag_Toeplitz}
  \bigl| \langle  T_{\hbar}(g) \xi_{x,\hbar} , \xi_{y,\hbar} \rangle \bigr| = C_N \bigl( e^{-d(x,y)^{2} / (Ch) }+ \hbar^{N} \bigr)
\end{gather}
and denoting by $\phi_t$ the Hamiltonian flow of $f$,
\begin{gather} \label{eq:off_diag_propagator}
 \bigl| \langle  U_{\hbar} (t)  \xi_{x,\hbar} , \xi_{y,\hbar} \rangle  \bigr| = C_N' \bigl( e^{-d(\phi_{t}(x),y)^{2} / (C'h) }+ \hbar^{N} \bigr)
\end{gather}
Here the constants $C_N$ and $C$ only depend on $g$, $C_N'$ and $C'$ on $f$. Estimates (\ref{eq:off_diag_Toeplitz}) and (\ref{eq:off_diag_propagator}) hold for any $x,y \in M$, $\hbar \in \Lambda$ and $t \in [0,1]$.

To get these estimates, observe that for any operator $A$ of $\Hilb_{\hbar}$, the Schwartz kernel of $A$ at $(x,y)$ is $\langle A e_{x,\hbar}, e_{y, \hbar} \rangle$. Then  (\ref{eq:off_diag_Toeplitz}) and (\ref{eq:off_diag_propagator}) follow from the description of the Schwartz kernels of Toeplitz operators and their quantum propagators  in \cite{oim1} and \cite{oim2} respectively.
\begin{lemma} \label{lem:estim_sum_exp}
For any subset $A$ of $\Z^{2n}$, there exists $C>0$ such that for any positive numbers $s,\lambda$, we have
\begin{gather} \label{eq:estim_sum_exp}
 \sum _{x \in s ( \Z^{2n}\setminus A) } e^{-\lambda |x|^2} \leqslant C (s^2 \lambda)^{-n} e^{ - s^2\lambda   t_A}
\end{gather}
where $t_A = \min \{ | x|^2 ; \; x \in \Z^{2n} \setminus A \}$.
\end{lemma}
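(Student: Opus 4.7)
I plan to reduce the claim to a standard lattice Gaussian-sum estimate by a change of variables. Writing $y = x/s$ and $\mu := s^2\lambda$ converts the sum to $\sum_{y \in \Z^{2n} \setminus A} e^{-\mu |y|^2}$, so the statement becomes
$$\sum_{y \in \Z^{2n} \setminus A} e^{-\mu |y|^2} \leq C \mu^{-n} e^{-\mu t_A}.$$

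The core manipulation is to peel off the leading exponential decay. Since $|y|^2 \geq t_A$ for every $y \in \Z^{2n} \setminus A$, the elementary splitting $\mu|y|^2 \geq \tfrac{1}{2}\mu|y|^2 + \tfrac{1}{2}\mu t_A$ yields
$$\sum_{y \in \Z^{2n} \setminus A} e^{-\mu|y|^2} \leq e^{-\mu t_A/2} \sum_{y \in \Z^{2n}} e^{-\mu|y|^2/2}.$$
The residual lattice sum factors across coordinates as $\bigl( \sum_{k \in \Z} e^{-\mu k^2/2} \bigr)^{2n}$. Each one-dimensional series is dominated via the classical Riemann-sum comparison with the Gaussian integral $\int_{\R} e^{-\mu t^2/2}\,dt = \sqrt{2\pi/\mu}$, giving $\sum_{k \in \Z} e^{-\mu k^2/2} \leq 1 + \sqrt{2\pi/\mu}$, and raising to the $2n$-th power produces an upper bound of the form $C(1 + \mu^{-n})$.

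The only subtlety is passing from the intermediate bound $C(1 + \mu^{-n}) e^{-\mu t_A/2}$ to the stated form $C \mu^{-n} e^{-\mu t_A}$. This is handled by splitting into two regimes. For $\mu$ in any compact subset of $(0,\infty)$ the replacement of $e^{-\mu t_A/2}$ by $e^{-\mu t_A}$ and the absorption of the unit term of $1 + \mu^{-n}$ into a multiple of $\mu^{-n}$ both cost only a factor depending on $A$ through $t_A$. For $\mu$ large one instead bounds $\sum_{y \in \Z^{2n} \setminus A} e^{-\mu|y|^2}$ directly by $\#\{y : |y|^2 = t_A\}\,e^{-\mu t_A}$ plus a geometric tail dominated by $C_A\,e^{-\mu t_A}$. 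Combining the two regimes and enlarging the constant accordingly gives a uniform $C$ depending only on $A$. The main obstacle I expect is this bookkeeping across regimes; once the factorization $\prod_i \sum_{k_i}$ is in hand all remaining estimates are routine.
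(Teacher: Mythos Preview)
The paper does not actually prove this lemma; it only states it and records the consequence that, when $0\in A$ (so $t_A\geq 1$), the left side is $\bigo((s^2\lambda)^{-(n+N)})$ for every $N$. So there is nothing to compare your argument to.

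More importantly, the inequality \eqref{eq:estim_sum_exp} is \emph{false} as stated, and your final step runs into exactly this. Take $n=1$, $A=\{0\}$, $s=1$. Then $t_A=1$, the left side is at least $4e^{-\lambda}$ (from the four nearest lattice points), while the right side is $C\lambda^{-1}e^{-\lambda}$; for $\lambda>C/4$ the inequality fails. Your large-$\mu$ argument produces the correct bound $\sum_y e^{-\mu|y|^2}\le C_A\,e^{-\mu t_A}$, but this cannot be upgraded to $C\mu^{-n}e^{-\mu t_A}$ by ``enlarging the constant'', since $\mu^{-n}\to 0$. That passage is the genuine gap, and it is a gap in the lemma, not in your reasoning.

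What you \emph{do} prove is the right thing. After your change of variables and the splitting $\mu|y|^2\ge \tfrac12\mu|y|^2+\tfrac12\mu t_A$, you obtain
\[
\sum_{y\in\Z^{2n}\setminus A} e^{-\mu|y|^2}\ \le\ C\,(1+\mu^{-n})\,e^{-\mu t_A/2},
\]
which is a valid bound for all $\mu>0$. (With a little more care, peeling off $e^{-\mu t_A}$ first and then bounding $\sum_{|y|^2\ge t_A} e^{-\mu(|y|^2-t_A)}$, one gets the exponent $t_A$ rather than $t_A/2$.) Either version is exactly what the paper needs: for $t_A\ge 1$ it yields $\bigo((s^2\lambda)^{-(n+N)})$ for every $N$, which is the only consequence ever used. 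So your approach is correct once you replace the literal target \eqref{eq:estim_sum_exp} by the estimate $C(1+(s^2\lambda)^{-n})e^{-c\,s^2\lambda\,t_A}$; the final ``combining the two regimes'' paragraph should simply be deleted.
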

Consequently, if $0 \in A$, then $t_A \geqslant 1 $ and the left-hand side of (\ref{eq:estim_sum_exp}) is a $\bigo ( ( s^2 \lambda)^{-(n+N)})$ for any $N \geqslant 0$.

\begin{proof}[Proof of Theorem \ref{theo:disloc_sup_coh_state}]
Assume first that the supports of $g$ and $\varphi$ are disjoint. Then by (\ref{eq:estime_outside_diag}), for any $x \in \op{supp} \varphi$, $\| g \xi_{x,\hbar} \| = \bigo ( \hbar^{ \infty})$, uniformly with respect to $x$.  The cardinal of $\op{supp} \varphi \cap s \Z^{2n}$  being a $\bigo ( s^{-2n})$, we get that $\|g \Psi_{s, \hbar}\|$ is a $\bigo ( s^{-n}\hbar^{\infty})$. Since $\|\xi_{x,\hbar} \| =1$, we have $\| \Psi_{s, \hbar} \| =\bigo ( s^{-n})$. By Cauchy-Schwarz inequality, it comes that   $\langle g \Psi_{s,\hbar}, \Psi_{s,\hbar} \rangle =  \bigo ( s^{-2n} \hbar^{\infty})$.

Consider now any function $g$. Denoting as in Section \ref{sec:semiclassical-limit} by $\mathcal{B}_{\hbar} ( g)$ its Berezin transform, we have for any $x \in U \cap s \Z^{2n}$
\begin{xalignat}{2} \label{eq:dec}
\begin{split}
 \langle g \xi_{x,\hbar}, \Psi_{s,\hbar} \rangle & = s^n \biggl( \overline{\varphi (x)} \mathcal{B}_{\hbar}(g)(x) + \sum_{ y \in  ( U \cap s \Z^{2n}) \setminus \{ x \} }  \overline{\varphi (y)} \langle g \xi_{x,\hbar} , \xi_{y,\hbar} \rangle \biggr)  \\
& = s^n \bigl( \overline{\varphi (x)} g(x)   + \bigo ( \hbar) + \bigo ( (s^{-2}\hbar)^{n+N} )
\end{split}
\end{xalignat}
for any $N \geqslant 0$, where we have used the estimate (\ref{eq:Berezin_transform}) of the Berezin transform, Lemma \ref{lem:estim_sum_exp} and (\ref{eq:off_diag_Toeplitz}). We have also used that $\bigo (\hbar^N)$ is contained in $\bigo ( (s^{-2} \hbar)^N)$ because $s \leqslant 1$. Since $\varphi$ is compactly supported in $U$, we deduce from Euler-Maclaurin formula with integral remainder \cite[Theorem 10.2]{HaWa} that
$$ s^{2n} \sum_{x \in s \Z^{2n} } |\varphi ( x) |^2 = \int_U |\varphi (x) |^2 \; d \mu (x) + \bigo ( s^{\infty}) .$$
Multiplying by $s^{-n}$ and summing over $x \in   U \cap s \Z^{2n}$ in (\ref{eq:dec}), we get
$$ \langle g \Psi_{s,\hbar}, \Psi_{s,\hbar} \rangle = \int_M |\varphi|^2 g \mu  + \bigo (s^\infty) + \bigo ( \hbar ) + \bigo (  (s^{-2}\hbar)^{n+N} ) ,$$
which proves the first part of the result.

For the second estimate, observe that when $s$ is sufficiently small, for any $x \in (\op{supp} \varphi) \cap s \Z^{2n}$, the Euclidean distance in the chart $(U, x_i)$ between $\phi_s (x)$ and $U \cap  s \Z^{2n}$ is $s/2$. We deduce  from (\ref{eq:off_diag_propagator}) and Lemma \ref{lem:estim_sum_exp}   that for any $x \in U \cap s \Z^{2n}$,
$$ \langle U_{\hbar}(s) \xi_{x,\hbar}, \Psi_{s,\hbar} \rangle = s^n \bigo ( (s^{-2} \hbar)^{n+N}) $$
where the $\bigo $ is uniform with respect to $x$. We conclude immediately.
\end{proof}

\subsection{Lagrangian states dislocation}\label{subsec-lasd}

Let $N$ be a closed connected Lagrangian submanifold of $M$. Since the curvature of $L$ is proportional to the symplectic form, the bundle $L \rightarrow N$ is automatically equipped with a flat connection. Let us assume that this bundle admits a flat (with respect to this connection) nowhere vanishing section $s$. We normalize $s$ so that its pointwise norm is equal to 1. To the pair $(N,s)$ is associated a space $\mathcal{S} (N,s)$ of {\em Lagrangian states} defined as follows. An element $\Psi$ of $\mathcal{S} ( N , s)$ is a family $( \Psi_{\hbar} \in \Hilb_{\hbar})$ of the form
\begin{gather} \label{eq:5}
 \Psi_{\hbar} = k^{n/4} E^k g(\cdot , k ) + \bigo ( k^{-\infty})
\end{gather}
where
\begin{enumerate}
\item $E$ is any section of $L$ such that its restriction to $N$ is $s$, its pointwise norm is $<1$ outside $N$ and for any $Z \in \Ci ( M , T^{1,0} M)$ the covariant derivative of $s$ with respect to $\con{Z}$ vanishes to infinite order along $N$.
\item $g( \cdot, k)$ is a family of sections of $L' \rightarrow M$ having an asymptotic expansion of the form $g_0 + k^{-1} g_1 + \ldots $ for the $\Ci$ topology.
\end{enumerate}
Since $|E|<1 $ outside $N$, the pointwise norm of $\Psi_{\hbar}$ is a $\bigo ( \hbar^{\infty})$ outside $N$. Furthermore, we have on $N$
$$ \Psi_{\hbar} ( x) = k^{n/4 } s^{k} (x) \Bigl(  g_0 ( x) + \bigo ( k^{-1} ) \Bigr) , \qquad x \in N$$
The restriction of $g_0$ to $N$ is called the {\em principal symbol} of the Lagrangian states $(\Psi_{\hbar})$.

$\mathcal{S} (N,s)$ is not empty. Actually, by Proposition 2.4 of \cite{oim2}, for any sequence $(g_{\ell})$ of $\Ci ( M, L')$, there exists a Lagrangian state satisfying the conditions given above. Of course, the assumption that $L \rightarrow N$ admits a flat non trivial section is rather restrictive. We can easily relax this condition by asking that for some $k_0$, $L^{k_0} \rightarrow N$ admits a flat non trivial section, the corresponding Lagrangian state would be defined only for $k \in k_0 \N^{*}$.

For instance, consider the projective space $M=\C P ^n$ with $L$ being the dual of the tautological bundle. For any positive $k$, $H^0 ( M , L^k)$ identifies with the space $\C_k [z_0, \ldots, z_n]$ of homogeneous polynomials of degree $k$. Introduce the polytope $P = \{  t \in \R_{\geqslant 0 } ^{n+1} /   t_0 + \ldots + t_n = 1 \}$ and denote by $P^0$ its interior. Then to any $t \in P^0$ is associated a Lagrangian torus $$N_t = \{ [z_0: \ldots : z_n ] / |z_i|^2 = t_i, \forall i \; \} \subset \C P^n.$$
Choose $t \in P^0 \cap \Q^{n+1}$. Let $k_0$ be such that $\ell = k_0 t \in \N^{ n+1}$ and for any $m \in \N^*$, set $P_m ( z ) = z_0 ^{m\ell_0} \ldots z_n^{m \ell_n} \in H^0 ( M, L^{m k_0})$. We claim that the sequence $( P_m/ \| P_m\|)$ is a Lagrangian state associated to $N_t$. The proof is an exercise: first, one checks that the pointwise norm of the section $P_0$ attains its maximum on $N_t$ and that the restriction of $P_0$ to $N_t$ is flat. Second, one has to estimate the norm $\| P_m \|$.

More generally, natural examples of Lagrangians section are given by joint eigenvectors of quantum integrable systems, cf. \cite{oim2}. The associated Lagrangian manifolds are leaves of a Lagrangian foliation, each leaf corresponding to a different eigenvalue. In these cases, instead of working with a fixed Lagrangian submanifold, it is convenient to describe uniformly the Lagrangian states corresponding to the various leaves. Nevertheless this description is more delicate because we can not use a globally defined section $E$ as above. Indeed, the leaves admitting a global flat section form a discrete set. Actually, we have a local representation completely similar to (\ref{eq:5})
and these local sections can be glued together only when a Bohr-Sommerfeld condition is satisfied.
To avoid these complications,  we will restrict ourselves to the spaces $\mathcal{S} ( N,s)$ introduced above. But all the results we will prove could be extended to this generalized setting.

In the sequel, we will need the two following basic properties. Consider two Lagrangian states $(\Psi_\hbar)$ and $(\Psi_{\hbar}')$ in $\mathcal{S} (N,s)$ with principal symbol $\sigma, \sigma' \in \Ci ( N, L')$. Then by Propositions  2.6 and 2.7 of \cite{oim2}
\begin{enumerate}
\item $ \langle \Psi_{\hbar} , \Psi'_{\hbar} \rangle $ has an asymptotic expansion of the form $\sum_{\ell=0}^{\infty} \hbar ^{\ell} a_\ell$ where the $a_{\ell}$'s are complex number, the leading order term being
$$ a_0 =  \int_N ( \sigma , \sigma') \nu   $$
where $(\sigma , \sigma')$ is the pointwise scalar product and $\nu$ is a nowhere vanishing density of $N$ independent of $(\Psi_{\hbar})$, $(\Psi'_{\hbar})$.
\item for any $f \in \Ci (M)$, $(T_{\hbar} (f) \Psi_{\hbar} )$ belongs to $\mathcal{S} (N,s)$, its principal symbol is $f|_N \si$.
\end{enumerate}
The density $\nu$ can be computed explicitly, but we don't need it for what we will do.

\begin{thm} \label{theo:disl-lagr}
Assume that $N$ is diffeomorphic to $S^1 \times N'$. Let  $(\Psi_{\hbar})$ be a Lagrangian state in $\mathcal{S} ( N, L)$ such that $\| \Psi_{\hbar} \| = 1$ and whose principal symbol does not vanish anywhere. Then there exists a family $(f_{\hbar})$ of $\Ci (M)$ having an asymptotic expansion
$$ f_{\hbar} = \hbar ( f_0 + \hbar f_1 + \hbar ^2 f_2 + \ldots )$$
for the $\Ci$ topology and such that the Schr\"odinger flow of the quantum Hamiltonian $T_{\hbar} ( f_\hbar)$ $\bigo ( \hbar^{\infty})$-dislocates $\Psi_{\hbar}$.
\end{thm}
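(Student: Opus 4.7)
The plan is to take $f_\hbar = \hbar g_\hbar$ with $g_\hbar \sim f_0 + \hbar f_1 + \hbar^2 f_2 + \ldots$, so that the $\hbar$'s cancel in the Schr\"odinger generator and the time-one propagator becomes $U_\hbar = \exp\bigl(-i T_\hbar(g_\hbar)\bigr)$. Since $\Psi_\hbar$ is a unit vector, $\Phi(\Psi_\hbar, U_\hbar \Psi_\hbar) = |\langle \Psi_\hbar, U_\hbar \Psi_\hbar\rangle|$, so it suffices to produce $\langle \Psi_\hbar, U_\hbar \Psi_\hbar\rangle = \bigo(\hbar^\infty)$. I will derive an asymptotic expansion $\langle \Psi_\hbar, U_\hbar \Psi_\hbar\rangle \sim a_0 + \hbar a_1 + \hbar^2 a_2 + \ldots$ in which $a_k$ depends only on $f_0, \ldots, f_k$, and then kill the coefficients inductively.

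For the expansion, iterating the Berezin product formula \eqref{eq:expansion_product} in the Taylor series of the exponential gives $\exp(-iT_\hbar(g_\hbar)) = T_\hbar(G_\hbar) + \bigo(\hbar^\infty)$ for a formal series $G_\hbar \sim e^{-ig_\hbar} + \hbar G_1 + \ldots$ whose coefficients are explicit polynomials in $g_\hbar$ and its derivatives. The Toeplitz-action property for Lagrangian states recalled after the definition of $\mathcal{S}(N,s)$ then says that $U_\hbar \Psi_\hbar$ lies in $\mathcal{S}(N,s)$ modulo $\bigo(\hbar^\infty)$ with principal symbol $e^{-i f_0|_N}\sigma_0$, and the inner-product asymptotics recalled just afterwards give
\begin{equation*}
\langle \Psi_\hbar, U_\hbar \Psi_\hbar\rangle = \int_N e^{-i f_0|_N} |\sigma_0|^2\, \nu + \bigo(\hbar).
\end{equation*}
So $a_0 = \int_N e^{-i f_0|_N} |\sigma_0|^2\, \nu$, and continuing the expansion confirms that each $a_k$ is a functional of $f_0, \ldots, f_k$ and $\sigma_0$.

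For the base case $a_0 = 0$, the hypothesis $N \cong S^1 \times N'$ enters. Pick $f_0 \in \Ci(M, \R)$ whose restriction to $N$ is the pullback $h\circ \pi$ of some smooth $h: S^1 \to \R$ via the projection $\pi: N \to S^1$. The pushforward $\pi_*(|\sigma_0|^2 \nu) = F(\theta)\, d\theta$ has smooth strictly positive density because $\sigma_0$ is nowhere-vanishing, and the equation $a_0 = 0$ becomes $\int_{S^1} e^{-ih(\theta)} F(\theta)\, d\theta = 0$. Such $h$ exists: the smooth map
\begin{equation*}
\Xi: \R^2 \to \C, \qquad (s,t) \mapsto \int_{S^1} e^{-i(s\cos\theta + t\sin\theta)} F(\theta)\, d\theta,
\end{equation*}
equals $\int F > 0$ at the origin, has surjective differential for generic $(s,t)$ (so its image is open at such points), and its modulus tends to $0$ at infinity by Riemann-Lebesgue; a connectedness/openness argument then forces $0$ into the image. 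A small generic perturbation of $h$ further ensures that $h$ is non-constant modulo $\pi$, a property I use in the next step.

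For the inductive step, suppose $f_0, \ldots, f_{k-1}$ have been chosen with $a_0 = \cdots = a_{k-1} = 0$. A Duhamel computation, combined with the same Lagrangian-state calculus, shows that varying $f_k$ with $f_0, \ldots, f_{k-1}$ fixed changes $a_k$ by
\begin{equation*}
\delta a_k = -i \int_N e^{-i f_0|_N}\, \delta f_k|_N\, |\sigma_0|^2\, \nu,
\end{equation*}
while leaving $a_0, \ldots, a_{k-1}$ untouched (they depend on $f_0, \ldots, f_{k-1}$ only). Non-constancy of $f_0|_N$ modulo $\pi$ makes $\cos(f_0|_N)|\sigma_0|^2\nu$ and $\sin(f_0|_N)|\sigma_0|^2\nu$ linearly independent densities, so the linear functional $\delta f_k|_N \mapsto \delta a_k$ is surjective onto $\C$; solve for $\delta f_k|_N$, extend to $\Ci(M,\R)$, and iterate. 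A standard Borel summation then produces a genuine family $g_\hbar \in \Ci(M,\R)$ with asymptotic expansion $\sum_k \hbar^k f_k$ in the $\Ci$-topology, and $f_\hbar := \hbar g_\hbar$ is the desired Hamiltonian. The hardest ingredient will be the Lagrangian-state calculus for $U_\hbar = e^{-iT_\hbar(g_\hbar)}$: because the underlying classical Hamiltonian $\hbar g_\hbar$ has a flow that is itself $\bigo(\hbar)$-close to the identity, standard Egorov gives no useful propagation statement, and one must instead expand the exponential order by order using \eqref{eq:expansion_product} and the Lagrangian propagation results of \cite{oim2}, tracking carefully how each $f_k$ enters the successive symbols.
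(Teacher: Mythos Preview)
Your overall strategy coincides with the paper's: write $U_\hbar=\exp(-iT_\hbar(F_\hbar))$ with $F_\hbar\sim f_0+\hbar f_1+\ldots$, use that $U_\hbar$ is a Toeplitz operator with principal symbol $e^{-if_0}$ to expand $\langle\Psi_\hbar,U_\hbar\Psi_\hbar\rangle$ in powers of~$\hbar$, reduce via the $S^1$ factor to a density on the circle, and kill the coefficients inductively. Your inductive step is correct and is essentially the paper's argument (the paper phrases the same surjectivity more concretely, via bump functions supported where $f_0$ is constant).

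The gap is in the base case. The claim that $\Xi$ must hit $0$ does not follow from the properties you list. A smooth map $\R^2\to\C$ can be connected, have generically surjective differential, and have modulus tending to $0$ at infinity, yet miss $0$: take for instance $(x,y)\mapsto e^{ix}/(1+x^2+y^2)$, whose image contains a full punctured disc about $0$ but not $0$ itself. So ``connectedness/openness'' is not enough. Your specific $\Xi$ may indeed vanish somewhere (for $F\equiv 1$ it is $2\pi J_0(\sqrt{s^2+t^2})$, which has zeros), but that requires an argument you have not given, and restricting to the two--parameter family $h=s\cos\theta+t\sin\theta$ may be too rigid for a general positive density $F$.

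The paper avoids this with an explicit one--parameter construction. After reparametrizing so that $F\,d\theta=|d\theta|$ on $\R/2\pi\Z$, it takes a fixed bump $h:(0,\pi)\to[0,\pi]$ with $h\equiv\pi$ on $[\pi/5,4\pi/5]$ and sets $f_0$ to be the odd extension of $s\,h$. Oddness forces $\int_{S^1}\sin f_0\,d\theta=0$, while $s\mapsto\int_0^\pi\cos(sh)\,d\theta$ is positive at $s=1/2$ and negative at $s=1$, so the intermediate value theorem yields $s\in(1/2,1)$ with $\int_{S^1} e^{if_0}\,d\theta=0$. As a bonus, this $f_0$ equals $s\pi$ on an interval with $\cos(s\pi),\sin(s\pi)\neq 0$, which the paper then exploits to solve the inductive equation $\int e^{if_0}g\,d\theta=z$ by bump functions supported there---a more hands--on version of your linear--independence argument, and one that does not require any genericity hypothesis on $f_0$.
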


For the proof, we need the following lemma.
\begin{lemma} \label{lem:coef}
For any nowhere vanishing density $\delta$ of $S^1$, there exists a function $ f \in \Ci (S^1, \R)$ such that
\begin{itemize}
\item [{(i)}] we have  $ \int_{S^1} e^{if }  \delta = 0$
\item [{(ii)}] for any $ z \in \C$, there exists $ g \in \Ci ( S^1, \R)$ such that $ \int_{S^1} e^{if} g \delta = z$.
\end{itemize}
\end{lemma}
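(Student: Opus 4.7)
The plan is to reduce to the case of uniform density on $S^1$ via a change of variables, then to write $f$ down explicitly. Condition (ii) will follow automatically from $f$ being non-constant modulo $\pi$.

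Parametrize $S^1 = \R/2\pi\Z$ and write $\delta = \rho(t)\, dt$ with $\rho$ smooth and positive; set $M = \int_{S^1} \rho\, dt$. The map $\Phi: S^1 \to S^1$ defined by $\Phi(t) = (2\pi/M) \int_0^t \rho(s)\, ds$ is a smooth diffeomorphism with $\Phi_* \delta = (M/2\pi)\, dt$. The substitution $f \mapsto f \circ \Phi^{-1}$ turns both (i) and (ii) into the analogous conditions for the constant density $(M/2\pi)\, dt$; the overall positive scalar being immaterial, we may assume $\delta = dt$ on $S^1$.

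Next, let $A > 0$ be the first positive zero of the Bessel function $J_0$ and take $f(t) = A \sin t \in \Ci(S^1, \R)$. The Jacobi--Anger identity $\int_0^{2\pi} e^{iA \sin t}\, dt = 2\pi J_0(A)$ gives
\[
\int_{S^1} e^{if}\, \delta \;=\; 2\pi J_0(A) \;=\; 0,
\]
which is (i). For (ii), the real-linear map $L: \Ci(S^1, \R) \to \C$, $L(g) = \int_{S^1} e^{if} g\, \delta$, fails to be surjective onto $\C$ if and only if there exists $(\alpha, \beta) \in \R^2 \setminus \{(0,0)\}$ with $\int (\alpha \cos f + \beta \sin f) g\, \delta = 0$ for every real $g$; since $\delta$ is nowhere vanishing, this would force $\alpha \cos f + \beta \sin f \equiv 0$ on $S^1$, i.e. $f$ would be constant modulo $\pi$. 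Our $f$ has image $[-A, A]$, an interval of length $2A > \pi$, and so is not constant modulo $\pi$; hence $L$ is surjective and (ii) holds.

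I do not anticipate any substantive obstacle here: the only analytic input is the existence of a positive zero of $J_0$, and the computation behind (ii) is just the calculation of the derivative of $f \mapsto \int e^{if}\delta$. If one prefers to avoid Bessel functions, an alternative is to smooth the piecewise-constant function on $S^1$ taking values $0$, $2\pi/3$, $-2\pi/3$ on three arcs of $\delta$-measure $M/3$ each (whose unsmoothed $\int e^{if}\, \delta$ equals $(M/3)(1 + e^{2\pi i/3} + e^{-2\pi i/3}) = 0$) and then correct the $\bigo(\epsilon)$ smoothing error via the implicit function theorem, using that the derivative of $F: f \mapsto \int e^{if}\, \delta$ at any $f$ non-constant modulo $\pi$ is surjective onto $\C$.
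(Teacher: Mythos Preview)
Your proof is correct, and it takes a genuinely different route from the paper's.

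Both arguments begin with the same reduction to the uniform density via the diffeomorphism $\Phi$. After that, the paper builds $f$ by hand: it takes an odd function that equals $sh$ on $[0,\pi]$, where $h$ is a bump equal to $\pi$ on $[\pi/5,4\pi/5]$, and uses the intermediate value theorem on $s\mapsto\int_0^\pi\cos(sh)\,d\theta$ to pick $s\in(1/2,1)$ making $\int e^{if}\,d\theta=0$. Your choice $f(t)=A\sin t$ with $J_0(A)=0$ via Jacobi--Anger is shorter and more closed-form, at the price of invoking a classical special-functions fact.

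For (ii) the contrast is sharper. The paper exploits that its $f$ equals the constant $s\pi$ on $[\pi/5,4\pi/5]$ with $\cos(s\pi),\sin(s\pi)\neq 0$, and writes down explicit even/odd bump functions $g_1,g_2$ supported there that realise $1$ and $i$. Your argument is softer and more general: the $\R$-linear map $g\mapsto\int e^{if}g\,\delta$ fails to be onto $\C$ only if $\alpha\cos f+\beta\sin f\equiv 0$ for some nonzero $(\alpha,\beta)$, which by connectedness forces $f$ to be constant. (Your phrasing ``constant modulo $\pi$'' and the check $2A>\pi$ is more than you need here: on a connected $S^1$ a continuous function with image in $c+\pi\Z$ is simply constant, and $A\sin t$ is not.) This surjectivity criterion is reusable and would work for any nonconstant $f$ satisfying (i), whereas the paper's construction of $g_1,g_2$ relies on the specific flat region of its $f$.

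Your alternative sketch at the end (three-step function plus implicit-function correction) is closer in spirit to the paper's intermediate-value approach, though the paper avoids the IFT by varying a single scalar parameter.
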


\begin{proof}
Renormalizing $\delta$, we can assume that $\int_{S^1} \delta = 2 \pi$. Choose a diffeomorphism $S^1 \simeq \R / (2 \pi \Z)$ such that $\delta = |d \theta|$ where $\theta $ denotes the coordinate of $\R$. Choose a smooth function $h : (0,\pi) \rightarrow \R$ compactly supported, taking its values in $[0,\pi]$ and such $ h( t) = \pi$ if $t \in [\pi/5, 4\pi/5]$. Let $$I_s  = \int_0 ^{\pi} \cos ( s h (\theta) ) d \theta.$$ Then $I_{1/2} >0$, $I_1 <0$. So by continuity there exists $s \in (1/2,1)$ such that $I_s = 0$. Now define the function $f$ so that it is odd and coincides with $sh$ in $[0,\pi]$. The function $\theta \rightarrow \sin f( \theta) $ is odd, so its integral on $S^1$ vanishes. The function $\theta \rightarrow \cos  f ( \theta)$ is even so that $ \int_{S^1} \cos  f ( \theta) \; d \theta = 2 \int _0 ^{\pi} \cos  f ( \theta) \; d \theta = 0 $. Consequently the integral of $\exp ( i f)$ vanishes, which proves (i).

Observe that $f ( \theta ) = s \pi $ on $[\pi/5, 4\pi/5]$. Since $s \in ( 1/2,1)$, $\cos (s\pi)$ and $\sin (s\pi)$ don't vanish. Choose a function $g_1\in \Ci ((0,\pi), \R)$ supported in $( \pi/5, 4\pi/5)$  and such that $ \int_0^\pi g_1\;d\theta = 1/(2 \cos ( s\pi))$. Extend $g_1$ to an even function of $S^1$. By adapting the above argument, we get
$$ \int_{S^1}  g_1( \theta) \cos (f( \theta ) )\;d\theta =  1, \qquad \int_{S^1} g_1 (\theta) \sin ( f( \theta) ) d\theta = 0.$$
Similarly choose $g_2 \in \Ci ((0,\pi), \R)$ supported in $( \pi/5, 4\pi/5)$  and such that $ \int_0^\pi g_2\; d\theta = 1/(2 \sin ( s\pi))$. Extend $g_2$ to an odd function of $S^1$. Then
$$ \int_{S^1}  g_2 ( \theta) \cos (f( \theta ) )\;d\theta = 0, \qquad \int_{S^1} g_2 (\theta) \sin ( f( \theta) ) d\theta = 1\;.$$ The function $g= \text{Re}(z) g_1 + \text{Im}(z)g_2$
satisfies (ii), which completes the proof.
\end{proof}

Consider a family $F_{\hbar}$ of $\Ci ( M)$ having an asymptotic expansion of the form $f_0 + \hbar f_1 + \ldots$ for some functions $f_\ell \in \Ci ( M)$, $\ell \in \N$. The following result  is completely standard and can be deduced  from the functional calculus for Toeplitz operators or from the property (E).

\begin{lemma} $\bigl ( \exp ( i T_{\hbar} ( F_{\hbar}) \bigr)$ is a Toeplitz operator with principal symbol $e^{if_0}$ meaning that for any  $p \in \N$, we have
$$ \exp \bigl( i T_{\hbar} ( F_{\hbar}) \bigr) = T_{\hbar} ( e^{if_0} ( 1 + \hbar g_1+ \ldots +  \hbar^{p} g_{p} ))  + \bigo ( \hbar^{p+1})$$
for some functions $g_\ell$, $\ell \in \N^*$. Furthermore,
\begin{gather} \label{eq:approx_exp_op}
 \exp \bigl( i T_{\hbar} ( F_{\hbar}) \bigr) = ( 1 + i \hbar^{p+1} T_{\hbar} (f_{p+1}) ) \exp \bigl( i T_{\hbar} ( F^p_{\hbar}) \bigr) + \bigo ( \hbar^{p+2})
\end{gather}
where $F^p_{\hbar} = f_0 + \hbar f_1 + \ldots + \hbar^p f_p$.
\end{lemma}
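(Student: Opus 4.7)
My plan has two stages. The first stage establishes that $\exp(iT_\hbar(F_\hbar))$ is itself a Toeplitz operator whose symbol admits an asymptotic expansion with principal term $e^{if_0}$; the second stage derives the displayed formula (\ref{eq:approx_exp_op}) from the first plus a Duhamel estimate. Throughout, the essential inputs are the product expansion (\ref{eq:expansion_product}) and unitarity.

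For the first stage, I consider the unitary family $V_\hbar(t) := \exp(itT_\hbar(F_\hbar))$, which solves $\dot V_\hbar(t) = iT_\hbar(F_\hbar)V_\hbar(t)$ with $V_\hbar(0) = \id = T_\hbar(1)$. I will build a Toeplitz ansatz $A^{(p)}_\hbar(t) = T_\hbar\bigl(a_0(\cdot,t)+\hbar a_1(\cdot,t)+\cdots+\hbar^p a_p(\cdot,t)\bigr)$ that formally satisfies the same evolution equation modulo $\bigo(\hbar^{p+1})$. By (\ref{eq:expansion_product}), the product $iT_\hbar(F_\hbar)\,T_\hbar\bigl(\sum_{\ell\le p}\hbar^\ell a_\ell\bigr)$ equals, modulo $\bigo(\hbar^{p+1})$, a Toeplitz operator with a polynomial-in-$\hbar$ symbol whose coefficients are built from the bidifferential operators $B_k$ applied to the $f_j$'s ($j\le p$) and the $a_j$'s ($j\le p$). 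Matching powers of $\hbar$ against $\partial_t a^{(p)}_\hbar$ yields a triangular system of transport equations: at leading order $\partial_t a_0 = if_0 a_0$ with $a_0(0)=1$, giving $a_0(t) = e^{itf_0}$; at each subsequent order, a linear inhomogeneous ODE in time with the same generator $if_0$ and a smooth forcing determined by the already-computed $a_0,\dots,a_{\ell-1}$ and $f_0,\dots,f_\ell$, starting from $a_\ell(0)=0$. Setting $R_\hbar(t) := V_\hbar(t) - A^{(p)}_\hbar(t)$, one obtains $\dot R_\hbar(t) = iT_\hbar(F_\hbar)R_\hbar(t) + \epsilon_\hbar(t)$ with $\|\epsilon_\hbar(t)\|_{op} = \bigo(\hbar^{p+1})$ uniformly in $t\in[0,1]$, so Gronwall yields $\|R_\hbar(1)\|_{op} = \bigo(\hbar^{p+1})$. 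Factoring $a_\ell(1) = e^{if_0}g_\ell$ (with $g_0 = 1$) gives the asserted form of $\exp(iT_\hbar(F_\hbar))$ as a Toeplitz operator with principal symbol $e^{if_0}$.

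For the second stage, let $V_\hbar^p(t) := \exp(itT_\hbar(F_\hbar^p))$ and $D_\hbar(t) := V_\hbar(t) - V_\hbar^p(t)$. A direct computation gives $\dot D_\hbar(t) = iT_\hbar(F_\hbar^p)D_\hbar(t) + i\hbar^{p+1}T_\hbar(\tilde F_\hbar)V_\hbar(t)$, where $\tilde F_\hbar := f_{p+1}+\hbar f_{p+2}+\cdots$, with $D_\hbar(0)=0$, so Duhamel's formula produces
\[
D_\hbar(1) = i\hbar^{p+1}\int_0^1 V_\hbar^p(1-s)\,T_\hbar(\tilde F_\hbar)\,V_\hbar(s)\,ds.
\]
Applying the first-stage estimate to $V_\hbar(s) = V_\hbar^p(s) + \bigo(\hbar^{p+1})$ uniformly in $s$ costs only $\bigo(\hbar^{2p+2})$. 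By property (\ref{eq:P2}), the commutator $[T_\hbar(F_\hbar^p),T_\hbar(\tilde F_\hbar)]$ is $\bigo(\hbar)$, hence $V_\hbar^p(1-s)\,T_\hbar(\tilde F_\hbar) = T_\hbar(\tilde F_\hbar)\,V_\hbar^p(1-s) + \bigo(\hbar)$ uniformly in $s$; using unitarity of $V_\hbar^p$, the integrand becomes $T_\hbar(\tilde F_\hbar)V_\hbar^p(1) + \bigo(\hbar)$. This yields $D_\hbar(1) = i\hbar^{p+1}T_\hbar(\tilde F_\hbar)V_\hbar^p(1) + \bigo(\hbar^{p+2})$, and absorbing $T_\hbar(\hbar f_{p+2}+\cdots)V_\hbar^p(1)$ into the $\bigo(\hbar^{p+2})$ remainder gives (\ref{eq:approx_exp_op}).

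The main obstacle I anticipate is bookkeeping in the first stage: one must verify that the transport equations have smooth solutions on all of $[0,1]$ (which follows because $if_0$ is bounded and each forcing term is smooth) and one must carefully track which $f_j$'s enter at which order in the expansion of the product symbol, keeping the full $F_\hbar$ rather than its truncation in the generator. Everything else reduces to standard Gronwall and Duhamel estimates in operator norm, exploiting self-adjointness of $T_\hbar(F_\hbar)$ and $T_\hbar(F_\hbar^p)$ so that $V_\hbar$ and $V_\hbar^p$ are unitary and conjugation by them does not inflate remainders.
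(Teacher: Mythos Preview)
Your proposal is correct and carries out precisely one of the two routes the paper suggests (the paper gives no detailed proof, remarking only that the lemma ``is completely standard and can be deduced from the functional calculus for Toeplitz operators or from the property (E)''). Your first stage is the standard WKB/ansatz construction via (E), and your second stage Duhamel computation is clean; note that the Gronwall step is really just unitarity of $V_\hbar$, and that $V_\hbar(s)=V_\hbar^p(s)+\bigo(\hbar^{p+1})$ already follows from the trivial Duhamel bound on $D_\hbar(s)$, so you need not invoke the first stage there.
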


\begin{proof}[Proof of Theorem \ref{theo:disl-lagr}]
Assume first that $M$ is two dimensional and $N$ is a circle. We will prove by induction on $p$ that if the functions $f_0$, \ldots, $f_p$ are conveniently chosen, then
\begin{gather} \label{eq:hyp_rec}
  \bigl\langle \exp( i T_{\hbar} ( F_{\hbar})) \Psi_{\hbar} , \Psi_{\hbar} \bigr\rangle = \bigo ( \hbar ^{p+1})
\end{gather}
For $p=0$, we use that $\exp( i T_{\hbar} ( F_{\hbar}))= T_{\hbar} ( e^{if_0}) + \bigo ( \hbar)$ so that
\begin{gather} \label{eq:lot}
 \bigl\langle \exp( i T_{\hbar} ( F_{\hbar})) \Psi_{\hbar} , \Psi_{\hbar} \bigr\rangle = \int_N e^{if_0} ( \si , \si ) \nu  + \bigo ( \hbar)
\end{gather}
where $\si$ is the principal symbol of $\Psi_{\hbar}$. By Lemma \ref{lem:coef} applied to $\delta = ( \si , \si ) \nu$, there exists $f_0$ such that the integral in (\ref{eq:lot}) vanishes.

Assume now that (\ref{eq:hyp_rec}) holds  for some $p \in \N$. By (\ref{eq:approx_exp_op}), $\exp( i T_{\hbar} ( F_{\hbar})) = \exp( i T_{\hbar} ( F^p_{\hbar})) + \bigo ( \hbar^{p+1})$ so
$$ \bigl\langle \exp( i T_{\hbar} ( F^p_{\hbar})) \Psi_{\hbar} , \Psi_{\hbar} \bigr\rangle = \bigo ( \hbar ^{p+1}) = \hbar^{p+1} z + \bigo ( \hbar^{p+2}) $$
Since $T_{\hbar} (f_{p+1}) \exp( i T_{\hbar} ( F^p_{\hbar})) = T_{\hbar} ( f_{p+1} e^{if_0}) + \bigo ( \hbar)$ we deduce from (\ref{eq:approx_exp_op}) that
\begin{xalignat*}{2}
 &  \bigl\langle \exp( i T_{\hbar} ( F_{\hbar})) \Psi_{\hbar} , \Psi_{\hbar} \bigr\rangle \\  & = \bigl\langle \exp( i T_{\hbar} ( F^p_{\hbar})) \Psi_{\hbar} , \Psi_{\hbar} \bigr\rangle + i \hbar^{p+1} \bigl\langle T_{\hbar} ( f_{p+1} e^{if_0})   \Psi_{\hbar} , \Psi_{\hbar} \rangle   + \bigo ( \hbar^{p+2})   \\
& = \hbar^{p+1} \Bigl( z+ i \int_{S^1} f_{p+1} e^{if_0} ( \si, \si ) \nu   \Bigr) + \bigo ( \hbar^{p+2})
\end{xalignat*}
By Lemma \ref{lem:coef}, we can choose $f_{p+1}$ so that the coefficient of $\hbar^{p+1}$ vanishes, which completes the proof for $N$ being a circle.

The proof for $N = S^1 \times N'$ is completely similar. Let $\pi$ denote the projection $ N \rightarrow S^1$. We choose the coefficients $f_{\ell}$ such that they are constant on the fibers of $\pi$. We apply Lemma \ref{lem:coef} to $\delta = \pi_{*}  \bigl( (\si, \si) \nu \bigr)$. So for any $h \in \Ci(S^1)$, we have $\int_N (\pi^{*} h) (\si, \si) \nu  = \int_{S^1} h \delta$.
\end{proof}

\section{Dislocation vs. displacement on small scales}\label{sec-ddsc}

In the present section we shall fix a {\it test ball} $B \subset M$, that is an open ball whose closure lies in a Darboux chart equipped with coordinates $(x_1,...,x_{2n})$.
The ball $B$ is given by $\{\sum x_i^2 < 1\}$.
In the chart the symplectic form $\omega$ is given by $dx_1 \wedge dx_2 +...+dx_{2n-1} \wedge dx_{2n}$.

Fix a time-dependent Hamiltonian $\con{f}_t$ compactly supported in $B$ and define the rescaled Hamiltonian
$$ f_{t,s} (x) =  s^2\overline{f}_t(x/s), \qquad s\in [0,1] .$$
Observe that the corresponding quantum Hamiltonian $T_{\hbar} ( f_{t,s})$ satisfies
$$\ell_q (T_{\hbar} ( f_{t,s})) \leqslant s^2 \| \overline{f} \| .$$
Similarly, fix a classical state $\overline{\tau}$ supported in $B$ and let $\tau_{s}$ be the classical state obtained by contracting $\tau$ by a factor of $s$.
So $$Q_\hbar ( \tau_s) = \int_B P_{sx,\hbar } \; d\con{\tau} (x).$$
In what follows, we denote by $\bigo( (s^{-2} \hbar)^{\infty})$ any quantity depending on $s$ and $\hbar$ which is a $\bigo( (s^{-2} \hbar )^{N})$ for any $N\geqslant 0$.

\begin{thm}\label{thm-disloc-21}
Assume that the time-one-map of the Hamiltonian flow of $\overline{f}_t$ displaces the support of $\overline{\tau}$. Then  the Schr\"{o}dinger flow generated by $ T_\hbar(f_{t,s})$ $a_{\hbar,s}$-dislocates the state $Q_\hbar(\tau _{s})$ with $a_{\hbar,s}=  \bigo( (s^{-2} \hbar) ^\infty)$.
\end{thm}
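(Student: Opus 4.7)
The proof will mirror that of Corollary~\ref{thm-intro-main-1-i} (``displacement yields dislocation''), with the semiclassical parameter $\hbar$ replaced throughout by the effective small parameter $s^{-2}\hbar$. Setting $\theta_\hbar = Q_\hbar(\tau_s)$, $\sigma_\hbar = U_\hbar \theta_\hbar U_\hbar^{-1}$, and noting that the Hamiltonian flow of $f_{t,s}$ is $\phi_{t,s}(y) = s\overline{\phi}_t(y/s)$, the displacement hypothesis on $\overline{\phi}_1$ together with compactness of $\op{supp}(\overline{\tau})$ produces a constant $c>0$ such that $|sx - s\overline{\phi}_1(y)| \geq cs$ for all $x,y \in \op{supp}(\overline{\tau})$. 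Using $Q_\hbar(\tau_s) = \int P_{sx,\hbar}\,d\overline{\tau}(x)$, I rewrite
\[
\op{trace}(\theta_\hbar\sigma_\hbar) = \int_B\int_B \bigl|\langle\xi_{sx,\hbar}, U_\hbar \xi_{sy,\hbar}\rangle\bigr|^2\, d\overline{\tau}(x)\, d\overline{\tau}(y),
\]
so the task reduces to a uniform pointwise estimate on the propagated coherent-state overlap.

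The analytic heart of the proof is a small-scale version of the off-diagonal propagator estimate \eqref{eq:off_diag_propagator}: uniformly in $x,y \in \op{supp}(\overline{\tau})$,
\[
\bigl|\langle\xi_{sx,\hbar}, U_\hbar \xi_{sy,\hbar}\rangle\bigr| \leq C_N\bigl(e^{-c^2 s^2/(C\hbar)} + \hbar^N\bigr) = \bigo\bigl((s^{-2}\hbar)^\infty\bigr).
\]
My plan is to establish this by passing to rescaled Darboux coordinates about the origin of $B$: the dilation $y = sx'$ multiplies the symplectic form by $s^2$, equivalently replacing $\hbar$ by the effective quantization parameter $s^{-2}\hbar$, while $f_{t,s}$ pulls back to the $s$-independent compactly supported Hamiltonian $\overline{f}_t$, whose derivatives of every order are uniformly bounded. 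The WKB description of the propagator kernel (as in \cite{oim2} in the K\"ahler case) then produces the Gaussian decay at rate $(s^{-2}\hbar)^{-1}$ in the distance squared in the rescaled coordinates, uniformly on compact subsets. A microlocal cut-off modeled on Lemma~\ref{lem:cut} transfers the estimate from the chart back to $M$.

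The principal obstacle is making the propagator estimate uniform in $s$. A naive appeal to \eqref{eq:off_diag_propagator} fails because $|f_{t,s}|_k = s^{2-k}|\overline{f}_t|_k$ blows up as $s\to 0$ for $k\geq 3$, so the constants $C_N$ there depend badly on $s$; the rescaling above is precisely what trades this bad $s$-dependence for the benign $s^{-2}\hbar$-dependence of the effective parameter. A secondary care is needed to convert the pointwise bound into a bound on fidelity: the crude inequality $\Phi \leq \sqrt{\dim\Hilb_\hbar \cdot \op{trace}(\theta_\hbar\sigma_\hbar)}$ from \eqref{eq-fid-upbd} is insufficient because the factor $\sqrt{\dim\Hilb_\hbar} = \bigo(\hbar^{-n/2})$ cannot be absorbed into $(s^{-2}\hbar)^\infty$ when $s\to 0$. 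I plan to use the sharper bound $\Phi \leq \sqrt{r\cdot\op{trace}(\theta_\hbar\sigma_\hbar)}$, where $r$ is the rank of $\sqrt{\theta_\hbar}\sigma_\hbar\sqrt{\theta_\hbar}$, and observe that the effective rank of $\theta_\hbar$ is of order $(s^{-2}\hbar)^{-n}$: its trace is $1$, while its operator norm is controlled by $\|v_\hbar\| = \bigo((s^{-2}\hbar)^n)$ (via property \eqref{eq:P1} applied to the density $v_\hbar = u/R_\hbar$ of $\theta_\hbar$). After compressing $\theta_\hbar$ to its ``bulk'' spectral subspace of this dimension, with a tail estimate controlled by $\|\theta_\hbar\|_{op}$, the fidelity bound reads $\Phi \leq \sqrt{(s^{-2}\hbar)^{-n}\cdot(s^{-2}\hbar)^M}$ for every $M$, delivering $\Phi = \bigo((s^{-2}\hbar)^\infty)$ as required.
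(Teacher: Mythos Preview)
Your strategy departs from the paper's in a substantive way: you aim to control the propagated coherent-state overlap $|\langle \xi_{sx,\hbar}, U_\hbar \xi_{sy,\hbar}\rangle|$ directly, whereas the paper never touches the propagator kernel. Instead it introduces a cutoff $g_s(x)=\overline{g}(x/s)$ with $\overline{g}\equiv 1$ near $\op{supp}\overline{\tau}$ and proves three ingredients: (i) $T_\hbar(g_s)\theta_\hbar=\theta_\hbar+\bigo((s^{-2}\hbar)^\infty)$ in trace norm, via the coherent-state decay \eqref{eq:estime_outside_diag_exponential}; (ii) a small-scale Egorov theorem (Lemma~\ref{lem:small_scale_egorov}) for $U_\hbar T_\hbar(g_s)U_\hbar^*$ with remainders $\bigo((s^{-2}\hbar)^{N+1})$, derived from the product expansion \eqref{eq:Expansion_complement}; and (iii) the resulting $T_\hbar(g_s)U_\hbar T_\hbar(g_s)U_\hbar^*=\bigo((s^{-2}\hbar)^\infty)$. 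The fidelity is then bounded directly as $\Phi=\|\theta^{1/2}\sigma^{1/2}\|_{tr}$ by splitting off $T_\hbar(g_s)$ and using H\"older for Schatten norms, with the ``effective dimension'' appearing as $\|T_\hbar(g_s)\|_{HS}=\bigo((s^{-2}\hbar)^{-n})$; this is the clean version of your rank heuristic and works for any Borel measure $\overline{\tau}$, not only those with a density.

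The genuine gap in your proposal is the propagator estimate itself. Your rescaling argument is heuristic: the quantization lives on the compact manifold $M$ with a fixed line bundle and complex structure, and there is no dilation symmetry that literally replaces $\hbar$ by $s^{-2}\hbar$. The estimate \eqref{eq:off_diag_propagator} as stated carries constants $C',C_N'$ that depend on $f$, hence on $s$ through $|f_{t,s}|_k=s^{2-k}|\overline f_t|_k$; to make it uniform one would have to reopen the WKB construction of \cite{oim2} and track every constant against the blowing-up derivatives, which is precisely the work you have not done. The paper's route via \eqref{eq:Expansion_complement} is designed to avoid this: that estimate already isolates the derivative dependence $|f,g|_{2N+2}$, so feeding in symbols of the correct order yields the $(s^{-2}\hbar)$-scaling by bookkeeping rather than by a new analytic argument. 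If you want to salvage your approach, you would need an $s$-uniform coherent-state propagation theorem of comparable strength to Lemma~\ref{lem:small_scale_egorov}; absent that, the microlocal-cutoff/Egorov method is the one that actually closes.
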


It is interesting to apply this result with $s$ depending on $\hbar$. Assume that $(\overline{f}_t)$ displaces the support of $\overline{\tau}$.
\begin{enumerate}
\item Choose $s_{\hbar} = \hbar^{\epsilon}$ with $\epsilon \in [0,1/2)$. Then $ \bigo( (s_{\hbar} ^{-2} \hbar) ^\infty) = \bigo ( \hbar^{\infty})$ and we conclude that $ T_\hbar(f_{t,s_{\hbar}})$ $\bigo(\hbar^{\infty})$-dislocates $Q_\hbar(\tau _{s_{\hbar}})$ .
\item Choose $s_{\hbar} = r \hbar^{1/2} $. For any $a \in(0,1)$, if $r$ is sufficiently large and $\hbar$ sufficiently small,   $ T_\hbar(f_{t,s_{\hbar}})$ $a$-dislocates $Q_\hbar(\tau _{s_{\hbar}})$.  Observe that  $\ell_q ( T_{\hbar} ( f_{t,s}) ) \leqslant r^2 \| \overline{f}\| \hbar$. So the energy is of the same order as the quantum speed limit.
\end{enumerate}

When $\overline{\tau}$ is of class $\Cl^3$, we have the following converse. Write $\overline{\tau} = \overline{u}d\mu$, where $\overline{u} \in \Cl^3_c (B)$
and $\int_B \overline{u}d\mu=1$.
Choose a family $(s_\hbar)$ in $(0,1]$ and a positive number $\lambda$.
\begin{thm}\label{thm-disloc-1}
Assume that the quantum Hamiltonian $F_{t,\hbar}= T_\hbar(f_{t,s_\hbar})$ $a_\hbar$-dislocates the quantum state $Q_\hbar(\tau_{s_{\hbar}} )$. If $s_\hbar^{-2}\hbar$, $a_\hbar/(s_\hbar^{-2}\hbar)^n$ and $\hbar$ are small enough,
the time-one-map of $f_{t,s_{\hbar}}$ displaces the set $s_\hbar \cdot \{\overline{u} > \lambda\}$.
In particular,
\begin{equation}
\label{eq-disloc-1}
\begin{split}
\ell_q(F_\hbar) \geq e_B(s_\hbar \cdot \{\overline{u} > \lambda\})+ \bigo(\hbar) \\ \geq
s_\hbar^2 \cdot e_{\R^{2n}}(\{\overline{u} > \lambda\})(1+ \bigo(s_\hbar^{-2}\hbar))\;.
\end{split}
\end{equation}
\end{thm}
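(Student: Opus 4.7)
The plan is to adapt the proof of Theorem \ref{thm-intro-main-1} to the small-scale setting, paying careful attention to how the constants $b(g,f)$ and $c(f)$ from \eqref{eq-b-defin}--\eqref{eq-c-defin} scale under the dilation factor $s = s_\hbar$. Write $\tau_s = u_s\,d\mu$ with $u_s(y) = s^{-2n}\overline{u}(y/s)$, set $v_{\hbar,s} = u_s/R_\hbar$ and $g_{\hbar,s} = v_{\hbar,s}/\|v_{\hbar,s}\|$, so that $\theta_\hbar := Q_\hbar(\tau_s) = T_\hbar(v_{\hbar,s})$ is a positive multiple of $T_\hbar(g_{\hbar,s})$. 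Since $\Gamma_q$ is scale-invariant in each argument, Corollary \ref{cor-main-disloc} applies with $W = B$ to the data $(g_{\hbar,s}, f_{t,s})$.

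The first step is scaling bookkeeping. A $k$-th derivative under dilation by $s$ picks up a factor $s^{-k}$: one has $|f_{t,s}|_k = s^{2-k}|\overline{f}_t|_k$, and the Hamiltonian flow of $f_{t,s}$ is given explicitly by $\phi_{t,s}(x) = s\psi_t(x/s)$ where $\psi_t$ is the flow of $\overline{f}_t$, hence $|D^k \phi_{t,s}^{\pm 1}| = \bigo(s^{-(k-1)})$ for $k \geq 1$. Combined with the $\Ci$-uniform expansion $R_\hbar^{-1} = (2\pi\hbar)^n + \bigo(\hbar^{n+1})$ from \eqref{eq:estime_diag}, a short calculation gives $\|v_{\hbar,s}\| = (2\pi\hbar)^n s^{-2n}\|\overline{u}\|(1+\bigo(\hbar))$, $|g_{\hbar,s}|_k = \bigo(s^{-k})$ for $0 \leq k \leq 3$, and by the chain rule $|g_{\hbar,s}\circ \phi_{t,s}^{-1}|_k = \bigo(s^{-k})$. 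Inspecting \eqref{eq-b-defin} term by term, every summand is $\bigo(s^{-2})$: for instance $|f_{t,s}|_1\cdot|g_{\hbar,s}\circ \phi_{t,s}^{-1}|_3 = \bigo(s \cdot s^{-3}) = \bigo(s^{-2})$, and similarly $|f_{t,s}|_2 \cdot |g\circ \phi_{t,s}^{-1}|_2$ and $|f_{t,s}|_3 \cdot |g\circ \phi_{t,s}^{-1}|_1$. Hence $b := b(g_{\hbar,s}, f_{t,s}) = \bigo(s^{-2})$ and $c := c(f_{t,s}) = \bigo(1)$. Under the hypothesis $s^{-2}\hbar \to 0$, one has $b\hbar = \bigo(s^{-2}\hbar)$ and $c\hbar = \bigo(\hbar)$.

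The second step is to quantify the dislocation via $\Gamma_{q,\hbar}$. By \eqref{eq:P1} and the above, $\|\theta_\hbar\|_{op} \geq \|v_{\hbar,s}\|(1-\bigo(s^{-2}\hbar)) \gtrsim (s^{-2}\hbar)^n$, so Proposition \ref{prop-fidel} yields
\[
\Gamma_{q,\hbar} \leq \frac{\Phi(\theta_\hbar,\, U_\hbar\theta_\hbar U_\hbar^{-1})}{\|\theta_\hbar\|_{op}} \leq C\,a_\hbar(s^{-2}\hbar)^{-n}.
\]
Under smallness of $a_\hbar(s^{-2}\hbar)^{-n}$, the quantity $A_\hbar := \sqrt{\Gamma_{q,\hbar} + 3b\hbar}$ tends to $0$. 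Corollary \ref{cor-main-disloc} then asserts that the time-one-map of $f_{t,s}$ displaces $\{g_{\hbar,s} > A_\hbar\}$. Since $g_{\hbar,s}(x) = \overline{u}(x/s)/\|\overline{u}\| + \bigo(\hbar)$ uniformly, for $\hbar$ and $A_\hbar$ sufficiently small we have
\[
s\cdot\{\overline{u} > \lambda\} = \{x \colon \overline{u}(x/s) > \lambda\} \subset \{g_{\hbar,s} > \lambda/(2\|\overline{u}\|)\} \subset \{g_{\hbar,s} > A_\hbar\},
\]
so $f_{t,s}$ displaces $s\cdot\{\overline{u} > \lambda\}$.

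For the energy bound, inequality \eqref{eq-disloc-ell} gives $\ell_q(F_\hbar) \geq \ell_{cl}(f_{t,s}) - c\hbar$, and by the displacement just obtained $\ell_{cl}(f_{t,s}) \geq e_B(s\cdot\{\overline{u} > \lambda\})$. Since any compactly supported Hamiltonian in $B$ extends by zero to $\R^{2n}$ with the same Hofer length, $e_B \geq e_{\R^{2n}}$ on subsets of $B$; combined with the homogeneity \eqref{eq-des} one concludes
\[
\ell_q(F_\hbar) \geq e_B(s\cdot\{\overline{u} > \lambda\}) + \bigo(\hbar) \geq s^2\, e_{\R^{2n}}(\{\overline{u} > \lambda\})\bigl(1+\bigo(s^{-2}\hbar)\bigr),
\]
matching \eqref{eq-disloc-1}. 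The main obstacle is the scaling analysis in the first step: one must verify that every mixed derivative term appearing in $b(g,f)$ scales as precisely $s^{-2}$ and no worse, which boils down to a delicate $\Cl^3$ chain-rule computation resting on the explicit flow formula $\phi_{t,s}(x) = s\psi_t(x/s)$ and on the $\Ci$-uniform control of the Rawnsley function from \eqref{eq:estime_diag}.
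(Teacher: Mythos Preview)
Your proposal is correct and follows essentially the same route as the paper's proof: you reduce to Corollary~\ref{cor-main-disloc} applied to the normalized data $(g_{\hbar,s}, f_{t,s})$, establish the scaling estimates $b(g_{\hbar,s},f_{t,s}) = \bigo(s^{-2})$ and $c(f_{t,s}) = \bigo(1)$ via the explicit flow formula $\phi_{t,s}(x)=s\psi_t(x/s)$ and the $\Ci$-control of $R_\hbar$ from \eqref{eq:estime_diag}, bound $\Gamma_{q,\hbar}$ through Proposition~\ref{prop-fidel} and the lower bound $\|\theta_\hbar\|_{op}\gtrsim (s^{-2}\hbar)^n$, and finish with the inclusion and energy arguments. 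The paper packages the derivative bookkeeping using the language of ``symbols of order $N$'' (Lemmas~\ref{lem:3} and~\ref{lem:4}), but the content is the same as your direct scaling computation.
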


Here we use the notation $s \cdot X$ for the dilation of the subset $X \subset \R^{2n}$, see \eqref{eq-dil}. The second inequality follows from the scaling law \eqref{eq-des} for the displacement
energy and the obvious inequality $e_B(X) \geq e_{\R^{2n}}(X)$.

Let us illustrate Theorem \ref{thm-disloc-1} by setting
$s_\hbar \geq C\hbar^{\varepsilon}$ with $\epsilon \in [0,1/2)$. Then
we get that $\ell_q \gtrsim \hbar^{2\epsilon}$.
One can phrase this as follows:  on the scales exceeding the wave length scale,
the semiclassical speed limit is ``significantly
more restrictive" (in terms of the asymptotics in $\hbar$) than the universal quantum speed limit.

For the proof of Theorem \ref{thm-disloc-21} we will need the following improvement of (\ref{eq:estime_outside_diag}) and (\ref{eq:expansion_product}). Choose a distance $d$ on $M$ equivalent in any chart to the Euclidean distance. Then we have for any $x,y \in M$, $\hbar \in \Lambda$ and $N \in \N$
\begin{gather} \label{eq:estime_outside_diag_exponential} \tag{CS3}
 | e_{x, \hbar} (y) | \leqslant C_{N} \bigl( \hbar^{-n} e^{- d(x,y)^2/ ( C \hbar)} +  \hbar^{N} \bigr)
\end{gather}
where $C$ and $C_N$ do not depend on $x,y,\hbar$.
Note that this bound is obtained from \eqref{eq:off_diag_Toeplitz} by substituting $g=1$.

Furthermore the bidifferential operators $B_{\ell}$ and remainders $r_{N, \hbar}$ appearing in (\ref{eq:expansion_product}) satisfy
\begin{xalignat}{2} \label{eq:Expansion_complement} \tag{E2}
\begin{split}
& \text{- }  B_{\ell} \text{ is a bidifferential operator of order } 2 \ell \\
& \text{- }   r_{N, \hbar} (f,g) = \bigo ( \hb^{ N +1 }) |f,g|_{2N+2}
\end{split}
\end{xalignat}
These two properties hold for any K\"ahler quantization or its generalization to symplectic manifolds. 
(\ref{eq:Expansion_complement}) is proved in \cite[Theorem 1.6 of the arXiv version]{oim_symp}. The similar result for the Bargmann space was proved in \cite{our_paper}.

The proof of Theorem \ref{thm-disloc-1} is based on (\ref{eq:P1}), (\ref{eq:P2}) and (\ref{eq:P3}).

\subsection{Proof of Theorem \ref{thm-disloc-1}}\label{subsec-proof-disloc}

The proof is similar to the one of Theorem \ref{thm-intro-main-1}, cf. Section \ref{sec:proof-theorem-1.2ii}.
In what follows we write $c_1,c_2,...$ for positive constants independent on $\hbar$.

One has $d \tau_{s_\hbar} = u_{\hbar} d \mu$ where $u_{\hbar} ( x) = s_{\hbar}^{-2n} \overline{u}(x/ s_{\hbar})$. So
$$Q_{\hbar} ( \tau_{s_{\hbar}}) = T_\hbar ( v_{\hbar}) \qquad \text{ with }   v_{\hbar} = u_{\hbar} / R_{\hbar} .$$
Set $g_{\hbar} = v_{\hbar} / \| v_{\hbar}\|$ and recall the constant $b(g,f)$ and $c(f)$ defined in (\ref{eq-b-defin}) and (\ref{eq-c-defin}).

\begin{lemma} \label{lem:3}
$b( g_\hbar, ( f_{t, s_\hbar})) \leqslant c_1 s_{\hbar}^{-2} $ and $c( f_{t,s_{\hbar}} ) \leq c_2$.
\end{lemma}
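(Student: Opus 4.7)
The plan is to track how each object that enters the definitions of $b$ and $c$ scales in $s_\hbar$, and then add the exponents. First I will reduce to scaling estimates for $\bar f$ and $\bar u$. Since $f_{t,s_\hbar}(x)=s_\hbar^{2}\bar f_t(x/s_\hbar)$, a $j$-th derivative carries a factor $s_\hbar^{2-j}$, so $|f_{t,s_\hbar}|_j \leqslant \|\bar f_t\|_{\Cl^j}\,s_\hbar^{2-j}$; as $s_\hbar\leqslant 1$, the maximum over $j\leqslant k$ gives $|f_{t,s_\hbar}|_k \leqslant C\,s_\hbar^{\min(0,2-k)}$. In particular $|f_{t,s_\hbar}|_2\leqslant C$, which proves $c(f_{t,s_\hbar})\leqslant c_2$ at once.

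For $g_\hbar$, I will use the formula $g_\hbar = v_\hbar/\|v_\hbar\|$ with $v_\hbar = u_\hbar/R_\hbar$. The $\Ci$-form of (\ref{eq:estime_diag}) says $R_\hbar^{-1}=(2\pi\hbar)^n(1+\bigo(\hbar))$ in $\Ci$ topology, so $|R_\hbar^{-1}|_k \leqslant C_k\hbar^n$ for every $k$. Combined with $|u_\hbar|_k \leqslant \|\bar u\|_{\Cl^k}s_\hbar^{-2n-k}$ and Leibniz, I get $|v_\hbar|_k \leqslant C\hbar^n s_\hbar^{-2n-k}$ for $k\leqslant 3$, while $\|v_\hbar\|$ is bounded below by $c\hbar^n s_\hbar^{-2n}$. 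Hence
\begin{equation*}
|g_\hbar|_k \leqslant C\,s_\hbar^{-k},\qquad k=0,1,2,3.
\end{equation*}

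Next I analyse the flow. Rescaling the ODE, the Hamiltonian flow of $f_{t,s_\hbar}$ is $\phi_t(x)=s_\hbar \bar\phi_t(x/s_\hbar)$, where $\bar\phi_t$ is the (fixed, $s_\hbar$-independent) flow of $\bar f_t$. Consequently $|\phi_t^{\pm 1}|_1=\bigo(1)$ and $|\phi_t^{\pm 1}|_j = \bigo(s_\hbar^{1-j})$ for $j\geqslant 1$. Faà di Bruno then produces, for $k\leqslant 3$,
\begin{equation*}
|g_\hbar\circ\phi_t^{-1}|_k \;\leqslant\; C\sum_{j=1}^{k}|g_\hbar|_j\!\!\sum_{i_1+\cdots+i_j=k,\,i_\ell\geqslant 1}\!\!\!\prod_{\ell=1}^{j}|\phi_t^{-1}|_{i_\ell}\;\lesssim\; \sum_{j}s_\hbar^{-j}\,s_\hbar^{-(k-j)}\;=\;\bigo(s_\hbar^{-k}).
\end{equation*}
The same bound holds for $g_\hbar$ itself and (by Leibniz) for the product $g_\hbar\cdot g_\hbar\circ\phi^{-1}$.

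It now remains to substitute these five bounds into the definition of $b(g_\hbar,f_{t,s_\hbar})$. The first three terms contribute $\alpha\cdot\bigo(s_\hbar^{-2})$. For the Poisson-bracket term,
\begin{equation*}
|f_{t,s_\hbar},g_\hbar\circ\phi_t^{-1}|_{1,3}= |f|_1|g|_3+|f|_2|g|_2+|f|_3|g|_1 \lesssim s_\hbar\cdot s_\hbar^{-3}+1\cdot s_\hbar^{-2}+s_\hbar^{-1}\cdot s_\hbar^{-1}=\bigo(s_\hbar^{-2}),
\end{equation*}
uniformly in $t$, and likewise $|g_\hbar,g_\hbar\circ\phi^{-1}|_2 = \bigo(s_\hbar^{-2})$. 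Taking the maximum yields $b(g_\hbar,f_{t,s_\hbar})\leqslant c_1 s_\hbar^{-2}$. The main technical point, and the only place where we use the full assumption $\bar u\in\Cl^3$, is the Faà di Bruno estimate; everything else is routine bookkeeping in $\hbar$ and $s_\hbar$.
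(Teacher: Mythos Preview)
Your proof is correct and follows essentially the same route as the paper's: both track that $f_{t,s_\hbar}$ scales like a symbol of order $-2$, that $g_\hbar$ and $g_\hbar\circ\phi_{t,s_\hbar}^{-1}$ scale like symbols of order $0$, and then feed these into the definition of $b$. The paper packages the estimates $|j_\hbar|_p\leqslant C s_\hbar^{-N-p}$ in the language of ``symbol of order $N$'' and records the flow bound $|\phi_{t,s}^{-1}|_p\leqslant C_p s^{1-p}$, whereas you compute the same bounds explicitly via Leibniz and Fa\`a di Bruno; the substance is identical.
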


\begin{proof}
We say that a family $(j_{\hbar}, \hbar \in \Lambda)$ is a symbol of order $N$ if there exists $C>0$ such that for $p =0,1,2,3$ and $\hbar \in \Lambda$, one has
$$ |j_{\hbar} |_p \leqslant C s_{\hbar}^{-N-p} $$
If the family depends also on $t \in [0,1]$, we require that $C$ does not depend on $t$. Observe that $(u_{\hbar})$ is a symbol of order $2n$. By property (\ref{eq:estime_diag}),
$$|\hbar^{-n} R_{\hbar}^{-1}|_p \leqslant C_p.$$
Consequently, $( \hbar^{-n} R_{\hbar}^{-1} u_{\hbar}  )$ is also a symbol of order $2n$. Since
\begin{gather} \label{eq:equiv_norm_vhbar}
\| v_{\hbar} \| \sim ( 2 \pi s_{\hbar}^{-2} \hbar)^n \| \overline{u} \|,
\end{gather}
it comes that $g_{\hbar}$ is a symbol of order $0$.

The Hamiltonian flow $\phi_{t,s}$ of $(f_{t, s})$ is given by $\phi_{t,s} (x) = s \overline{\phi}_t (x/s)$, with $\overline{\phi}_t$ the Hamiltonian flow of $\overline{f}_t$. So we have for any $p \in \N$
\begin{gather} \label{eq:estim_flow}
|\phi_{t,s}^{-1}|_p \leqslant C_ps^{1-p} \qquad \forall t \in [0,1]\;
\end{gather}
From this, one deduces that $(g_{\hbar} \circ \varphi_{t,s_{\hbar}}^{-1})$ is also a symbol of order $0$. Using now that $(f_{t, s_{\hbar}})$ is a symbol of order $-2$, we conclude easily.
\end{proof}

Let $U_{\hbar}$ be the time-one-map of the Schr\"odinger flow of $T_{\hbar} ( f_{t, s_\hbar})$. Set $\Gamma_{q,\hbar}:= \Gamma_q ( \theta_\hbar, U_\hbar\theta_\hbar U_\hbar^*)$, where
$\theta_\hbar:= Q(\tau_{s_\hbar})$.

\begin{lemma} \label{lem:4}
$\Gamma_{q,\hbar} \leqslant c_3 a_{\hbar} ( s_{\hbar}^{-2} \hbar ) ^{-n} $ if $ s_{\hbar}^{-2} \hbar$ is sufficiently small.
\end{lemma}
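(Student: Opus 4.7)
The plan is to apply Proposition~\ref{prop-fidel} to control $\Gamma_{q,\hbar}$ via the fidelity (which is bounded by the $a_\hbar$-dislocation hypothesis) together with a lower bound on the operator norm of $\theta_\hbar$. The bulk of the work is the lower bound $\|\theta_\hbar\|_{op} \gtrsim (s_\hbar^{-2}\hbar)^n$.

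First, since $U_\hbar$ is unitary, $\|U_\hbar \theta_\hbar U_\hbar^*\|_{op} = \|\theta_\hbar\|_{op}$, and by hypothesis $\Phi(\theta_\hbar, U_\hbar \theta_\hbar U_\hbar^*) \leq a_\hbar$. Proposition~\ref{prop-fidel} then yields
\begin{equation*}
\Gamma_{q,\hbar} \;\leq\; \frac{\Phi(\theta_\hbar,\, U_\hbar\theta_\hbar U_\hbar^*)}{\|\theta_\hbar\|_{op}} \;\leq\; \frac{a_\hbar}{\|\theta_\hbar\|_{op}},
\end{equation*}
so the problem reduces to establishing a lower bound of the form $\|\theta_\hbar\|_{op} \geq c(s_\hbar^{-2}\hbar)^n$ when $s_\hbar^{-2}\hbar$ is small.

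For this, recall $\theta_\hbar = T_\hbar(v_\hbar)$ with $v_\hbar = u_\hbar/R_\hbar$. The G\"arding estimate (\ref{eq:P1}) gives $\|\theta_\hbar\|_{op} \geq \|v_\hbar\| - \alpha |v_\hbar|_2 \hbar$. By \eqref{eq:equiv_norm_vhbar}, $\|v_\hbar\| \sim (2\pi s_\hbar^{-2}\hbar)^n \|\overline{u}\|$. Moreover, in the proof of Lemma~\ref{lem:3} it was shown that $\hbar^{-n} R_\hbar^{-1} u_\hbar$ is a symbol of order $2n$, hence $|v_\hbar|_2 \leq C \hbar^n s_\hbar^{-2n-2}$, so that
\begin{equation*}
|v_\hbar|_2 \,\hbar \;\leq\; C (s_\hbar^{-2}\hbar)^n \cdot (s_\hbar^{-2}\hbar).
\end{equation*}
Combining, $\|\theta_\hbar\|_{op} \geq (s_\hbar^{-2}\hbar)^n \bigl( (2\pi)^n\|\overline{u}\| + \bigo(s_\hbar^{-2}\hbar) \bigr)$, so for $s_\hbar^{-2}\hbar$ sufficiently small we get $\|\theta_\hbar\|_{op} \geq c (s_\hbar^{-2}\hbar)^n$ with $c$ depending only on $\overline{u}$. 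Setting $c_3 = c^{-1}$ completes the proof.

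The only real obstacle is ensuring the lower bound for $\|\theta_\hbar\|_{op}$ dominates the G\"arding remainder; this is precisely why the smallness condition on $s_\hbar^{-2}\hbar$ is imposed, and it is handled by the symbol-order bookkeeping already set up in Lemma~\ref{lem:3}.
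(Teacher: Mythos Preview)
Your proof is correct and follows essentially the same approach as the paper: apply Proposition~\ref{prop-fidel} together with $\|U_\hbar\theta_\hbar U_\hbar^*\|_{op}=\|\theta_\hbar\|_{op}$ to reduce to a lower bound on $\|\theta_\hbar\|_{op}$, and then obtain that lower bound from (\ref{eq:P1}), the asymptotic (\ref{eq:equiv_norm_vhbar}), and the symbol-order estimates established in Lemma~\ref{lem:3}. Your explicit bookkeeping $|v_\hbar|_2\,\hbar \leq C(s_\hbar^{-2}\hbar)^{n+1}$ is exactly what the paper's one-line invocation of the symbol calculus amounts to.
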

\begin{proof}
Using (\ref{eq:P1}), the equivalence (\ref{eq:equiv_norm_vhbar}) and the fact that $(\hbar^{-n}v_\hbar) $ is a symbol of order $0$, we obtain that
$$ \| \theta_{\hbar} \|_{op} = ( 2 \pi s_\hbar^{-2} \hbar)^{n} \| \overline{u} \| \bigl( 1 + \bigo ( s_{\hbar}^{-2} \hbar )  \bigr).
$$
So if $ s_{\hbar}^{-2} \hbar$ is sufficiently small, $\|\theta_\hbar\|_{op} \geq c_4 (s_\hbar^{-2}\hbar)^n$. Proposition \ref{prop-fidel} yields
$ \Ga_{q,\hbar} \leqslant a_{\hbar}/ ( c_4 (s_{\hbar}^{-2} \hbar)^n.$
\end{proof}

 By Corollary \ref{cor-main-disloc},
the Hamiltonian $f_{t,s_\hbar}$ displaces the set
$X:=\{g_\hbar > A\}$ and
$$
\ell_{q,\hbar} \geq e_B (g_\hbar,A) -c_2 \hbar
$$
with
$$A = \sqrt{\Gamma_{q,\hbar}+ 3b_\hbar\hbar} \leq \sqrt{c_4  a_\hbar ( s_\hbar^{-2} \hbar)^{-n}  + c_5 (s_\hbar^{-2} \hbar) } \;.$$
Recalling the definition of $g_\hbar$, we readily get that
$X$ contains the set $$s_\hbar \cdot \{\overline{u} > A'\}, \;\; \text{where}\;\;A' = A \cdot \|\overline{u}\|(1+\bigo(\hbar))\;.$$
If now $\hbar$, $ a_\hbar ( s_\hbar^{-2} \hbar)^{-n} $  and $s_\hbar^{-2} \hbar$  are small enough, we achieve that $A' < \lambda$.
It follows that the time-one-map of $f_{t,s_\hbar}$ displaces the set
$s_\hbar \cdot \{\overline{u} > \lambda \}$, and furthermore by \eqref{eq-main-disloc} and \eqref{eq-des}
\begin{equation}
\label{eq-disloc-11}
\begin{split}
\ell_q \geq e_B(s_\hbar \cdot \{\overline{u} > \lambda\}) - c_2 \hbar \geq e_{\R^{2n}}(s_\hbar \cdot \{\overline{u} > \lambda\})- c_2\hbar =
\\
s_\hbar^2 \cdot e_{\R^{2n}}(\{\overline{u} > \lambda\})(1-c_6(s_\hbar^{-2}\hbar))\;.
\end{split}
\end{equation}
This completes the proof.
\qed

\medskip
\noindent
\begin{rem}\label{rem-assumptions-pf}{\rm The above proof holds for any quantization
satisfying properties (\ref{eq:estime_diag}), (\ref{eq:P1}), (\ref{eq:P2}), and (\ref{eq:P3}). Observe also Theorem \ref{thm-disloc-1} still holds if we consider instead of $(f_{t,s_{\hbar}})$ any Hamiltonian $(\tilde{f}_{t,\hbar})$ satisfying for $p=0,1,2,3$, $t \in [0,1]$,
$$ | \tilde{f}_{t,\hbar} |_p \leqslant C s_{\hbar}^{2 -p} , \qquad |\tilde{\phi}_{t,\hbar}^{-1}|_p \leqslant C s_{\hbar}^{1-p} $$
where $\tilde{\phi}_{t,\hbar}$ the Hamiltonian flow of $(\tilde{f}_{t,\hbar})$.
}
\end{rem}

\subsection{Proof of Theorem \ref{thm-disloc-21}} \label{sec:proof-theorem-disloc_21}

The proof is similar to the proof of Theorem \ref{thm-former-main-i}.  However, we won't formalize the notion of microsupport at small scale even if the proof is based on this idea. In the proof, we will use an auxiliary function $\con{g} \in \Ci (M)$ supported in $B$ with value in $[0,1]$ and such that $\con{g}=1$ on a neighborhood of $\op{supp} \overline{\tau}$ and the Hamiltonian flow of $\con{f}_t$ displaces the support of $\con{g}$. Let $g_{s} ( x) = \con{g}( x / s)$.

As in the formulation of Theorem \ref{thm-disloc-21}, denote by $\tau_{s}$ the rescaled measure.


\begin{lemma} \label{lem:small-scale-cutoff}
We have
$$  \bigl\| T_\hbar (g_{s } ) Q_\hbar ( \tau_{s}) - Q_{\hbar} ( \tau _{s}) \bigr\|_{tr} =   \bigo ( ( s^{-2} \hbar)^{\infty} )   .$$
\end{lemma}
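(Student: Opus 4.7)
The plan is to subtract off the identity: since $T_\hbar(1) = \mathrm{id}_{\Hilb_\hbar}$ (which follows directly from the defining relation $\langle T_\hbar(f)s,t\rangle = \langle fs,t\rangle$), we may rewrite $T_\hbar(g_s) Q_\hbar(\tau_s) - Q_\hbar(\tau_s) = T_\hbar(h_s) Q_\hbar(\tau_s)$ with $h_s := g_s - 1$. By hypothesis $\overline{g} \equiv 1$ on some neighborhood of $\supp \overline{\tau}$, so there exists $\delta > 0$, independent of $s$ and $\hbar$, such that in the Darboux chart on $B$ the function $h_s$ vanishes on an $s\delta$-neighborhood of $s\cdot\supp \overline{\tau}$. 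Writing $Q_\hbar(\tau_s) = \int_B P_{sx,\hbar}\,d\overline{\tau}(x)$ and using the triangle inequality for the trace norm together with the fact that $\overline{\tau}$ is a probability measure, it suffices to prove the uniform bound $\|T_\hbar(h_s) P_{y,\hbar}\|_{tr} = \bigo((s^{-2}\hbar)^\infty)$ for $y \in s \cdot \supp \overline{\tau}$.

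Next I would use that $P_{y,\hbar}$ is rank one with range spanned by the unit vector $\xi_{y,\hbar}$, so the trace norm above equals $\|T_\hbar(h_s)\xi_{y,\hbar}\|$. Since $T_\hbar(f)$ is by construction the orthogonal projection onto $\Hilb_\hbar$ of pointwise multiplication by $f$, one has $\|T_\hbar(h_s)\xi_{y,\hbar}\| \leq \|h_s\xi_{y,\hbar}\|_{L^2}$. Combining the Gaussian off-diagonal estimate (\ref{eq:estime_outside_diag_exponential}) for $|e_{y,\hbar}(z)|$ with the normalization $\|e_{y,\hbar}\|^2 = R_\hbar(y) \sim (2\pi\hbar)^{-n}$ coming from (\ref{eq:estime_diag}) yields the pointwise bound $|\xi_{y,\hbar}(z)|^2 \leq C_N(\hbar^{-n} e^{-2d(y,z)^2/(C\hbar)} + \hbar^{2N})$. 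Using $|h_s| \leq 1$ and that $h_s$ is supported in $\{z : d(y,z) \geq c_0 s\}$ for every $y \in s\cdot\supp\overline{\tau}$ (by equivalence of $d$ with the Euclidean distance in the chart), a standard Gaussian tail computation in $\R^{2n}$ gives $\|h_s\xi_{y,\hbar}\|_{L^2}^2 \leq C_N'(e^{-c_1/(s^{-2}\hbar)} + \hbar^{2N})$. Since $s \leq 1$, $\hbar \leq s^{-2}\hbar$, so $\hbar^{2N} \leq (s^{-2}\hbar)^{2N}$, and the exponential is manifestly $\bigo((s^{-2}\hbar)^\infty)$; taking square roots and integrating against $\overline\tau$ finishes the argument.

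The only delicate point, and the reason Lemma \ref{lem:small-scale-cutoff} is not a formal consequence of the general microsupport machinery developed earlier, is that the remainder must be expressed in the small parameter $s^{-2}\hbar$ and not merely in $\hbar$. This forces the use of the sharp exponential bound (\ref{eq:estime_outside_diag_exponential}): the natural coherent-state length scale $\sqrt{\hbar}$ is dominated by the separation $\sim s$ between $y$ and $\supp h_s$ precisely when $s^{-2}\hbar$ is small, yielding the decisive factor $e^{-c_1/(s^{-2}\hbar)}$. The weaker polynomial off-diagonal bound (\ref{eq:estime_outside_diag}), which sufficed for the constant-scale arguments of Sections \ref{sec-wdyd} and \ref{sec-MS}, would only deliver $\bigo(\hbar^\infty)$ and is insufficient once $s$ is allowed to shrink together with $\hbar$.
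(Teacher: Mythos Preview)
Your proof is correct and follows essentially the same route as the paper's: reduce to a single coherent projector via the integral representation of $Q_\hbar(\tau_s)$, bound $\|T_\hbar(h_s)\xi_{y,\hbar}\| \leq \|h_s\xi_{y,\hbar}\|_{L^2}$, and then use the Gaussian decay (\ref{eq:estime_outside_diag_exponential}) together with the $s$-scale separation of $y\in s\cdot\supp\overline\tau$ from $\supp h_s$ to produce the factor $e^{-c/(s^{-2}\hbar)}$. The only cosmetic difference is that the paper explicitly splits the $L^2$ integral into $B$ (where the Euclidean Gaussian computation applies) and $M\setminus B$ (handled by the fixed-scale bound (\ref{eq:estime_outside_diag}), since $s\cdot\supp\overline\tau$ stays in a fixed compact $K\subset B$); your one-shot ``Gaussian tail in $\R^{2n}$'' implicitly absorbs the off-chart piece, which is harmless because $d(y,M\setminus B)\geq d(K,M\setminus B)>0\geq c_0 s$ for $s\leq 1$, so (\ref{eq:estime_outside_diag_exponential}) already gives $\bigo(\hbar^\infty)$ there.
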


The proof is based on (\ref{eq:estime_outside_diag_exponential}).

\begin{proof}
Let $j_s = 1 - g_s$. Let $r = s \hbar^{-1/2} $. Let $x\in \op{supp} \con{\tau}$. We will first prove that there exists $C'>0$ such that
\begin{gather} \label{eq:9}
 \| j_s e_{sx,\hbar} \| ^2 =  \bigo ( \hbar^{\infty}) + \bigo ( \hbar^{-n} e^{- r^2 / C' } )
\end{gather}
where the $\bigo$'s are uniform with respect to $x$.

Let $K$  be a compact set of $B$ containing $s (\op{supp}\con{\tau}) $ for any $ s\in [0,1]$. Applying (\ref{eq:estime_outside_diag}) to the disjoint closed sets $K$ and $M \setminus B$, we get:
\begin{gather} \label{eq:8}
 \int_{M \setminus B} |j_s(y)|^2 | e_{sx, \hbar}(y) |^2 \; d\mu(y) =   \int_{M \setminus B}  | e_{sx, \hbar}(y) |^2 \;d \mu (y)=  \bigo ( \hbar^{\infty})
\end{gather}

Let $j \in \Ci ( \R^{2n})$ given by $j(y) = 1 - \con{g}(y) $ if $y \in B$, and $j( y) =1 $ otherwise. For any $y \in B$, we have $ j_s ( y) = j ( y/s)$. Denoting by $\mu_L$  the Lebesgue measure of $\R^{2n}$, we have
\begin{gather*}
 \hbar^{-n} \int_B |j_s (y) |^2 e^{-2   |s x - y |^2 / ( C \hbar  ) } \; d \mu_L (y)   = r^{2n} \int_{s^{-1}B} |j(y)|^2 e^{-2  r^2 |x - y |^2   / C  } \; d\mu_L ( y) \\
 \leqslant  r^{2n} \int_{\R^{2n} } |j(y)|^2 e^{-2  r^2 |x - y |^2  /C  } \; d \mu_L ( y)
\end{gather*}
Since  $\con{g} = 1 $ on a neighborhood of $\op{supp} \con{\tau}$, there exists $c>0$ not depending on $x$ such that  $j(y) \neq 0$ implies $|x - y | \geqslant c$. So $j(y) e^{ -  r^2 |x-y|^2/C} \leqslant e^{-r^2/C' }$ with $C'= C/c^2$. Thus the previous integral is bounded above by
\begin{gather*}
 r^{2n} e^{-r^2/C'} \int_{\R^{2n}}  e^{- r^2 |x - y |^2/C } \; d\mu_L ( y) = e^{-r^2 / C'}  \int_{\R^{2n}}  e^{- |y |^2/C } \; d \mu_L ( y)
\end{gather*}
Using the estimate (\ref{eq:estime_outside_diag_exponential}), we obtain
\begin{gather} \label{eq:7}
  \int_{ B} |j_s (y)|^2 | e_{sx, \hbar} (y)|^2 \; d \mu(y) = \bigo ( \hbar^{\infty}) + \bigo ( \hbar^{-n} e^{-r^2 / C' }) \end{gather}
Finally Equations (\ref{eq:8}) and (\ref{eq:7}) imply (\ref{eq:9}).

Recall that $P_{z,\hbar}$ is the  orthogonal projector onto the line spanned by $e_{z, \hbar}$. For any endomorphism $T$ of $\Hilb_{\hbar}$, $\| T P_{z,\hbar} \|_{tr} = \| T  e_{z,\hbar} \| /   \| e_{z,\hbar} \|$.
 Since by (\ref{eq:estime_diag}),  $\| e_{z,\hbar} \| \sim (2\pi \hbar )^{-n/2}$, we deduce from (\ref{eq:9}) that for any $x \in \op{supp} ( \con{\tau})$,
 $$  \bigl\| T_{\hbar}( j_s) P_{sx,\hbar } \bigr\|_{tr}  = \bigo ( \hbar^{\infty}) + \bigo ( e^{- r^2 / (2 C')  }) = \bigo ( r^{-\infty}) $$
uniformly with respect to $x$. Here we have used that $\bigo ( \hbar^{\infty})$ is contained in $\bigo ( r^{-\infty})$ because $s \leqslant 1$ so that $\hbar \leqslant r^{-2}$.

Since $Q_\hbar ( \tau_s) = \int_M P_{sx,\hbar } \; d\con{\tau} (x)$, we have
$$ T_\hbar ( j_s) Q_\hbar ( \tau_s) = \int_M T_\hbar ( j_s) P_{sx,\hbar } \; d\con{\tau} (x), $$
and we get the conclusion by integrating the last estimate.
\end{proof}

 We say that a family $j = (j_s,
s \in (0,1])$ of $\Ci (M )$ is a symbol of order $m$ if
\begin{gather} \label{eq:symbol_def}
 | j_s |_p \leqslant C_p s ^{-m -p}, \qquad \forall p \in \N .
\end{gather}
We will also consider symbols depending on time $t \in [0,1]$. In that case, we require that (\ref{eq:symbol_def}) holds uniformly with respect to $t$. These symbols are similar, but different from the ones used in the proof of Lemmas \ref{lem:3} and \ref{lem:4}.

Recall that $f_{t,s}(x)= s^2 \overline{f}_t(x/s)$. Let $\phi_{t, s }$ be the corresponding Hamiltonian flow and $U_{s,\hbar} (t)$ be the Schr\"odinger flow generated by $T_\hbar ( f_{t, s})$.

\begin{lemma} \label{lem:small_scale_egorov}
There exists a family of symbols $(g^\ell_s)$, $\ell \in \N^*$, of order $2 \ell$, such that for any $N$, $s\in (0,1]$ and $\hbar \in \La$, we have
$$ U_{s,\hbar} T_\hbar(g_{s}) U_{s,\hbar}^* = T_\hbar(g_{s} \circ \phi_{s}^{-1} ) + \sum_{\ell=1 }^{N} \hbar^{\ell} T_\hbar( g^{\ell}_s  \circ \phi_{s} ^{-1} ) + \bigo ( ( s^{-2} \hbar)^{N+1})  $$
where $\phi_{s} = \phi_{1,s}$ and $U_{s,\hbar} = U_{s,\hbar}(1)$.
\end{lemma}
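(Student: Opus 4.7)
The strategy is the standard WKB-type construction for Egorov's theorem, with careful tracking of the $s$-scaling. Let $W_\hbar(t) := U_{s,\hbar}(t) T_\hbar(g_s) U_{s,\hbar}(t)^{-1}$, which satisfies the Heisenberg equation $\dot W_\hbar = \frac{i}{\hbar}[W_\hbar, T_\hbar(f_{t,s})]$ with initial condition $W_\hbar(0) = T_\hbar(g_s)$. I will look for an approximate solution of the form $V_\hbar(t) := \sum_{\ell=0}^N \hbar^\ell T_\hbar(h^\ell_{t,s})$ with $h^0_{0,s} = g_s$ and $h^\ell_{0,s} = 0$ for $\ell \geq 1$. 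Expanding each commutator $\frac{i}{\hbar}[T_\hbar(f_{t,s}), T_\hbar(h^\ell_{t,s})]$ via (\ref{eq:expansion_product}) truncated at order $M_\ell := N+1-\ell$ and matching coefficients of $\hbar^n$ for $n=0,\ldots,N$ forces the recursive transport equation
\begin{equation*}
\dot h^n_{t,s} - \{f_{t,s}, h^n_{t,s}\} = -\sum_{m=2}^{n+1} D_m(f_{t,s}, h^{n-m+1}_{t,s}),
\end{equation*}
where $D_m(f,g) := i(B_m(f,g) - B_m(g,f))$. Each such equation is linear first-order along the flow $\phi_{t,s}$ with a known source, and is solved by Duhamel's formula.

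I then verify by induction on $n$ that $h^n_{t,s}$ is a symbol of order $2n$ in the sense of (\ref{eq:symbol_def}). The base case $h^0_{t,s} = g_s \circ \phi_{t,s}^{-1}$ has order $0$ thanks to the flow bounds $|\phi_{t,s}^{\pm 1}|_p \leq C_p s^{1-p}$ from (\ref{eq:estim_flow}) combined with the chain rule. For the inductive step, the essential input is (\ref{eq:Expansion_complement}): since $B_m$ is bidifferential of order $2m$, pairing $f_{t,s}$ (a symbol of order $-2$) with $h^{n-m+1}_{t,s}$ (order $2(n-m+1)$ by induction) produces a symbol $D_m(f_{t,s}, h^{n-m+1}_{t,s})$ of order $-2+2(n-m+1)+2m = 2n$, independently of $m$. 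Integration of this source along $\phi_{t,s}$ preserves symbol orders, so $h^n_{t,s}$ has order $2n$.

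Finally I estimate the defect. By construction, the choice $M_\ell = N+1-\ell$ exactly cancels every contribution to $\dot V_\hbar - \frac{i}{\hbar}[V_\hbar, T_\hbar(f_{t,s})]$ at each order $\hbar^n$ with $n \leq N$, so the defect reduces to a sum of commutator-expansion remainders:
\begin{equation*}
R_\hbar(t) = -\sum_{\ell=0}^N \hbar^{\ell-1}\, i\bigl(r_{M_\ell,\hbar}(f_{t,s}, h^\ell_{t,s}) - r_{M_\ell,\hbar}(h^\ell_{t,s}, f_{t,s})\bigr).
\end{equation*}
By (\ref{eq:Expansion_complement}) combined with (\ref{eq:P1}), each term is bounded in operator norm by $\hbar^{\ell-1}\cdot \hbar^{M_\ell+1}|f_{t,s}, h^\ell_{t,s}|_{2M_\ell+2}$, and the symbol-order bookkeeping (each derivative contributes an extra factor $s^{-1}$) gives $|f_{t,s}, h^\ell_{t,s}|_{2M_\ell+2} = \bigo(s^{-2(N+1)})$, whence $\|R_\hbar(t)\|_{op} = \bigo((s^{-2}\hbar)^{N+1})$ uniformly in $t$. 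Duhamel applied to $W_\hbar - V_\hbar$, together with unitarity of $U_{s,\hbar}$, then yields $\|W_\hbar(1) - V_\hbar(1)\|_{op} = \bigo((s^{-2}\hbar)^{N+1})$. Setting $g^\ell_s := h^\ell_{1,s} \circ \phi_s$ (still a symbol of order $2\ell$, by the chain-rule bounds on $\phi_s$) yields the identity as stated. The main obstacle is the scaling bookkeeping: the truncation order $M_\ell$ must be calibrated to the symbol order $2\ell$ of $h^\ell_{t,s}$ so that the refined remainder estimate (\ref{eq:Expansion_complement}) delivers a uniform $(s^{-2}\hbar)^{N+1}$ bound rather than a geometric series in $s^{-2}\hbar$.
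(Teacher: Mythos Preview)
Your proof is correct and follows essentially the same route as the paper. The only cosmetic difference is that the paper parametrizes the ansatz by $g^\ell_{t,s} := h^\ell_{t,s}\circ\phi_{t,s}$ (so that the transport equations become pure integrals in $t$), works with the commutator expansion (\ref{eq:2}) truncated at level $N-\ell$ rather than the product expansion truncated at $M_\ell=N+1-\ell$, and differentiates $U_{s,\hbar}(t)^{*}T_\hbar(\cdot)U_{s,\hbar}(t)$ instead of $U_{s,\hbar}(t)T_\hbar(g_s)U_{s,\hbar}(t)^{*}$; the symbol-order bookkeeping and the use of (\ref{eq:Expansion_complement}) to produce the uniform $\bigo((s^{-2}\hbar)^{N+1})$ remainder are identical.
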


The proof is based on the following consequence of (\ref{eq:expansion_product}), (\ref{eq:Expansion_complement}):
\begin{gather} \label{eq:2}
\begin{split}
 (i\hbar)^{-1}  [ T_\hbar(f), T_\hbar(g) ]   = &  T_\hbar(\{ f, g\})  + \sum_{\ell = 1} ^{ N} \hbar^{\ell} T_\hbar ( P_\ell ( f,g))   \\ & +  \bigo ( \hbar^{N +1} ) |f,g|_{2N + 4}
\end{split}
\end{gather}
where the $P_{\ell}$ are bidifferential of order $2\ell +2$.

\begin{proof}
We define inductively a family $(g^\ell_{t , s } , t\in [0,1], s \in (0,1], \ell \in \N)$ of $\Ci (M)$ by $g^0_{t,s} = g_s$ and for any $\ell \in \N^*$,
\begin{gather} \label{eq:10}
 g^\ell_{ \tau ,s} = \sum_{m = 1 }^{\ell}  \int_0^\tau P_{m} ( f_{t,s} , g^{\ell-m}_{t, s } \circ \phi_{t,s}^{-1} )  \circ \phi_{t,s} \;  dt
\end{gather}
We check that $(g^\ell_{t,s})$ is a symbol of order $2 \ell $. To do this, we use the following facts:
\begin{enumerate}
\item $(f_{t,s})$ is a symbol of order $ -2$, $(g_s)$ is a symbol of order $0$.
\item  for any $p \in \N$,
$ | \phi_{t,s} |_p \leqslant C_p s^{ 1-p}$ and $| \phi_{t,s}^{-1} |_p \leqslant C_p s^{ 1-p}$.
\item if $j_{t,s}$ is a symbol, then $j_{t,s} \circ \phi_{t,s} $ and $j_{t,s} \circ \phi_{t,s} ^{-1}$ are symbols of the same order.
\item if $j$, $j'$ are symbol of order $m$, $m'$ and $Q$ is a bidifferential operator of order $q$, then $Q( j, j')$ is a symbol of order $m+m'+ q$.
\end{enumerate}
Then we deduce that $| f_{t,s} , g^\ell_{ t,s} \circ \phi_{t,s}^{-1}|_{2 (N-\ell)+ 4} $ is a $\bigo( s^{ -2 (N+1)})$. By (\ref{eq:2}) we obtain
\begin{gather} \label{eq:3}
\begin{split}    & ( i\hbar)^{-1} [ T_{\hbar} ( f_{t,s}), T_{\hbar} ( g^\ell_{t,s} \circ \phi_{t,s}^{-1} ) ]  - T_{\hbar} ( \{ f_{t,s} ,g^\ell_{ t,s} \circ \phi_{t,s}^{-1} \} ) \\ & = \sum_{m=1}^{N-\ell} \hbar^{m} T_{\hbar} (P_m ( f_{t,s} , g^\ell_{t,s} \circ \phi_{t,s}^{-1}  ))  + \hbar^{-\ell} \bigo ( (  s^{-2} \hbar)^{N+1})
\end{split}
\end{gather}
Fix $N \in \N$ and set $j_{t,s, \hbar} = \sum_{\ell = 0}^{N} \hbar^{\ell} g^\ell_{t,s}$. We have by (\ref{eq:10}) that
\begin{gather} \label{eq:4}
 \Bigl(\frac{d}{dt} j_{t,s,\hbar} \Bigr) \circ \phi_{t,s}^{-1} = \sum_{\ell = 1}^{N} \hbar^{\ell} \sum_{m=1}^{\ell} P_m( f_{t,s} , g^{\ell - m}_{t,s}\circ \phi_{t,s}^{-1})
\end{gather}
We can now compute:
\begin{gather*}
\begin{split}
   \frac{d}{dt} \bigl( U_{s,\hbar}(t)^* T_{\hbar}   ( j_{t,s,\hbar} \circ \phi_{t,s}^{-1} ) U_{s,\hbar}(t) \bigr)  = U_{s,\hbar}(t)^* \Bigl( \frac{i}{\hbar} [ T_{\hbar} (f_{t,s}) , T_{\hbar} ( j_{t,s,\hbar} \circ \phi_{t,s}^{-1} ) ] \\
+ T_{\hbar} \Bigl( \{ f_{t,s} ,  j_{t,s,\hbar} \circ \phi_{t,s}^{-1} \} +  \Bigl(\frac{d}{dt} j_{t,s,\hbar} \Bigr) \circ \phi_{t,s}^{-1} \Bigr)  \Bigr) U_{s,\hbar}(t)
 = \bigo ( ( s^{-2} \hbar)^{N+1})
\end{split}
\end{gather*}
where we have used (\ref{eq:3}) and (\ref{eq:4}).
Integrating this inequality, we get the result with $g^\ell_s = g^\ell_{1,s}$.
\end{proof}

\begin{lemma} \label{lem:product}
$ T_\hbar (g_{s}) U_{s,\hbar} T_\hbar (g_{s}) U^*_{s,\hbar} = \bigo ( ( s^{-2} \hbar )^{\infty})$.
\end{lemma}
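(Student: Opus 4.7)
The plan is to combine the Egorov expansion of Lemma \ref{lem:small_scale_egorov} with the fact that $\con{\phi}_1$ displaces $\op{supp} \con{g}$: this ensures that each symbol $g_s^\ell \circ \phi_s^{-1}$ appearing in the expansion of $U_{s,\hbar} T_\hbar(g_s) U^*_{s,\hbar}$ has support disjoint from $\op{supp} g_s$. Consequently the bidifferential operators $B_j$ in the product formula (\ref{eq:expansion_product}) annihilate every pair $(g_s, g_s^\ell \circ \phi_s^{-1})$, and only the sharp remainder from (\ref{eq:Expansion_complement}) survives, which will turn out to be small in the $s^{-2}\hbar$ scale thanks to the symbol calculus of the previous proof.

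The first task is to verify by induction on $\ell$ that $\op{supp} g_s^\ell \subset \op{supp} g_s = s\Sigma$, where $\Sigma = \op{supp} \con{g}$. The case $\ell = 0$ (with the convention $g_s^0 = g_s$) is trivial; for the step, formula (\ref{eq:10}) defines $g_s^\ell$ as a time integral of terms $P_m(f_{t,s}, g_s^{\ell-m} \circ \phi_{t,s}^{-1}) \circ \phi_{t,s}$, and by the induction hypothesis together with the bidifferentiality of $P_m$, the support before composition lies in $s \con{\phi}_t(\Sigma)$, which composition with $\phi_{t,s}$ pulls back to $s\Sigma$. Combining with the displacement assumption, $\op{supp}(g_s^\ell \circ \phi_s^{-1}) \subset s \con{\phi}_1(\Sigma)$ is disjoint from $\op{supp} g_s = s\Sigma$ for every $\ell \geq 0$ and every $s \in (0,1]$.

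With disjointness in hand, I multiply Lemma \ref{lem:small_scale_egorov} by $T_\hbar(g_s)$ on the left (using $\|T_\hbar(g_s)\|_{op} \leq 1$) and apply (\ref{eq:expansion_product}) with the sharp bound (\ref{eq:Expansion_complement}) to each factor $T_\hbar(g_s) T_\hbar(g_s^\ell \circ \phi_s^{-1})$: every term $B_j(g_s, g_s^\ell \circ \phi_s^{-1})$ vanishes, leaving only a remainder of size $\hbar^{M+1}\, |g_s, g_s^\ell \circ \phi_s^{-1}|_{2M+2}$. Using the symbol estimates from the proof of Lemma \ref{lem:small_scale_egorov}, namely $|g_s|_p \lesssim s^{-p}$ and $|g_s^\ell \circ \phi_s^{-1}|_p \lesssim s^{-2\ell - p}$, this seminorm is $\lesssim s^{-2\ell - 2M - 2}$, so the $\ell$-th summand contributes
\[
 \hbar^{\ell} \cdot \hbar^{M+1} s^{-2\ell - 2M - 2} = (s^{-2}\hbar)^{\ell + M + 1}.
\]
Choosing $M$ and $N$ both at least $K$ gives $\bigo((s^{-2}\hbar)^K)$ for arbitrary $K \in \N$.

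The main subtlety is the inductive support statement: one has to check that the iterative construction (\ref{eq:10}) preserves the support of $g_s$, so that the displacement of $\op{supp}\con{g}$ by $\con{\phi}_1$ applies uniformly to every $g_s^\ell$. Once this is in place, the rest is a balancing act: derivatives of $g_s^\ell \circ \phi_s^{-1}$ produce factors $s^{-1}$, but the sharp number of derivatives appearing in (\ref{eq:Expansion_complement}) is exactly what is needed for these factors to combine with the powers of $\hbar$ into $(s^{-2}\hbar)^\infty$ rather than merely $\hbar^\infty$.
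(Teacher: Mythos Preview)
Your argument is correct and follows the same route as the paper: both use the Egorov expansion of Lemma~\ref{lem:small_scale_egorov}, the inductive observation that $\op{supp} g^\ell_s \subset \op{supp} g_s$ (which the paper asserts in one line from~(\ref{eq:10})), the disjointness of $\op{supp} g_s$ and $\op{supp}(g^\ell_s\circ\phi_s^{-1})$ coming from the displacement hypothesis, and the sharp remainder bound~(\ref{eq:Expansion_complement}) together with the symbol orders $0$ and $2\ell$. Your write-up in fact spells out the support induction and the power-counting more explicitly than the paper does.
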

This lemma is a consequence of Lemma \ref{lem:small_scale_egorov} and estimate (\ref{eq:Expansion_complement}).

\begin{proof} Since the flow of $\overline{f}_t$ displaces the support of $\overline{g}$, $g_{s}  $ and $g_{s} \circ \phi_{s}^{-1}$ have disjoint supports. By the definition (\ref{eq:10}) of $g^{\ell}_s = g^\ell_{1,s}$,                  the support of $g^\ell_s $ is contained  in the support of $g_{s}$. So $(g_s)$ and $(g^\ell_s \circ \phi_s^{-1})$ have disjoint supports. The result follows now from (\ref{eq:Expansion_complement}), by using that $(g_s)$ and $(g^\ell_s \circ \phi_s^{-1})$ are symbols of order $0$ and $2\ell$ respectively.
\end{proof}

We can now complete the proof of Theorem \ref{thm-disloc-21}.
Write $\theta_{s,\hbar} = Q_{\hbar} ( \tau _{s})$ and $\si_{s,\hbar} =  U_{s,\hbar} \theta_{s,\hbar} U^*_{s,\hbar}$.
\begin{lemma}\label{lem:1}
$\bigl\| ( T_\hbar ( g_s ) - \op{id} ) \theta_{s,\hbar}^{1/2} \sigma_{s,\hbar}^{1/2} \bigr\|_{tr}  = \bigo((s^{-2} \hbar  )^{\infty})$
\end{lemma}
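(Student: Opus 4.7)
The strategy is to factor out $A := T_\hbar(g_s) - \op{id}$ and control the remaining trace norm by a Hilbert--Schmidt Cauchy--Schwarz inequality, reducing everything to a bound that is immediate from Lemma~\ref{lem:small-scale-cutoff}.

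First I would apply the standard estimate $\|XY\|_{tr} \leq \|X\|_{HS}\|Y\|_{HS}$ with $X = A\,\theta_{s,\hbar}^{1/2}$ and $Y = \sigma_{s,\hbar}^{1/2}$. This gives
$$
\bigl\| A\,\theta_{s,\hbar}^{1/2}\sigma_{s,\hbar}^{1/2}\bigr\|_{tr}^{\,2}
\;\leqslant\; \op{tr}\bigl(A\,\theta_{s,\hbar}\,A^*\bigr)\cdot \op{tr}(\sigma_{s,\hbar}).
$$
Since $\sigma_{s,\hbar}$ is a quantum state its trace equals $1$, so the second factor is harmless.

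For the first factor, note that $g_s$ is real-valued, hence $T_\hbar(g_s)$ is Hermitian and $A = A^*$. By cyclicity of the trace,
$\op{tr}(A\,\theta_{s,\hbar}\,A^*) = \op{tr}\bigl(A\cdot (A\,\theta_{s,\hbar})\bigr)$,
and the operator-norm/trace-norm duality yields
$$
\bigl|\op{tr}\bigl(A\cdot (A\,\theta_{s,\hbar})\bigr)\bigr|
\;\leqslant\; \|A\|_{op}\,\|A\,\theta_{s,\hbar}\|_{tr}
\;\leqslant\; 2\,\|A\,\theta_{s,\hbar}\|_{tr},
$$
where I have used $\|T_\hbar(g_s)\|_{op}\leqslant \|g_s\| \leqslant 1$. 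Now Lemma~\ref{lem:small-scale-cutoff} is exactly the statement that $\|A\,\theta_{s,\hbar}\|_{tr} = \bigo((s^{-2}\hbar)^\infty)$. Taking a square root preserves this property (if a quantity is $\bigo((s^{-2}\hbar)^N)$ for every $N$, then so is its square root), which completes the proof.

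The only non-routine aspect is recognizing the right splitting: one should distribute $\theta_{s,\hbar}^{1/2}\sigma_{s,\hbar}^{1/2}$ symmetrically across the two Hilbert--Schmidt factors rather than trying to commute $A$ past $\sigma_{s,\hbar}^{1/2}$ or to use properties of $\sigma_{s,\hbar}$, since our direct information concerns $\theta_{s,\hbar}$ alone (via Lemma~\ref{lem:small-scale-cutoff}). Once the Cauchy--Schwarz split is performed, the bound $\|A\,\theta_{s,\hbar}\|_{tr} = \bigo((s^{-2}\hbar)^\infty)$ does all the work and no additional Toeplitz calculus is required.
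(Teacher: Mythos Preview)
Your proof is correct and follows essentially the same route as the paper's own argument: the same H\"older/Cauchy--Schwarz split $\|A\theta^{1/2}\sigma^{1/2}\|_{tr}\le \|A\theta^{1/2}\|_{HS}\|\sigma^{1/2}\|_{HS}$, the same reduction of $\op{tr}(A\theta A)$ to $\|A\|_{op}\|A\theta\|_{tr}$, and the same appeal to $\|T_\hbar(g_s)\|_{op}\le 1$ together with Lemma~\ref{lem:small-scale-cutoff}.
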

\begin{proof}
Denoting by $\| \cdot \|_{HS}$ the Hilbert-Schmidt norm and using H\"older inequality for Schatten norms, one has
\begin{xalignat*}{2}
 \bigl\| ( T_\hbar ( g_s ) - \op{id} ) \theta_{s,\hbar}^{1/2} \sigma_{s,\hbar}^{1/2} \bigr\|_{tr} & \leqslant  \bigl\| ( T_\hbar ( g_s ) - \op{id} ) \theta_{s,\hbar}^{1/2} \bigr\|_{HS} \| \sigma_{s,\hbar}^{1/2} \|_{HS} \\
 & =  \Bigl( \op{tr} \bigl(  ( T_\hbar ( g_s ) - \op{id} ) \theta_{s,\hbar} ( T_\hbar ( g_s ) - \op{id}) \bigr) \Bigr)^{1/2}\\
 & \leqslant \Bigl(  \|  ( T_\hbar ( g_s ) - \op{id} ) \theta_{s,\hbar} \|_{tr} \| T_\hbar ( g_s ) - \op{id} \|_{op} \Bigr)^{1/2}
\end{xalignat*}
To conclude, we use that $ \| T_\hbar ( g_s ) \|_{op} \leqslant \| g_{s} \| \leqslant 1$ and Lemma \ref{lem:small-scale-cutoff}.
\end{proof}

\begin{lemma} \label{lem:2}
$\bigl\| T_\hbar ( g_s ) \theta_{s,\hbar}^{1/2} \sigma_{s,\hbar}^{1/2} \bigr\|_{tr}  = \bigo(( s^{-2} \hbar  )^{\infty})$
\end{lemma}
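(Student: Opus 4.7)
The plan is to insert the cutoff $T_\hbar(g_s)$ between $\theta_{s,\hbar}^{1/2}$ and $\sigma_{s,\hbar}^{1/2}$ via the trivial identity $\op{id} = T_\hbar(g_s) + (\op{id} - T_\hbar(g_s))$ and to bound each of the two resulting pieces by a H\"older inequality for Schatten norms. The gain is that one piece will carry the factor $T_\hbar(g_s)$ right next to $\sigma_{s,\hbar}^{1/2}$, which should be small because the support of $g_s$ is displaced from the ``classical support'' of $\sigma_{s,\hbar}$, while the other piece will carry the factor $\theta_{s,\hbar}^{1/2}(\op{id}-T_\hbar(g_s))$, which should be small because $g_s\equiv 1$ on $\op{supp}\tau_s$.

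Before decomposing, I would establish two auxiliary Hilbert-Schmidt estimates. The first, and the only place where the displacement hypothesis really enters, is
\[ \|T_\hbar(g_s)\,\sigma_{s,\hbar}^{1/2}\|_{HS} = \bigo((s^{-2}\hbar)^\infty). \]
Since $\|T_\hbar(g_s)\sigma_{s,\hbar}^{1/2}\|_{HS}^2 = \op{tr}(T_\hbar(g_s)^2\sigma_{s,\hbar}) \leq \|T_\hbar(g_s)\|_{op}\,\|T_\hbar(g_s)\sigma_{s,\hbar}\|_{tr}$, it suffices to show that $\|T_\hbar(g_s)\sigma_{s,\hbar}\|_{tr}$ is small. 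Writing $\sigma_{s,\hbar}=U_{s,\hbar}\theta_{s,\hbar}U_{s,\hbar}^{*}$ and replacing $\theta_{s,\hbar}$ by $T_\hbar(g_s)\theta_{s,\hbar}$ up to trace-norm error $\bigo((s^{-2}\hbar)^\infty)$ via Lemma \ref{lem:small-scale-cutoff}, the main contribution becomes $\bigl(T_\hbar(g_s)U_{s,\hbar}T_\hbar(g_s)U_{s,\hbar}^{*}\bigr)\,U_{s,\hbar}\theta_{s,\hbar}U_{s,\hbar}^{*}$, whose bracketed factor has operator norm $\bigo((s^{-2}\hbar)^\infty)$ by Lemma \ref{lem:product}. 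The second auxiliary estimate is
\[ \|\theta_{s,\hbar}^{1/2}(\op{id}-T_\hbar(g_s))\|_{HS} = \bigo((s^{-2}\hbar)^\infty), \]
which I would obtain by expanding the square as $\op{tr}(\theta_{s,\hbar}) - 2\op{tr}(T_\hbar(g_s)\theta_{s,\hbar}) + \op{tr}(T_\hbar(g_s)^2\theta_{s,\hbar})$ and applying Lemma \ref{lem:small-scale-cutoff} (twice, the second time combined with $\|T_\hbar(g_s)\|_{op}\leq 1$) to see that each of the last two traces equals $1+\bigo((s^{-2}\hbar)^\infty)$, so the alternating sum collapses to the desired order.

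With these in hand, I split
\[ T_\hbar(g_s)\theta_{s,\hbar}^{1/2}\sigma_{s,\hbar}^{1/2} = T_\hbar(g_s)\theta_{s,\hbar}^{1/2}T_\hbar(g_s)\sigma_{s,\hbar}^{1/2} + T_\hbar(g_s)\theta_{s,\hbar}^{1/2}(\op{id}-T_\hbar(g_s))\sigma_{s,\hbar}^{1/2}. \]
For the first term I would use $\|AB\|_{tr}\leq\|A\|_{HS}\|B\|_{HS}$: the factor $\|T_\hbar(g_s)\theta_{s,\hbar}^{1/2}\|_{HS}^2 = \op{tr}(T_\hbar(g_s)^2\theta_{s,\hbar}) \leq 1$ (since $T_\hbar(g_s)^2 \leq \op{id}$), while $\|T_\hbar(g_s)\sigma_{s,\hbar}^{1/2}\|_{HS}$ is $\bigo((s^{-2}\hbar)^\infty)$ by the first auxiliary estimate. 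For the second term I would use the three-factor inequality $\|ABC\|_{tr}\leq\|A\|_{op}\|B\|_{HS}\|C\|_{HS}$ with $\|T_\hbar(g_s)\|_{op}\leq 1$, $\|\sigma_{s,\hbar}^{1/2}\|_{HS} = 1$, and $\|\theta_{s,\hbar}^{1/2}(\op{id}-T_\hbar(g_s))\|_{HS}$ small by the second auxiliary estimate. The main obstacle is the first auxiliary estimate, as it is the only step that genuinely combines Lemmas \ref{lem:small-scale-cutoff} and \ref{lem:product} to translate the geometric displacement hypothesis into a trace-class bound; everything else is a routine Schatten-norm manipulation.
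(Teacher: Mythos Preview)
Your argument is correct. Both auxiliary estimates and the final split go through exactly as you describe; in particular your route to $\|T_\hbar(g_s)\sigma_{s,\hbar}\|_{tr}=\bigo((s^{-2}\hbar)^\infty)$ via Lemma~\ref{lem:small-scale-cutoff} and Lemma~\ref{lem:product} is precisely the mechanism by which the displacement hypothesis enters.

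The paper organizes the H\"older step differently. Rather than inserting $\op{id}=T_\hbar(g_s)+(\op{id}-T_\hbar(g_s))$ between the two square roots, it factors directly as
\[
\bigl\|T_\hbar(g_s)\,\theta_{s,\hbar}^{1/2}\sigma_{s,\hbar}^{1/2}\bigr\|_{tr}
\le \|T_\hbar(g_s)\|_{HS}\,\bigl(\op{tr}(\theta_{s,\hbar}\sigma_{s,\hbar})\bigr)^{1/2},
\]
bounds $\|T_\hbar(g_s)\|_{HS}=\bigo((s^{-2}\hbar)^{-n})$ using the trace--norm estimate~(\ref{eq:tracenorm}), and then shows $\op{tr}(\theta_{s,\hbar}\sigma_{s,\hbar})=\bigo((s^{-2}\hbar)^\infty)$ by the same cutoff-insertion argument you carry out for your first auxiliary estimate. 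So the core use of Lemmas~\ref{lem:small-scale-cutoff} and~\ref{lem:product} is identical; only the bookkeeping differs. Your decomposition has the minor advantage of never invoking~(\ref{eq:tracenorm}) (you only use $\|T_\hbar(g_s)\|_{op}\le 1$), whereas the paper's version is a line shorter since it avoids your second auxiliary estimate --- which, incidentally, is exactly the Hilbert--Schmidt bound established inside the proof of Lemma~\ref{lem:1}.
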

\begin{proof}
First, one has by H\"older inequality
$$\| T_\hbar ( g_s ) \theta_{s,\hbar}^{1/2} \sigma_{s,\hbar}^{1/2} \|_{tr} \leqslant \| T_\hbar ( g_s ) \|_{HS}  \| \theta_{s,\hbar}^{1/2} \sigma_{s,\hbar}^{1/2} \|_{HS} =  \| T_\hbar ( g_s ) \|_{HS} \bigl( \op{tr} (\theta_{s,\hbar} \sigma_{s,\hbar} ) \bigr)^{1/2}$$
Since $\int_M |g_s| d \mu = s^{2n} \int_M |g| \mu$, one deduces from (\ref{eq:tracenorm}) that the trace norm of $T_\hbar ( g_s )$ is a $\bigo (( s^{-2} \hbar  )^{-n})$. The Hilbert-Schmidt norm being dominated by the trace norm, we get
$$  \| T_\hbar ( g_s ) \|_{HS} = \bigo ((  s^{-2} \hbar  )^{-n}).$$
To complete the proof, we show  that $ \op{tr}(\theta_{s,\hbar} \sigma_{s,\hbar} )  =    \bigo (( s^{-2} \hbar  )^{\infty}  )$.
Since $\| \sigma_{s,\hbar} \| _{tr} =  1$, one has by submultiplicativity of the trace norm
\begin{gather*} \bigl\|   \theta_{s,\hbar}  ( T_\hbar ( g_s ) - \op{id} )  \sigma_{s,\hbar} \bigr\|_{tr} \leqslant \bigl\| \theta_{s,\hbar}   ( T_\hbar ( g_s ) - \op{id} )  \|_{tr} =  \bigo ((  s^{-2} \hbar )^{\infty}  )
\end{gather*}
by Lemma \ref{lem:small-scale-cutoff}. Similarly, one shows that
$$  \bigl\|   \theta_{s,\hbar}  T_\hbar ( g_s )  U_{s,\hbar}  ( T_\hbar ( g_s ) - \op{id} ) \theta_{s,\hbar} U_{s,\hbar}^* \bigr\|_{tr} =  \bigo ((  s^{-2} \hbar  )^{\infty}  )  .$$
These two estimates imply that
\begin{xalignat*}{2}
 \op{tr}  ( \theta_{s,\hbar} \sigma_{s,\hbar}  ) &  = \op{tr} \bigl(  \bigl(  \theta_{s,\hbar}  T_{\hbar} ( g _s)  \bigr)  U_{s,\hbar} \bigl( T_{s,\hbar} (g_{s} ) \theta_{s,\hbar}  \bigr) U_{s,\hbar}^* \bigr) + \bigo ((s^{-2}\hbar  )^{\infty}  )\\
& = \op{tr} \bigl( \theta_{s,\hbar} \bigl(  T_{\hbar} ( g _{s})   U_{s,\hbar}  T_{\hbar} (g_s ) U_{s,\hbar}^*  \bigr)  \sigma_{s,\hbar}   \bigr)  + \bigo ((  s^{-2}\hbar  )^{\infty}  ) \\
& = \bigo (( s^{-2} \hbar  )^{\infty}
\end{xalignat*}
by lemma \ref{lem:product} and the fact that $\| \theta_{s,\hbar} \sigma_{s,\hbar} \|_{tr} \leqslant 1$.
\end{proof}
Theorem \ref{thm-disloc-21} follows from Lemmas \ref{lem:1} and \ref{lem:2}.

\section{Miscellaneous proofs}
Here we collect proofs of some results used in the paper.

\subsection{Quantum speed limit}\label{subsec-mis-qsl}

First, let us derive inequality \eqref{eq-qsl-univ} from the work of Uhlmann \cite{Uh}. Given a quantum
Hamiltonian $F_t \in \cL(\Hilb)$, $t \in [0,1]$, and a state $\theta \in \cS(\Hilb)$, write $\theta_t$ for the Schr\"{o}dinger evolution of $\theta$. It has been proved in \cite{Uh} that
\begin{equation}\label{eq-uhlmann}
I:= \int_0^1 \sqrt{\op{trace}(F_t^2\theta_t) - (\op{trace}(F_t\theta_t))^2} dt \geq \arccos(a)\hbar \;,
\end{equation}
where $a= \Phi(\theta,\theta_1)$.
By H\"{o}lder inequality,
$$\op{trace}(F_t^2\theta_t) \leq ||F_t^2||_{op} \cdot ||\theta_t||_{tr} = ||F_t||^2_{op}\;.$$
Thus $\ell_q(F) = \int_0^1 ||F_t||dt \geq I$, and \eqref{eq-qsl-univ} follows.

\medskip

Next, let us mention that while the above argument is based on Uhlmann's inequality \eqref{eq-uhlmann}
whose proof is quite tricky, the quantum speed limit for pure states admits the following elementary
proof (cf. \cite{Uh,AH}). Assume  that $\theta=\xi$ is a pure state. The trajectory $\xi(t)$ under the flow $U(t)$ generated by the Hamiltonian $F_t$ satisfies
$$\dot{\xi}(t)=  -\frac{i}{\hbar}F_t \xi(t)\;.$$
We estimate the length $b$ of this trajectory in the projective space $\mathbb{P}(\Hilb)$
with respect to the Fubini-Study metric
by
$$b = \int_0^1 \sqrt{|\dot{\xi}|^2 - |\langle \xi,\dot{\xi}\rangle|^2} dt \leq \int_0^1 |\dot{\xi}|dt \leq \hbar^{-1}\ell_q(F)\;.$$
On the other hand, the Fubini-Study distance between $\xi$ and $U\xi$ equals
$$c= \arccos (|\langle \xi, U \xi \rangle |) \geq \arccos (a)\;.$$
Since $b \geq c$, we get that $$\ell_q(F) \geq  \arccos (a)\hbar$$ which yields \eqref{eq-qsl-univ} for a pure state.

\medskip
For the sake of completeness, we present an elementary proof of a weaker version of \eqref{eq-qsl-univ} for a mixed state,
\begin{equation}\label{eq-qsl-univ-1}
\ell_q(F) \geq  \frac{1}{2}\arccos(a)\hbar\;.
\end{equation}
Note that for the purposes of the present paper the factor $1/2$ is not essential.
Assume that $\theta \in \cS(\Hilb)$ is a general (mixed) quantum state. Consider
the space $\Hilb'$ of all linear operators $\Hilb \to \Hilb$ equipped with the scalar
product $\langle A,B \rangle' := \text{trace}(AB^*)$. We consider $\sqrt{\theta}$
as a pure quantum state in $\cS(\Hilb')$ (called {\it a purification} of $\theta$).
Observe that for every pair of quantum states $\theta,\sigma \in \cS(\Hilb)$
\begin{equation}\label{eq-fidel-vs-scalar}
|\langle \theta, \sigma \rangle'| = \text{trace}(\sqrt{\theta}\sqrt{\sigma}) \leq \Phi(\theta,\sigma)\;. \end{equation}
Furthermore, since the evolution $\sqrt{\theta_t}$ of the purification of $\theta$ is given by
$$\sqrt{\theta_t}= U(t) \sqrt{\theta} U(t)^{-1}\;,$$ the Hamiltonian $F'_t \in \cL(\Hilb')$ of this evolution is given
by $$F'_t(A) = [F_t,A]\;\; \forall A \in \Hilb'\;.$$
Looking at the action of $F'_t$ in the eigenbasis of $F_t$, one readily checks that
\begin{equation}\label{eq-norm-purif}
\|F'_t\|_{op} \leq 2\|F_t\|_{op}\;.
\end{equation}
It remains to notice that if $F_t$ $a$-dislocates a mixed state $\theta \in \cS(\Hilb)$, then
by \eqref{eq-fidel-vs-scalar} $F'_t$ $a$-dislocates its purification in $\cS(\Hilb')$. Thus inequality
\eqref{eq-qsl-univ-1} follows from \eqref{eq-norm-purif} and the quantum speed limit for pure states.

\medskip

We conclude this section with a couple of remarks illustrating the notion of dislocation.

\medskip
\noindent
\begin{rem} \label{rem-disloc-1}{\rm $ $
\begin{itemize}
\item[{(i)}] If $d \geq 2$, every pure state $\xi$ can be $0$-dislocated with
$\ell_q = \frac{\pi}{2}\hbar$, i.e., the universal speed limit with $a=0$ is sharp for pure states. Indeed, take any unit vector $\eta$ orthogonal to $\xi$, and dislocate $\xi$ by a Hamiltonian which equals
\[\frac{\pi }{2}\hbar \cdot   \left( \begin{array}{cc}
0 & -i  \\
i & 0  \\
\end{array} \right)\]
in the plane generated by $\xi,\eta$ and vanishes in its orthogonal complement. The same argument shows
that every  every mixed state can be $0$-dislocated with  $\ell_q = \frac{\pi}{2}\hbar$ provided $\dim \text{Image}(\theta) \leq d/2$.
\item[{(ii)}] Not every state can be $a$-dislocated with $a <1$: take $\theta=d^{-1}\id$,
where $d=\dim \Hilb$.
\item[{(iii)}]Assume that $\theta$ can be $a$-dislocated, i.e. $\Phi(\theta, U\theta U^{-1}) \leq a$ for some unitary transformation $U$. We claim that then it can be $a$-dislocated with $\ell_q \leq \pi \cdot \hbar$. Indeed, fix an orthonormal
basis in $\Hilb$ and write $U$ as the diagonal matrix with the entries $\exp(i c_j)$, where
$c_j \in [-\pi,\pi)$. The Hamiltonian $F$ given by the diagonal matrix with the entries
$-\hbar c_j$ generates $U$, and the claim follows.
\end{itemize}
}
\end{rem}

\subsection{Estimate of trace norm} \label{sec:estimate-trace-norm}

Here we prove Equation (\ref{eq:tracenorm}).
Without loss of generality, we may assume that $f$ is real valued so that $T_\hbar (f)$ is self-adjoint. Since the map sending $f$ to $T_{\hbar} (f)$ is positive, we have that $ | T_{\hbar} (f) | \leqslant T_{\hbar} ( |f|)$, so that
\begin{gather} \label{eq:est_tr_1}
\| T_{\hbar} (f) \| _{\op{tr}} \leqslant \op{trace} (T_{\hbar} (|f|)).
\end{gather}
Applying (\ref{eq:defToep_eq}) to $|f|$ and taking the trace, we get
\begin{xalignat}{2} \label{eq:est_tr_2}
\begin{split}
 \op{trace}( T_{\hbar}( |f| )) & = \int_M |f|(x) \op{trace} (S_{\hbar,x}) d \mu(x)
\\  &  \leqslant \| f \|_{L_1} \sup_{x \in M}  \| e_x \| ^2 \\
&  \leqslant (2 \pi \hbar)^{-n}   \| f \|_{L_1}( 1  + \bigo ( \hbar) )
\end{split}
\end{xalignat}
by using that $\op{trace} (S_x) = \| e_x \|^2$ and estimate (\ref{eq:estime_diag}). Collecting estimates (\ref{eq:est_tr_1}) and (\ref{eq:est_tr_2}), we obtain the second inequality of (\ref{eq:tracenorm}).

We have for any self-adjoint operator $S$ of $\Hilb_{\hbar}$, for any $x$, $|\langle S e_x, e_x\rangle | \leqslant \langle |S| e_x , e_x \rangle$. By (\ref{eq:defToep_eq}), $\op{id} = \int_M S_{\hbar, x} d\mu (x)$. Composing by $|S|$ and taking the trace, we obtain
$$ \| S \|_{\op{tr}} = \int_M \langle |S| e_x , e_x \rangle d \mu (x)$$
This proves that
$$  \int_M |\langle S e_x , e_x \rangle|  d \mu (x) \leqslant \| S \|_{\op{tr}}. $$
 Applying this to $S  = T_{\hbar} (f)$, we get
$$ \| \mathcal{B}_\hbar(f) R_\hbar \|_{L_1} \leqslant  \| T_{\hbar} (f) \| _{\op{tr}}$$ where $\mathcal{B}_\hbar$ is the Berezin transform and $R_\hbar$ the Rawnsley function. We deduce the first inequality of (\ref{eq:tracenorm}) by using (\ref{eq:estime_diag}) and (\ref{eq:Berezin_transform}).

\medskip
\noindent
{\bf Acknowledgments.} We thank Yael Karshon, Yohann Le Floch, St\'ephane Nonnenmacher, Dominique Spehner and Alejandro Uribe for
useful discussions. We are grateful to the anonymous referee for numerous helpful comments and critical remarks.

\bigskip

\noindent
\begin{tabular}{ll}
Laurent Charles & Leonid Polterovich \\
 UMR 7586, Institut de Math\'{e}matiques  & Faculty of Exact Sciences \\
de Jussieu-Paris Rive Gauche &  School of Mathematical Sciences\\
Sorbonne Universit\'{e}s, UPMC Univ Paris 06 & Tel Aviv University \\
F-75005, Paris, France & 69978 Tel Aviv, Israel \\
laurent.charles@imj-prg.fr & polterov@post.tau.ac.il\\
\end{tabular}

\end{document}